\documentclass[11pt]{article}

\usepackage{fullpage}
\usepackage{amsmath}
\usepackage{amsfonts}
\usepackage{amssymb}
\usepackage{amsthm}
\usepackage{mathtools}
\usepackage{booktabs}
\usepackage{graphicx}
\usepackage{natbib}
\usepackage{enumitem}
\usepackage{xurl} 
\usepackage[colorlinks=true,
            linkcolor=red,
            anchorcolor=blue,
            citecolor=blue]{hyperref}

\allowdisplaybreaks

\newtheoremstyle{smileplain}
  {\topsep}{\topsep}{\normalfont}{}{\bfseries}{.}{.5em}{}
\theoremstyle{smileplain}
\newtheorem{theorem}{Theorem}
\newtheorem{lemma}[theorem]{Lemma}
\newtheorem{proposition}[theorem]{Proposition}
\newtheorem{corollary}[theorem]{Corollary}

\newtheorem{remark}[theorem]{Remark}
\numberwithin{theorem}{section}

\numberwithin{equation}{section}
\newcommand{\E}{\mathbb E}
\renewcommand{\P}{\mathbb P}
\newcommand{\R}{\mathbb R}
\newcommand{\N}{\mathbb N}
\newcommand{\cL}{\mathcal L}
\newcommand{\1}{\mathbf 1}
\DeclareMathOperator{\KL}{KL}
\DeclareMathOperator{\kl}{kl}
\DeclareMathOperator{\Ent}{Ent}

\DeclareMathOperator*{\argmax}{arg\,max}
\DeclarePairedDelimiter{\ceil}{\lceil}{\rceil}

\hypersetup{
  pdftitle={Gap Entropy and Almost Instance-Wise Optimal Best-Arm Identification},
  pdfauthor={TODO: author names}
}

\title{\huge Gap Entropy and Almost Instance-Wise\\Optimal Best-Arm Identification}
\author{%
  Jiarui Yao%
  \thanks{Alphabetical order.}
  \thanks{UIUC; e-mail: \texttt{jiarui14@illinois.edu}.}
  \qquad
  Jiaxi Zhao%
  \footnotemark[1]
  \thanks{NUS; e-mail: \texttt{jiaxi.zhao@u.nus.edu}}%
  \qquad
  Xiangxin Zhou%
  \footnotemark[1]
  \thanks{Tencent Hunyuan; e-mail: \texttt{zhouxiangxin1998@gmail.com}}%
}
\date{}

\begin{document}

\maketitle

\begin{abstract}
In the best-arm identification problem, we are given $n$ stochastic arms with unknown means and wish to identify the arm with the largest mean with probability at least $1-\delta$, using as few samples as possible. We consider independent Gaussian rewards with unit variance and means in $[0,1]$. \citet{chen2016open} conjectured that the instance-wise sample complexity of this problem is characterized by the gap entropy, up to an additive term arising from the two-arm problem.

In this paper, we resolve their gap-entropy and almost instance-wise optimality conjectures. For an instance $I$, let $\Delta_{[i]}$ be the gap between the largest and the $i$-th largest mean, let $H(I)=\sum_{i=2}^{n}\Delta_{[i]}^{-2}$, and let $\Ent(I)$ denote the entropy of the normalized complexities of its dyadic gap groups. For every $0<\delta<0.1$, we show that the order-oblivious instance-wise lower bound is
\[
\Theta \Bigl(H(I)\bigl[\log(1/\delta)+\Ent(I)\bigr]\Bigr).
\]
We also give a single $\delta$-correct algorithm with expected sample complexity
\[
O \Bigl(
H(I)\bigl[\log(1/\delta)+\Ent(I)\bigr]
+D\log\bigl(e+\log(e+D)\bigr)
\Bigr),
\qquad D=\Delta_{[2]}^{-2},
\]
without prior knowledge of the gaps. Our lower bound removes the dyadic-gap and monotonicity restrictions of previous work, and our upper bound removes the additional polylogarithmic factor multiplying the two-arm term. Thus, a single algorithm attains the instance-wise lower bound up to an additive two-arm term. The main theorems have been formalized and proved in Lean~4.
\end{abstract}

\medskip

\section{Introduction}
In fixed-confidence best-arm identification, one procedure must identify the arm of largest mean with probability at least $1-\delta$ on every admissible instance, deciding where to sample and when to stop \citep{even2006action,kaufmann2016complexity,garivier2016optimal}. Its difficulty is governed by the separations between means and by the cost of discovering which separations matter \citep{simchowitz2017simulator}.

Let $i_{[1]}$ denote the best-arm label and $\Delta_i$ the gap of arm $i$. The instance complexity is $H=\sum_{i\ne i_{[1]}}\Delta_i^{-2}$, and $H\log(1/\delta)$ captures the confidence requirement \citep{mannor2004sample} but not the cost of discovering an unknown gap scale: even with two arms, an iterated-logarithm obstruction arises \citep{farrell1964}. The order-oblivious instance-wise lower bound of \citet{chen2016open} separates these effects: at each instance it takes the infimum over $\delta$-correct algorithms after averaging over all label permutations. They conjectured that it is characterized by the Shannon entropy of the normalized complexities of the gap groups, and that a single algorithm attains it with only the additive two-arm term.

An algorithm chosen separately for each target instance may have parameters adapted to that instance while remaining correct on every input. In contrast, one algorithm satisfying the bound on all instances must work without these parameters; likewise, a lower bound on a sub-instance does not yield one on the original instance for an arbitrary algorithm. \citet{chen2017towards} obtain the entropy main term with an additional polylogarithmic factor on the two-arm term, and a lower bound on a suitable sub-instance that transfers to the same instance under monotonicity.

We prove both conjectures in the Gaussian model stated by \citet{chen2016open}, for the measurable algorithms of Section~\ref{sec:model} and with the explicit positive truncation of the two-arm term fixed there; Appendix~\ref{bridge:section} shows that the instance-wise lower bound is unchanged when algorithms may use arbitrary standard Borel internal randomness, and that algorithms specified by history-dependent probability kernels satisfy the lower bound. The proof has two main components.

Concurrent work by \citet{aronow2026positive} also resolves both conjectures, with substantial overlap in the lower-bound argument and different upper-bound constructions; Section~\ref{sec:concurrent-work} compares the results and methods.

\paragraph{A lower bound using counts on the original instance.}
After symmetrization, designate a suboptimal arm and raise it to the best mean; a simultaneous change at one other arm makes the resulting distribution with two best arms independent of the designated gap. Both changes are charged, through a latent-variable coupling, to expected numbers of samples on the original instance. Choosing the largest gap among representatives with small normalized expected numbers of samples gives a common comparison distribution. Selecting disjoint stopping events under that distribution bounds the number of such groups and gives a bound on a sum of exponentials; nonnegativity of relative entropy then gives the complexity-weighted entropy term. The argument assumes no monotonicity and uses no bound on the expected number of samples on another instance.

\paragraph{A single algorithm with an additive two-arm term.}
The algorithm performs successive elimination runs with geometrically increasing bounds on the total weight and number of samples, reuses one reference estimate within a scale, and retains at least the empirical upper half after every elimination. Its ordinary error allocation is proportional to the weight assigned to a call; an additional penalty, summable in the run index, applies only to calls with at most seven active arms, so its sampling cost is charged to $D$ rather than to $H$. Correctness is proved by grouping outputs according to fixed nested sets of arms near the original best at the terminal scale: small budgets force many underestimation events, and large budgets are controlled by a bound based on the ordering of empirical means for a call that eliminates all but at most one arm in such a set; the grouping is summable uniformly in the number of arms and gap groups.

\paragraph{Formal verification.}
The definitions of Section~\ref{sec:model}, Theorems~\ref{thm:entropy} and~\ref{thm:universal} with explicit constants, and the results on the algorithm model of Appendix~\ref{bridge:section} have been formalized and verified in Lean~4 with Mathlib\footnote{https://github.com/leanprover-community/mathlib4}; Appendix~\ref{verif:section} states precisely what is verified and how the formal proofs differ from the presentation here. The full Lean 4 code can be found in \url{https://github.com/zhouxiangxin1998/GapEntropy}.

\section{Problem formulation and main results}\label{sec:model}
\subsection{Observation model and statistical objective}
An instance is a vector $I=(\mu_1,\ldots,\mu_n)\in[0,1]^n$, where $n\ge2$ and the maximum is unique. Write $[n]=\{1,\ldots,n\}$. Arm $i\in[n]$ produces independent $\mathcal N(\mu_i,1)$ observations. Let $i_{[r]}$ be the label of the arm with the $r$th largest mean, breaking ties between suboptimal arms by label, and write $\mu_{[r]}=\mu_{i_{[r]}}$. Thus
\[
 i_{[1]}=\argmax_{i\in[n]}\mu_i,\qquad
 \Delta_i=\mu_{[1]}-\mu_i,\qquad
 \Delta_{[r]}=\Delta_{i_{[r]}}=\mu_{[1]}-\mu_{[r]}.
\]
The ordered gaps satisfy $0=\Delta_{[1]}<\Delta_{[2]}\le\cdots\le\Delta_{[n]}\le1$. An \emph{algorithm} has an internal random seed, a sequence of independent standard Gaussian variables independent of all rewards. At each time it applies a measurable rule to its finite observed history (the labels sampled so far and their observations) and its seed, and either requests one observation from an arm or returns a label; a return is absorbing. Let $\tau$ be the first return time, with $\tau=\infty$ if no label is returned, and $\widehat{\imath}$ the returned label. Sample complexity means the unconditional expectation $T_A(I)=\E_I\tau$, with no computational-time requirement. An algorithm is $\delta$-correct if $\P_I\bigl(\tau<\infty,\widehat{\imath}=i_{[1]}\bigr)\ge1-\delta$ for every admissible instance at every arm count $n\ge2$. An algorithm on singleton inputs simply returns the only arm. Appendix~\ref{bridge:section} shows that the instance-wise lower bound below is unchanged if the internal random seed is drawn from an arbitrary standard Borel probability space, and that algorithms specified by history-dependent probability kernels satisfy the lower bound; the upper-bound algorithms use no internal randomness.

Time counts observations: until return, the algorithm requests one observation at each integer time, so a nonreturning path has an infinite observation stream. Let $A_t$ be the label sampled at time $t$. Conditional on the information available before that sample, the observation has law $\mathcal N(\mu_{A_t},1)$. Define $N_i(t)=\sum_{s=1}^{t\wedge\tau}\1\{A_s=i\}$ and let $N_i(\tau)$ denote the total count, allowing $+\infty$. Then $\sum_iN_i(t)=t\wedge\tau$ and $T_A(I)=\sum_i\E_I N_i(\tau)$ by monotone convergence. Thus the complexity charges every observation, including observations on erroneous trajectories. Correctness alone does not imply finite expectation; the upper bounds establish both properties separately.

\subsection{The instance-wise lower bound and the gap groups}

For a permutation $\pi$ of the labels, let $\pi I$ denote the instance obtained by moving arm $i$ to label $\pi(i)$. The order-oblivious instance-wise lower bound of \citet[Definition~3.1]{chen2016open} is
\begin{equation}\label{eq:benchmark}
 \cL(I,\delta)=\inf_{A:\ A\text{ is }\delta\text{-correct}}
 \frac1{n!}\sum_{\pi\in\mathfrak S_n}T_A(\pi I).
\end{equation}
The infimum is taken separately at each $I$. Algorithms in this infimum may have different fixed parameters chosen for different target instances, but each must remain correct on every admissible input. The single algorithm below takes only the arms and $\delta$ as inputs.

For integers $k\ge0$, define
\begin{equation}\label{eq:gap-entropy}
 G_k=\bigl\{i\ne i_{[1]}:2^{-k}\le\Delta_i<2^{-k+1}\bigr\},\quad
 H_k=\sum_{i\in G_k}\Delta_i^{-2},\quad
 H=\sum_kH_k,\quad p_k=H_k/H.
\end{equation}
The gap $1$ belongs to $G_0$. Zero-mass groups are omitted from entropy sums, and $\Ent(I)=\sum_{k:p_k>0}p_k\log(1/p_k)$. For any discrete distribution $q$, $\Ent(q)$ denotes its Shannon entropy, with $0\log(1/0)=0$; for discrete random variables, $\Ent$ denotes the entropy of their law. Throughout,
\begin{equation}\label{eq:notation-main}
 D=\Delta_{[2]}^{-2},\qquad L=\log(1/\delta),\qquad
 \ell(D)=\log\bigl(e+\log(e+D)\bigr).
\end{equation}
When the instance is fixed, we suppress its argument in $H=H(I)$, $D=D(I)$, and the gap-group quantities. All unmarked logarithms are natural. Constants in $O$, $\Omega$, and $\Theta$ are universal, independent of $n$, $\delta$, and the instance. We write $\KL$ for relative entropy and $\kl$ for its Bernoulli specialization.

Table~\ref{tab:notation} summarizes the notation, with $n_k=|G_k|$ and $K=\max\{k:H_k>0\}$.
\begin{table}[t]
\caption{Principal instance quantities and their roles.}
\label{tab:notation}
\begin{center}
\begin{tabular}{@{}p{0.23\linewidth}p{0.73\linewidth}@{}}
\toprule
Notation & Meaning \\
\midrule
$i_{[r]}$, $\mu_{[r]}$ & Label and mean of the arm with the $r$th largest mean; $i_{[1]}$ is the best-arm label. \\
$\Delta_i$, $\Delta_{[r]}$, $D$ & Gap of label $i$, gap at rank $r$, and inverse squared smallest positive gap. \\
$H_k$, $H$ & Total complexity of group $k$ and complexity of the instance. \\
$p_k$, $\Ent(I)$ & Normalized group complexities and their Shannon entropy. \\
$n_k$, $K$ & Number of suboptimal arms in group $k$ and the largest index of a nonempty group. \\
$L$, $\ell(D)$ & Logarithmic confidence term and truncated iterated logarithm. \\
\bottomrule
\end{tabular}
\end{center}
\end{table}

Gap entropy weights each group by its total complexity rather than its number of arms. If $b$ groups are nonempty, then $0\le\Ent(I)\le\log b$, with the upper equality when all $H_k$ are equal. A single nonempty group has zero entropy. For any integer $K\ge1$, take best mean $1$ and $n_k=4^{K-k}$ arms of gap $2^{-k}$ for each $1\le k\le K$. Then $H_k=4^K$, $H=K4^K$, and $\Ent(I)=\log K$, despite the very different group sizes. This example isolates the cost of exploring several groups with comparable total complexities. These quantities describe the instance and are not supplied to the single algorithm.

\subsection{Main results}

\begin{theorem}[Resolution of the gap-entropy conjecture]\label{thm:entropy}
For every instance in the model above and every $0<\delta<0.1$,
\begin{equation}\label{eq:main-entropy}
 \cL(I,\delta)=\Theta\!\Bigl(H(I)\bigl[\log(1/\delta)+\Ent(I)\bigr]\Bigr).
\end{equation}
\end{theorem}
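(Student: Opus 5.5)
The proof splits into an upper bound and a lower bound on $\cL(I,\delta)$, and the two sides have very different difficulty.

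\textbf{Upper bound.} Since the infimum in \eqref{eq:benchmark} is taken separately at each $I$, we may use an algorithm tuned to the target instance, provided it remains $\delta$-correct on all inputs. The natural candidate is the paper's single algorithm, which attains $O(H[L+\Ent]+D\ell(D))$, so it remains to absorb the additive two-arm term.
\begin{itemize}
\item For the target instance we know $\Delta_{[2]}$, hence the scale of the two-arm problem.
\item We therefore run a variant of the algorithm whose search over scales starts at the known scale $2^{-\lfloor\log_2(1/\Delta_{[2]})\rfloor}$. It is given an extra confidence budget that shrinks geometrically outside this scale, which keeps it $\delta$-correct on every input.
\item On $\pi I$ this variant costs $O(D L)$ for the two-arm part, instead of $D\ell(D)$.
\item Because $D\le H$ and $L\ge\log 10$, this part is $O(HL)$.
\end{itemize}
Averaging over permutations leaves the bound unchanged, since the algorithm is symmetric in the labels.

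\textbf{Lower bound, the $HL$ term.} This is the classical change-of-measure argument. Fix a suboptimal arm $i$. Raise its mean to $\mu_{[1]}+\epsilon$, which moves at most $O(\Delta_i)$ in mean (clipping inside $[0,1]$ is handled by lowering the best arm instead when necessary). The transportation inequality then gives
\[
\E N_i\cdot \Delta_i^2/2\ \ge\ \kl(\delta,1-\delta)\ \ge\ c\,L .
\]
Summing over $i$ yields $\Omega(HL)$ for every $\delta$-correct algorithm on every instance, and in particular after averaging over $\pi$.

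\textbf{Lower bound, the $H\,\Ent$ term.} I expect this to be the main obstacle, and I would follow the strategy sketched in the introduction.
\begin{enumerate}
\item Symmetrize: averaging over $\pi$ makes the expected counts within a group exchangeable. So for each nonempty group $k$ we may speak of a representative arm with normalized expected count $s_k=\Delta^2\E N/\!\cdot$. It suffices to show $\sum_k H_k s_k\ge c\,H\,\Ent(I)-O(HL)$.
\item Call a group \emph{cheap} if $s_k\le c\log(1/p_k)$.
\item For each cheap group, raise its representative to the best mean and simultaneously perturb one other fixed arm. This produces an instance with two tied top arms, and it can be arranged that this instance is the same distribution $Q$ for all cheap groups, by taking the largest gap among the cheap representatives as the common comparison.
\item Under $Q$ the events $E_k=\{\widehat\imath=\text{rep}_k\}$ are disjoint.
\item Each $E_k$ has $\P_{I}$-probability at least $1-\delta$ on the original instance, because after symmetrization the representative of group $k$ in $I$ plays the role of the best arm of its modified instance. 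The latent coupling charges $\KL$ to counts on $I$ only, which gives
\[
Q(E_k)\ge \exp\bigl(-O(s_k)-O(1)\bigr).
\]
\item Disjointness gives $\sum_k e^{-O(s_k)}\le O(1)$.
\item Combining this with Gibbs' inequality, $\sum p_k\log(1/p_k)\le \sum p_k\, s'_k+\log\sum e^{-s'_k}$, bounds the mass of cheap groups, which forces $\sum_k p_k s_k\ge c\,\Ent-O(1)$.
\end{enumerate}

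The delicate points are the following. First, the simultaneous change must make $Q$ independent of the chosen group; this is why the perturbation is taken at the largest cheap gap. Second, the KL charged for the second arm must be bounded by counts on $I$ without assuming monotonicity. I would handle the second point with a mixture (latent-variable) coupling over which arm is designated, using convexity of KL so that everything reduces to expected counts on the original instance.

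Finally, combining $\Omega(HL)$ and $\Omega(H\Ent)$ and using $\max\ge$ half the sum gives \eqref{eq:main-entropy}.
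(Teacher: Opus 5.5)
\textbf{The disjoint events in the entropy lower bound.} Your outline for the $H\,\Ent(I)$ term has the paper's architecture: symmetrize, raise a designated arm to $\mu_{[1]}$ while moving one more arm so that one tied law serves every group, charge both changes to counts on $I$ through a latent coupling, then apply Gibbs. However, steps 4--5, which produce the entropy, fail.

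For the comparison law to be common, the designated arm must sit at one fixed label in every source experiment (label $1$ in the paper). So the returned label cannot distinguish groups. Moreover, on any labeling of $I$ a suboptimal representative is returned with probability at most $\delta$, not at least $1-\delta$. Indeed, once two groups are cheap, pairwise disjoint events each of probability at least $1-\delta>1/2$ cannot exist.

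The event that is likely under every source law $P_d$ and unlikely under the tied law $Q_a$ is $\{\tau<\infty,\widehat{\imath}\ne1\}$. Its bound $Q_a(\cdot)\le\delta$ already needs the $\varepsilon$-perturbation and finite-horizon limit of Lemma~\ref{lower:null}, since a tied instance is not admissible. This event is the same for every $d$.

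The missing idea is to separate scales by the number of samples of the designated label. The paper uses $E_d=\{\tau<\infty,\ \widehat{\imath}\ne1,\ c_0d^{-2}\le N_1\le16t_d\}$, with $t_d$ the expected count of a gap-$d$ arm and $c_0=1/100$. Showing $P_d(E_d)>1/2$ needs two steps:
\begin{itemize}
\item Markov's inequality handles the upper endpoint.
\item A second, censored change of measure from $P_d$ to $Q_d$ (KL at most $c_0/2$) handles the lower endpoint.
\end{itemize}
These windows still overlap over a logarithmic range. A greedy selection therefore gives only $N(x)/(\log_4(1600x)+2)$ disjoint events, where $N(x)$ is the number of groups with $\alpha_k\le x$. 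This yields $N(x)\le4\delta[\log_4(1600x)+2]e^{2x}$ and, after integrating, $\sum_ke^{-4\alpha_k}<10\delta$.

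\textbf{Other steps that need repair.} Your estimate $Q(E_k)\ge\exp(-O(s_k)-O(1))$ omits the KL charge $(a-d)^2t_a\le\alpha_a$ of the second arm. With group-dependent thresholds $s_k\le c\log(1/p_k)$, the largest-gap cheap representative can have $s_a$ of order $\log(1/p_a)$, which is unbounded. The paper therefore uses one level $x$ for all selected groups, including $a$, and integrates over $x$. It also takes a minimizing exact gap in each group, since counts agree only for equal gaps.

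For the upper bound, starting the scale search at tolerance about $\Delta_{[2]}$ costs about $nD$ on the first call. This is far more than $H$ when many arms have large gaps. What must actually be removed are the unknown-scale penalties $(j+1)^{-2}$ and $(k+1)^{-2}$. But the single algorithm's correctness proof uses exactly these factors to make its union bounds over runs and scales summable (the charges $\sum_j7\eta^{25}(j+1)^{-50}$ and $\sum_{j,k}\beta_{j,k}$). So $\delta$-correctness of your variant is unproven. The paper avoids this with a separate target-tuned scheme:
\begin{itemize}
\item confidence proportional to $q_k=W_k/W$, computed from the target's group sizes;
\item size caps that bound the cost deterministically on every input;
\item restarts with $\varepsilon_t=\delta2^{-(t+2)}$;
\item an interleaved $\delta/2$-correct fallback.
\end{itemize}

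Finally, in your $HL$ argument with $\mu_{[1]}=1$, lowering the best arm below arm $i$ charges $N_{i_{[1]}}$ rather than $N_i$. Instead, raise arm $i$ to $\mu_{[1]}$ and lower the best arm by a vanishing $\varepsilon$.
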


\begin{theorem}[A single almost instance-wise optimal algorithm]\label{thm:universal}
For every $0<\delta<0.1$ there is one $\delta$-correct algorithm whose expected stopping time is finite on every admissible instance and which satisfies
\begin{equation}\label{eq:main-universal}
 T_A(I)=O\!\Bigl(H(I)\bigl[\log(1/\delta)+\Ent(I)\bigr]+D\ell(D)\Bigr).
\end{equation}
It receives no estimate of any gap, $H$, or $\Ent(I)$.
\end{theorem}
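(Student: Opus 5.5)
The plan is to build the algorithm from successive-elimination runs at geometrically increasing scales, and then prove correctness and the sample bound separately. We work with scales $r=1,2,\ldots$. At scale $r$ the algorithm estimates a reference value, namely an empirical near-best mean, to accuracy $\epsilon_r=2^{-r}$, using $O(\epsilon_r^{-2}\log(\cdot))$ samples. This estimate is reused throughout the scale. The algorithm then runs an elimination call on the active set with a total sample budget $B_r$ that grows geometrically in the run index.

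Within a call, the error allocation given to an arm (or group) is proportional to the weight it receives in the budget. This choice makes the $\log(1/\delta)$ part scale as $H\log(1/\delta)$, and makes the allocation cost, of the form $\sum_k H_k\log(H/H_k)$, equal to $H\,\Ent(I)$. Each elimination step keeps at least the empirical upper half of the active set, so the number of rounds inside a call is logarithmic in the active count. Calls whose active set has at most seven arms receive an extra confidence penalty of order $\log r$ (summable in $r$). Since such calls only occur once the scale reaches $\Delta_{[2]}$, their cost is $O(D\log\log D)=O(D\ell(D))$ and never multiplies $H$.

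The argument then proceeds in three steps.

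\textbf{(i) Correctness.} Fix the terminal scale and the nested sets of arms near $i_{[1]}$. An error means the best arm is eliminated. We group the error events by which nested set the output falls in.
\begin{itemize}
\item If the budget was small, an error requires many independent underestimation events of the best arm. Their probability decays geometrically.
\item If the budget was large, an error requires a call that removes all but at most one arm of such a set. We bound this through the ordering of the empirical means, where each pairwise comparison has Gaussian tail $\exp(-\Omega(B\Delta^2))$.
\end{itemize}
Summing over runs uses $\sum_r r^{-2}$ together with the proportional allocations. This gives a total error of at most $\delta$, uniformly in $n$ and the number of groups.

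\textbf{(ii) Expected samples.} Let $r^\ast$ be the first scale at which the budget exceeds $C\,(H(L+\Ent)+D\ell(D))$. Up to $r^\ast$, the geometric growth makes the total cost a constant multiple of the final term. Beyond $r^\ast$, each further scale fails with probability at most a constant $<1/2$ while its cost only doubles. The tail sum therefore converges, which also gives $\E\tau<\infty$ on every instance.

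\textbf{(iii) The per-scale bound.} Group $G_k$ is eliminated once $\epsilon_r\lesssim 2^{-k}$. With per-arm confidence $\delta p_k$, the samples it consumes are $O(H_k(L+\log(1/p_k)))$. Summing over $k$ gives $H(L+\Ent(I))$.

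The main obstacle is step (i). Reusing one reference estimate per scale correlates the elimination decisions, and the union bound must not pay a $\log n$ or a $\log(\text{number of groups})$ factor. For this step I would first try the nested-set grouping argument combined with a peeling argument over budgets. If that fails, the fallback is a maximal inequality for the empirical-mean ordering, in the style of a law of the iterated logarithm, applied only in the runs with at most seven active arms.
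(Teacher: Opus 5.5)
Your outline follows the paper's architecture: doubling run budgets, one reused reference per scale, upper-half retention, a $(j+1)^{-2}$ penalty only for calls with at most seven arms, grouping by nested near-optimal sets, and a geometric tail over runs. But the step you call the main obstacle is not supplied, and the mechanisms you sketch for it would not work.

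\textbf{Correctness.} The allocation $\alpha_t\propto w_t/M_j$ gives every run a constant error budget $\gamma$, so correctness needs decay in the run index for calls with at least eight arms.

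Your small-budget case invokes ``many independent underestimation events of the best arm'', but within a run the best arm is removed at most once. The mechanism that works is different. Fix $B=\{i:\Delta_i<d_*/8\}$ from a deterministic nested family and put $h=\sum_{i\notin B}\Delta_i^{-2}$. If $M\le h/192$, outside arms removed at tolerances $d_t\le8\Delta_i$, plus a possible outside output, account for complexity at most $96M$. Hence outside arms of total complexity at least $h/2$ were removed while $\Delta_i<d_t/8$. Under only the one-sided bound $z\le\mu_{[1]}+d/16$, each such removal forces a fresh underestimation $x_i-\mu_i<-5d_t/16$.

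Turning this into the tail $(2ep)^{h/(64M)}$ requires dominating these events by independent Bernoulli$(p)$ variables, across arms and under adaptive active sets, scales and sample sizes. This is the Poisson-clock coupling of Lemma~\ref{sound:domination}, whose total clock time is at most $2\sum_t(\alpha_t/128)^{25}\le p$. The same independence multiplies the tail by $A_r=rp^{\max(1,r-1)}$, the bound for underestimating all but at most one arm of $B$. That factor, not a peeling over budgets, makes the union over the nested family finite uniformly in $n$ and the number of groups.

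For large budgets no tail $\exp(-\Omega(B\Delta^2))$ is available. A call samples each arm only $\lceil512d^{-2}\log(128/\alpha)\rceil$ times, logarithmically in the budget, and a union bound over pairwise comparisons would reintroduce factors in $s$. Lemma~\ref{sound:collapse} instead takes $g^2=s/(8h)$. Then at most $s/8$ outside arms have gap below $g$, so keeping the upper half forces at least $s/4$ far arms to overestimate. This gives a per-call bound $3\rho^{\max(25,2w/h)}$ with $\rho=\alpha/128\le\eta w/M$. Summed against the pathwise weight bound it yields $3\eta^4(h/M)^3$, which is combined with $A_r$ via $\min(u,v)\le\sqrt{uv}$. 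An LIL-type fallback restricted to runs with at most seven arms does not reach these calls.

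\textbf{Cost.} Without knowing $p_k$ the algorithm cannot use per-arm confidence $\delta p_k$. With $\alpha\propto w/M_j$, a call at scale $k$ has weight about $4^k$ times the number of all surviving arms, including finer groups. So the relevant masses are $W_k=4^k\bigl(1+\sum_{t\ge k}n_t\bigr)$, not $H_k$. You need:
\begin{itemize}
\item a deterministic envelope $w_{k,r}\le4W_k(3/4)^r$ on a good event;
\item the smoothing bound $\Ent(q)\le\frac{32}{3}\Ent(I)+C_0$ for $q_k=W_k/\sum_tW_t$;
\item absorption of the resulting $H\log(M_j/H)$ into $h_j$.
\end{itemize}

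Calls with at most seven arms do not occur only near tolerance $\Delta_{[2]}$ (take $n\le7$). Their surcharge is $O\bigl(D\log(j+1)\bigr)$ because their total input size per scale is at most $7+4+2$ and $4^K\le4D$. The decisive run index is $j\approx\log_2(\Psi/L)$, so this is not $O(D\log\log D)$ by itself; you need $D\log(j+1)\le D\bigl(C+\ell(D)\bigr)+h_j$.

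Finally, indexing the tolerance $2^{-r}$ and a doubling budget by the same $r$ is inconsistent, since the reference alone then costs about $4^r$. The tolerance must be refined inside each run, independently of the budget index, with reference failure probabilities summable over both indices.
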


Together these theorems imply $T_A(I)=O\bigl(\cL(I,\delta)+D\ell(D)\bigr)$, establishing Conjectures~3.5 and~3.2 of \citet{chen2016open} for the algorithms defined above, under the following convention. The additive term of Conjecture~3.2 is written there as $\Delta_{[2]}^{-2}\ln\ln\Delta_{[2]}^{-1}$, which is negative for $\Delta_{[2]}\in(e^{-1},1)$ and undefined at the admissible gap $\Delta_{[2]}=1$; a positive convention is therefore required. We use $g(\Delta)=\max\bigl\{1,\log\log(1/\Delta)\bigr\}$ for $0<\Delta<1$ and $g(1)=1$. For every $0<\Delta\le1$, $g(\Delta)\le\ell(\Delta^{-2})\le7g(\Delta)$ (see \eqref{bridge:sandwich}), so the two positive truncations give equivalent statements; as $\Delta_{[2]}\to0$ both are $\log\log(1/\Delta_{[2]})\bigl(1+o(1)\bigr)$. A uniform $O(D)$ comparison with the untruncated expression does not hold near $\Delta_{[2]}=1$. Corollary~\ref{bridge:regimes} makes the two regimes explicit: for $\Delta_{[2]}\le e^{-e}$ the bound holds with the untruncated $\Delta_{[2]}^{-2}\log\log(1/\Delta_{[2]})$, and for $\Delta_{[2]}>e^{-e}$ the additive term is absorbed into $O\bigl(\cL(I,\delta)\bigr)$.

The proofs give explicit constants. Theorem~\ref{lower:theorem} shows
\[
 \cL(I,\delta)\ge H(I)\max\bigl\{2\kl(1-\delta,\delta),\Ent(I)/4\bigr\}
 \ge\tfrac15H(I)\bigl[\log(1/\delta)+\Ent(I)\bigr],
\]
where $\kl$ is the binary relative entropy; the universal constants in the upper bounds are large and are not optimized.

Section~\ref{overview:entropy} presents both directions of Theorem~\ref{thm:entropy}; Section~\ref{overviewu:section} gives the single algorithm and the key arguments for Theorem~\ref{thm:universal}. Appendix~\ref{sec:prelim} collects the probabilistic tools and elimination guarantees used in these proofs. Complete proofs appear in Appendices~\ref{lower:section}--\ref{cost:section}. Appendix~\ref{app:section} records consequences and the normalization to general known variances, and Appendix~\ref{verif:section} describes the machine-checked verification of the main statements. The instance-wise upper bound is independent of the construction of the single algorithm.

\section{Related work}
\label{sec:related-work}

\paragraph{Gap-dependent complexity and elimination.}
Best-arm identification separates the cost of distinguishing a fixed mean
vector from the additional cost of adapting to unknown gaps.
\citet{mannor2004sample} established the classical gap-dependent
lower-bound scale $H(I)\log(1/\delta)$.
The sequential-testing lower bound of \citet{farrell1964} supplies the
iterated-logarithmic obstruction in the two-arm problem as the gap tends
to zero. This is an asymptotic obstruction across gap values, rather than
an additional lower bound that must hold at every fixed instance for
every $\delta$-correct algorithm. That distinction motivates the
order-oblivious instance-wise lower bound and the additive allowance in
\citet{chen2016open}.

Elimination methods provide a basic route to efficient exploration.
Median elimination and action elimination
\citep{even2002pac,even2006action} use confidence guarantees to
progressively discard unpromising arms. Median elimination finds an
$\varepsilon$-optimal arm using
$O\bigl(n\varepsilon^{-2}\log(1/\delta)\bigr)$ samples; our reference-selection
routine uses this principle, with its Gaussian guarantee proved in the
paper. \citet{karnin2013almost} and \citet{jamieson2014lil} obtained
bounds with iterated-logarithmic dependence on the individual gaps.
The unified gap-based exploration framework of
\citet{gabillon2012unified} gives related sampling strategies for
fixed-confidence and fixed-budget objectives; KL-based confidence
intervals also sharpen subset-selection guarantees
\citep{kaufmann2013information}.
These bounds explain why eliminating the cost of scale adaptation from
every suboptimal arm is a substantive part of the almost instance-wise
optimality question.

\paragraph{Gap entropy and instance-wise optimality.}
\citet{chen2015optimal} improved earlier upper bounds and used a
refined sign-testing lower bound to obtain best-arm lower bounds beyond
the classical confidence term.
\citet{chen2016open} defined gap entropy using the normalized complexity
of gap groups and posed two precise conjectures. Their
Conjecture~3.5 asks whether the order-oblivious instance-wise lower bound
is $\Theta\Bigl(H(I)\bigl[\log(1/\delta)+\Ent(I)\bigr]\Bigr)$.
Their Conjecture~3.2 asks for one $\delta$-correct algorithm whose
expected number of samples is within a constant factor of that instance-wise lower bound
plus the two-arm iterated-logarithmic term. The infimum defining the
instance-wise lower bound allows a different $\delta$-correct algorithm for each
target instance, whereas the latter conjecture requires a single
algorithm that adapts to all instances.

The direct follow-up of \citet{chen2017towards} proves both an upper
bound and a restricted form of the entropy lower bound.
Their Theorem~1.11 gives expected sample complexity
\begin{equation*}
 O\!\Bigl(
 H(I)\bigl[\log(1/\delta)+\Ent(I)\bigr]
 +\Delta_{[2]}^{-2}\log\log\Delta_{[2]}^{-1}
       \operatorname{polylog}(n,\delta^{-1})
 \Bigr),
\end{equation*}
with the customary nonnegative interpretation of the iterated
logarithm. Their Theorem~1.12 states that, for every instance $I$ whose
suboptimal gaps are all of the form $2^{-k}$ with $k\ge1$ an integer
(their \emph{discrete} instances), every $\delta$-correct algorithm
with $\delta<0.01$ incurs
$\Omega\Bigl(H(I)\bigl[\log(1/\delta)+\Ent(I)\bigr]\Bigr)$ samples on
\emph{some sub-instance} $I'$ obtained by deleting suboptimal arms.
Their Corollary~1.13 transfers the bound to $I$ itself for monotone
algorithms, meaning algorithms whose expected number of samples cannot
increase when suboptimal arms are removed. The sub-instance statement
alone does not give the lower bound on the original instance for an
arbitrary algorithm.

For the algorithms of Section~\ref{sec:model}, the results proved here
address these specific differences from
\citet{chen2017towards}: the lower bound concerns the original instance
under permutation averaging, permits arbitrary positive gaps and
$0<\delta<0.1$, and imposes no monotonicity assumption. The uniform
upper bound has the entropy main term and an additive two-arm term
without the displayed $\operatorname{polylog}(n,\delta^{-1})$ factor,
under the positive truncation convention of Section~\ref{sec:model}.

\paragraph{Information bounds and the confidence limit.}
\citet{kaufmann2014testing} analyze distribution-dependent complexities
in two-arm testing. \citet{kaufmann2016complexity} develop a general
change-of-measure inequality for adaptive bandit experiments and
lower bounds for identifying the best arms.
\citet{garivier2016optimal} characterize the optimal allocation through
an information optimization problem and prove asymptotic optimality of
Track-and-Stop as $\delta\to0$ for a fixed instance.
This limit determines the leading confidence cost, but does not by
itself control the additional dependence on unknown gaps uniformly
over instances. \citet{simchowitz2017simulator} make the distinction
between moderate confidence and the vanishing-error limit explicit:
their simulator method yields adaptive-sampling lower bounds that
capture costs of learning where to sample. Our lower bound likewise
addresses adaptation, using the original-instance counts and the
common comparison distribution described in Section~\ref{overview:entropy}.
\citet{degenne2019games} obtain finite-confidence guarantees for more
general pure-exploration problems by viewing the information
optimization as a game and using no-regret learners.

\paragraph{Top-two sampling.}
An alternative to elimination is to sample from a pair of candidate
arms. \citet{russo2016simple} introduces Bayesian top-two allocation
rules and analyzes their posterior concentration rates.
\citet{shang2020fixed} establish fixed-confidence sample-complexity
guarantees for Top-Two Thompson Sampling and a transportation-cost
variant with Gaussian rewards.
\citet{jourdan2022top} extend the analysis to bounded reward
distributions and permit alternative choices of the leading arm.
\citet{jourdan2023nonasymptotic} give nonasymptotic bounds on expected
sample complexity for a UCB-based top-two algorithm, while
\citet{you2023information} develop information-directed selection
between the two candidates and obtain asymptotic optimality for
Gaussian best-arm identification. These results study how to learn
effective sampling allocations; the particular target here is the
gap-entropy benchmark with only an additive two-arm adaptation term.

\paragraph{Sequential stopping and time-uniform inference.}
Confidence sequences provide another perspective on the
iterated-logarithmic cost of sequential inference.
\citet{howard2021sequences} construct nonasymptotic confidence
sequences valid over an unbounded time horizon, including boundaries
with iterated-logarithmic behavior.
\citet{kaufmann2021mixture} derive time-uniform deviation bounds
under adaptive bandit sampling and apply them to likelihood-ratio
stopping rules. Our upper-bound proof instead allocates failure
probabilities across finite elimination calls; the challenge is to
make that allocation adapt to the unknown complexity of the gap groups.

\paragraph{Fixed budgets and structured exploration.}
With a fixed sampling budget, the objective is to control the error
of the final recommendation. Successive Rejects
\citep{audibert2010best} and Sequential Halving
\citep{karnin2013almost} are central elimination methods in this
setting. The lower bounds of \citet{carpentier2016tight} show that
adaptation to an unknown instance can require a logarithmic factor in
the number of arms in fixed-budget guarantees. These results concern
a different performance criterion from the expected stopping time
studied here.
Other extensions change the structure of the identification task:
\citet{chen2014combinatorial} study the selection of an optimal
feasible subset, and linear-bandit methods exploit information shared
across arm means \citep{soare2014linear,tao2018linear,jedra2020optimal}.
\citet{huang2017structured} consider actions whose values depend on
noisy component observations, motivated by minimax game search;
\citet{degenne2019multiple} treat pure-exploration tasks with multiple
correct answers. Our theorems concern independent Gaussian arms with
a unique best arm; extensions of gap entropy to these structured
objectives would require additional analysis.

\subsection{Concurrent resolution of the two conjectures}
\label{sec:concurrent-work}

\citet{aronow2026positive}, posted on September~9, 2026, also resolve
the gap-entropy and almost instance-wise optimality conjectures.
We briefly compare with their results here. Their Theorems~1.1--1.2 give
the same orders as Theorems~\ref{thm:entropy}--\ref{thm:universal}
on Gaussian instances; their upper bounds additionally cover independent
$1$-sub-Gaussian rewards. Our theorems are stated for Gaussian rewards.
The opposite endpoint conventions for dyadic groups change entropy by
at most $\log2$: each cell of either partition intersects at most two
cells of the other, so both conditional entropies are at most $\log2$.
Their regularized two-arm term is equivalent to $D\ell(D)$ up to
absolute constants. These conventions therefore leave the rates unchanged.

\paragraph{Lower bound.}
The central mechanism is shared: permutation symmetrization, a change
of at most two arm means producing a common tied reference law, and
disjoint sample-count events followed by an entropy inequality.
Both arguments charge the changes to counts on the original instance.
Their Section~2 uses averages within gap groups, normalized by the
dyadic sampling scale. Our Appendix~\ref{lower:section} instead works
with exact-gap counts $\alpha_d=d^2t_d$ and selects a minimizing
representative in each group. Our packing step also uses the
$\delta$ bound on the total reference probability of the selected
return events, yielding $\sum_k e^{-4\alpha_k}<10\delta$.
These are differences within a closely related proof strategy.

\paragraph{Instance-wise upper bound.}
Their Section~3 uses mixtures of exponential supermartingales around
a target-dependent threshold between the two largest means, followed
by a confirmation test. Our Appendix~\ref{pointwise:section} uses
only the target's group sizes to specify a finite elimination schedule
and smoothed confidence weights. Its thresholds are estimated from
the active arms, and prescribed size reductions bound every run's
cost on every input. Both constructions repeat trials and interleave
a globally correct fallback to ensure termination away from the target.

\paragraph{A single algorithm.}
Both methods use increasing budgets, confidence allocation proportional
to sampling weights, and bounds on joint premature eliminations.
Their Sections~4--6 retain the Median--Fraction--Eliminate structure
of \citet{chen2017towards}; the fraction thresholds depend on the
number of rounds remaining. Their error analysis separates hard and
easy arms according to the stage budget. Our algorithm
(Appendix~\ref{alg:section}) reuses one reference estimate per scale,
retains at least the empirical upper half, and adds a summable run-index
penalty only when at most seven arms remain. Our correctness proof
groups outputs by fixed nested sets of near-optimal arms and combines
stochastic domination with an empirical-ordering bound
(Appendix~\ref{sound:section}). These features provide a different
way to obtain the additive two-arm cost.

Our contribution also includes an explicit measurable-policy treatment
and Lean~4 verification of the main statements and policy-model
extensions. Appendix~\ref{verif:section} specifies the scope of that
verification, including the differences between the formal witnesses
and the algorithms presented here.

\section{Characterizing the order-oblivious instance-wise lower bound}
\label{overview:entropy}

We explain the two directions of Theorem~\ref{thm:entropy}. Throughout,
$I$ is a Gaussian instance with a unique best arm in the model of
Section~\ref{sec:model}, and $0<\delta<1/10$. Complete proofs appear in
Appendices~\ref{lower:section} and~\ref{pointwise:section}.

\paragraph{A common comparison distribution using samples from the original instance.}
Randomly relabeling any $\delta$-correct algorithm makes its expected
cost on $I$ equal to its permutation average. For this symmetrized
algorithm, let $t_d$ be the expected number of samples from an individual arm with
exact gap $d$, and set $\alpha_d=d^2t_d$. Permutation equivariance
gives every arm with gap $d$ the same expected count on every labeling
of the target; we may assume these counts are finite, since otherwise
there is nothing to prove.

Let $P_d$ place a gap-$d$ arm at label $1$ and uniformly label the
remaining target multiset. Let $Q_a$ instead give label $1$ mean
$\mu_{[1]}$, with the remaining labels containing the target multiset
minus one gap-$a$ arm. Thus $Q_a$ has two best arms. Correctness
implies $Q_a(\text{return a label other than }1)\le\delta$: lower its
other best mean by $\varepsilon$, apply correctness with label $1$
uniquely best, and pass first to the finite-transcript limit
$\varepsilon\downarrow0$, then to infinite time. Lowering that mean
keeps the perturbation admissible even when $\mu_{[1]}=1$; no
termination assumption at the tie is needed.

For $a\ne d$, couple $P_d$ to $Q_a$ by raising label $1$ to
$\mu_{[1]}$ and changing a uniformly selected gap-$a$ arm to gap $d$.
The resulting background is uniformly labeled, including when gaps
have multiplicities. The second change restores the gap multiplicity
lost at label $1$. Uniform labeling makes the comparison law depend
only on $a$, so different source scales can share it. Keep the source
assignment and selected label
latent, with the same prior in both experiments. Stop each transcript
at return or after $m$ total observations, denoting its law by a
superscript $(m)$. The finite-horizon Gaussian KL bound
(Lemma~\ref{lem:gaussian-kl}), followed by discarding the latent variables, gives
\begin{equation}
\label{overview:coupling}
 \KL\bigl(P_d^{(m)}\bigm\Vert Q_a^{(m)}\bigr)
 \le\tfrac12\bigl[d^2t_d+(a-d)^2t_a\bigr]
 \le\tfrac12(\alpha_d+\alpha_a),\qquad a\ge d.
\end{equation}
When $a=d$, only label $1$ changes and the bound is $\alpha_d/2$.
Binary data processing then gives $\alpha_d\ge
2\kl(1-\delta,\delta)\ge L$. The hidden selection is independent of
observations conditional on the source assignment. Consequently all
costs in~\eqref{overview:coupling} are measured on the original target,
without a monotonicity assumption or a transfer of expected sample complexity between
different instances.

\paragraph{Disjoint stopping events imply the entropy lower bound.}
We adapt the stopping-event argument of
\citet[Appendices~D.2--D.3]{chen2017towards} to the comparison above.
In each nonempty group choose a gap minimizing $\alpha_d$, and denote
the minimum by $\alpha_k$. Among representatives with $\alpha_k\le x$,
choose the largest gap $a$, whenever this set is nonempty.
Lemma~\ref{lower:windows} gives corresponding stopping events $E_d$
with $Q_a(E_d)\ge e^{-2x}/4$ under this common reference distribution.
Selecting disjoint sample-count windows and using
$Q_a(\text{return a label other than }1)\le\delta$ bounds the number
of these representatives. Integrating that bound yields
\begin{equation}
\label{overview:packing}
 Z:=\sum_k e^{-4\alpha_k}<10\delta<1.
\end{equation}
The reference gap depends on $x$; Appendix~\ref{lower:section} gives
the window selection and integration.
Nonnegativity of relative entropy gives
$4\sum_kp_k\alpha_k\ge\Ent(I)-\log Z\ge\Ent(I)$.
The target cost is at least $H\sum_kp_k\alpha_k$ by groupwise
minimization. Combining this with
$\alpha_k\ge2\kl(1-\delta,\delta)$ gives the lower bound
$H\max\bigl\{2\kl(1-\delta,\delta),\Ent(I)/4\bigr\}\ge H\bigl[L+\Ent(I)\bigr]/5$.

\paragraph{An algorithm with a fixed target instance that remains $\delta$-correct.}
For the upper bound, fix the group sizes of the target instance as algorithmic constants. Let $n$ be the number of arms in that instance and $K$ the largest index of a nonempty group.
This is permitted inside the pointwise infimum defining $\cL$, but
correctness must still hold on every input. Use weights defined from the group sizes
$W_k=4^k\bigl(1+\sum_{j\ge k}|G_j|\bigr)$, for $0\le k\le K$,
and $q_k=W_k/\sum_{j=0}^K W_j$.
Lemma~\ref{lem:smoothing} gives total weight $O(H)$ and normalized
entropy $O\bigl(1+\Ent(I)\bigr)$. A finite run allocates confidence
proportional to $q_k$, with geometric subdivision among same-scale
raw threshold-elimination calls, returning $R_0$ without enlargement. It aborts whenever a required halving
fails. The group sizes determine how many arms may remain at each
scale; the final call at the target's finest scale returns only if its
retained set is a singleton. At tolerance $d$, a call selects a
reference arm by median elimination \citep{even2002pac}, estimates
its mean by $z$, and retains arms whose fresh empirical means satisfy
$x_i\ge z-d/2$. For $s$ active arms and error budget $\alpha$,
Lemmas~\ref{lem:pac}--\ref{lem:elimination} give a deterministic
$O\bigl(sd^{-2}\log(1/\alpha)\bigr)$ sample budget and the required best-arm
retention and halving guarantees. The imposed size reductions
give a cost bound even after statistical failures. On every input with
$n$ arms, each run trajectory has deterministic cost
\begin{equation}
\label{overview:pointwise-cost}
 B_I(\varepsilon)
 =O\!\Biggl(\sum_{k=0}^K W_k\log\frac1{\varepsilon q_k}\Biggr)
 =O\Bigl(H\bigl[\log(1/\varepsilon)+\Ent(I)\bigr]\Bigr).
\end{equation}
The run's incorrect-return probability is at most $\varepsilon$
on every such input, while it succeeds with probability at least
$1-\varepsilon$ on every labeling of the target.

Repeat fresh runs with $\varepsilon_t=\delta2^{-(t+2)}$ and
interleave an independent $\delta/2$-correct confidence-interval
algorithm having finite expected stopping time on every input with a unique best arm. Other arm counts use the fallback alone. This fallback supplies termination even when the input differs from the target instance. Errors sum to at most $\delta$, and interleaving
costs at most a factor two. On the target, run $t$ is reached with
probability at most $2^{-t}$, so summing
\eqref{overview:pointwise-cost} gives $O\Bigl(H\bigl[L+\Ent(I)\bigr]\Bigr)$ samples in expectation, uniformly over its labelings.

\section{A single algorithm}
\label{overviewu:section}

Theorem~\ref{thm:universal} requires one algorithm that works without
knowing the instance's gaps, complexity, or entropy. We describe the
construction and the main proof arguments; the complete procedure,
correctness proof, and analysis of the expected sample complexity are
Appendices~\ref{alg:section}, \ref{sound:section}, and
\ref{cost:section}. The assumptions remain independent
unit-variance Gaussian observations, means in $[0,1]$, a unique best
arm, and $0<\delta<0.1$.

\subsection{Sample budgets and confidence allocation}
\label{overviewu:algorithm}

Write $L=\log(1/\delta)$ and $\gamma=\delta/1024$. Run
$j=0,1,\ldots$ starts with all arms active and fresh observations, and
has deterministic budgets
\begin{equation}
 h_j=2^j,\qquad M_j=1024h_j,\qquad
 Q_j=\ceil[\big]{C_Qh_jL}.
 \label{overviewu:caps}
\end{equation}
The bound $M_j$ controls the sum of the weights of calls, whereas $Q_j$ bounds the number of samples;
$C_Q$ is the absolute constant $C_*$ of Appendix~\ref{cost:section},
determined by the constants of Lemmas~\ref{lem:pac}
and~\ref{lem:elimination}; any larger absolute constant also works. Within a run, tolerances are $d=2^{-k}$, with
$k=0$ initially; the scale index increases only when elimination stalls.
For a call on the current active set $S$, of size $s\ge2$, define the weight $w=sd^{-2}$, the factor in its sample bound before the confidence logarithm. Its confidence parameter is
\begin{equation}
 \alpha=\gamma\frac{w}{M_j}
 \begin{cases}
 (j+1)^{-2},&2\le s\le7,\\
 1,&s\ge8.
 \end{cases}
 \label{overviewu:confidence}
\end{equation}
The size in this rule is the current active size at every call, so the
branch can change after any call, including after a stall.
Thus only calls with at most seven active arms incur the additional sample cost of this confidence allocation.

We use standard concentration and median-elimination guarantees
\citep[Facts~5.1--5.2]{chen2017towards}; Appendix~\ref{sec:prelim}
specifies the deterministic sample counts and the modified elimination rule.
At scale entry, median elimination selects a reference arm within
$d/8$ of the active maximum, with failure probability at most
$\alpha/16$. Independently estimate its mean by $z$, with accuracy
$d/16$ and failure budget
$\beta_{j,k}=\delta/\bigl[64(j+1)^2(k+1)^2\bigr]$. At every call obtain
fresh independent estimates $x_i$ for all active arms. The sample
counts are
\begin{equation}
 m_z=\ceil[\big]{512d^{-2}\log(2/\beta_{j,k})},\qquad
 m=\ceil[\big]{512d^{-2}\log(128/\alpha)}.
 \label{overviewu:samples}
\end{equation}
The numerical reference estimate is reused throughout the scale, even
if its arm is removed, so it is paid for once per scale.

The threshold retained set is $R_0=\{i:x_i\ge z-d/2\}$. Enlarge it by adding
arms in decreasing empirical-mean order until at least $\ceil[\big]{s/2}$
arms remain, so the result contains both $R_0$ and the empirical
upper half. A singleton is returned only after this enlargement. If the enlarged
set has size $\ceil[\big]{s/2}$, continue at the same scale; otherwise
advance to the next scale.

Before any sampling for a call,
check its weight and its entire deterministic sample count against
the remaining budgets; a failed reservation aborts the run and starts
the next one. Every trajectory therefore satisfies
$\sum_t w_t\le M_j$, $\sum_t\alpha_t\le\gamma$, and
$T_j\le Q_j$, including trajectories that eliminate the best arm.

\subsection{Correctness via sets of near-optimal arms}
\label{overviewu:soundness}

Summing a constant error allowance over infinitely many runs would
not establish correctness. Our analysis instead separates errors in calls with at most seven arms from errors that require a large batch of eliminations. The
key implication uses the original best mean $\mu_{[1]}$, even after
its arm has disappeared: the reference upper bound implies $z-d/2\le\mu_{[1]}-7d/16$, so removing an arm with
$\Delta_i\le d/8$ requires an \emph{underestimation event},
$x_i-\mu_i<-5d/16$, whose conditional probability is at most
$(\alpha/128)^{25}$ by Gaussian tails. All reference-estimation
failures together have probability below $\delta/16$.

Set $\eta=\gamma/128$ and $p=2\eta^{25}$.
Lemma~\ref{sound:domination} couples all underestimation events in one adaptive
run to independent Bernoulli indicators, one per original arm,
with parameters at most $p$. It uses independent Poisson processes
and a common clock whose predictable increments realize the exact
Gaussian underestimation probabilities. The bound on the total weight limits its total elapsed
clock time by $p$. This construction supplies independence on fixed
disjoint arm sets despite adaptive sample counts. It never conditions
on future reference upper bounds or on which set the algorithm ultimately
selects.

For calls with at most seven arms, the penalty in
\eqref{overviewu:confidence} gives a direct global bound: their
total probability of underestimation in run $j$ is at most
$7\eta^{25}(j+1)^{-50}$, because
$\sum_t\alpha_t^{25}\le\bigl(\sum_t\alpha_t\bigr)^{25}$, and these errors are
charged once over the entire stream, for total probability at most
$14\eta^{25}$.

Outside reference failures and the charged errors in calls with at most seven arms, associate
an output at its final tolerance
$d_*$ with the set of near-optimal arms at that scale
\begin{equation}
 B=\{i:\Delta_i<d_*/8\},\qquad
 r=|B|,\quad h=\sum_{i\notin B}\Delta_i^{-2},
 \qquad A_r=rp^{\max(1,r-1)}.
 \label{overviewu:core}
\end{equation}
Although the final scale is random, the possible sets $B$ form a fixed
nested family determined by the instance, with at most one set of each cardinality. An incorrect singleton
requires underestimation events on all but at most one arm in $B$, or on the original
best when $r=1$, giving probability at most $A_r$ in one run.

Fix $B$ before applying the coupling and group runs according to the bound $M$ on their total weight. For $M\le h/192$, eliminations of arms outside $B$ at tolerances $d_t\le8\Delta_i$, together with a possible output outside $B$, account for total complexity at most $64M+32M$. Arms outside $B$ eliminated at larger tolerances therefore have total complexity at least $h/2$. The independent Bernoulli variables for arms in $B$ and outside $B$ give the tail bound
$A_r(2ep)^{h/(64M)}$, which sums geometrically over these budgets.
Constantly many budgets lie in $h/192<M<h$.

For $M\ge h>0$, an incorrect output outside the already charged
events requires a call with at least eight arms that reduces the
active arms in $B$ to at most one, or removes its sole remaining arm.
Lemma~\ref{sound:collapse} handles this entire batch, which may begin with many arms in $B$ and therefore requires a bound for their joint elimination: retaining the empirical upper half forces either underestimation of many arms in $B$ or overestimation of many arms outside $B$ with sufficiently large gaps, with probability at most
$3\eta^4(h/M)^3$. Taking the minimum of this bound and $A_r$ gives
$\sqrt{3A_r}\eta^2(h/M)^{3/2}$, again summable over budgets, without
any independence between the underestimation events for arms in $B$ and the elimination event.
The lemma sums the conditional probabilities of elimination using the bound on the total weight of calls, so stalled scales add no union-bound factor.
When $h=0$, the retention rule makes such a large-set elimination event impossible.
Finally, $\sum_rA_r=O(p)$ and
$\sum_r\sqrt{A_r}=O\bigl(\sqrt p\bigr)$ make the union bound over this fixed family finite
without a factor for empty scales. Together with the two global
charges, the explicit constants give error probability below $\delta$.

\subsection{An unconditional expected sample complexity bound}
\label{overviewu:cost}

Correctness alone does not control stopping time. In a fresh run,
accurate references and the usual best-arm retention and elimination
events hold with probability at least $1-q_0$, where
$q_0=\gamma+\delta/16<1/2$. On this good event, a scale stalls
only when the active size is within a constant factor of the number
of arms closer than its tolerance. If $n_t$ counts gaps in group
$t$ and $K$ is the largest index of a nonempty group, the weights
$W_k=4^k\bigl(1+\sum_{t\ge k}n_t\bigr)$, for $0\le k\le K$, satisfy
$H\le\sum_{k=0}^K W_k<32H/3$. Their normalized entropy is
$O\bigl(\Ent(I)+1\bigr)$ by Lemma~\ref{lem:smoothing}. Repeated halvings
give the upper bound on the weights of calls $w_{k,r}\le4W_k(3/4)^r$, so runs on this good event with $h_j\ge H$ fit the weight bound and finish by scale $K$.

Applying the logarithmic factor in the sample bound to these weights yields baseline
cost $O\Bigl(H\bigl[L+\Ent(I)+1+\log(h_j/H)\bigr]\Bigr)$. Calls with at most seven
arms have total input size at most $7+4+2=13$ per scale, so their
extra confidence cost is $O\bigl(4^k\log(j+1)\bigr)$ at scale $k$ and
$O\bigl(D\log(j+1)\bigr)$ in total.
One reference estimate per scale adds
$O\Bigl(D\bigl[L+\log(j+1)+\ell(D)\bigr]\Bigr)$.

Put $\Psi=H\bigl[L+\Ent(I)+1\bigr]+D\ell(D)$ and
$j_* =\ceil[\big]{\log_2(\Psi/L)}$. For $C_Q\ge C_*$, every run $j\ge j_*$ on the good event fits its entire sample budget, including each prefix
and its next complete reservation. The choice is noncircular: $C_*$
bounds the declared cost of runs constrained only by the weight bound, computed without the
sample budget. Independence of the fresh observations bounds the probability of reaching run
$j_*+t$ by $q_0^t$. Deterministic budgets control all other trajectories, hence
\begin{equation}
 \E_I\tau\le\sum_{j<j_*}Q_j+
                 \sum_{t\ge0}Q_{j_*+t}q_0^t
 =O(\Psi).
 \label{overviewu:expectation}
\end{equation}
The latter series converges because $2q_0<1$. This proves finite
expected stopping time, almost-sure termination, and the bound in
Theorem~\ref{thm:universal}, with no instance parameters supplied to
the algorithm.

\section{Conclusion}
For independent unit-variance Gaussian arms with means in $[0,1]$ and the measurable algorithms of Section~\ref{sec:model}, the order-oblivious instance-wise lower bound is $\Theta\Bigl(H\bigl[\log(1/\delta)+\Ent(I)\bigr]\Bigr)$. A single algorithm matches this bound up to an additive \(O\Bigl(D\log\bigl(e+\log(e+D)\bigr)\Bigr)\) term. The lower bound follows from a change of distribution to instances with two best arms whose KL divergence is bounded using expected numbers of samples on the original instance; the upper bound for a single algorithm from elimination that retains the arms with the largest empirical means and a summation over fixed sets of near-optimal arms. The proofs cover arbitrary finite instances, assume neither monotonicity nor termination at instances with two best arms, and require no prior estimates of the gaps, complexity, or gap entropy.

\subsection*{AI use statement}
The main mathematical derivations and the Lean~4 formalization and
verification work were carried out primarily by \texttt{gpt-6-astra}.
The resulting mathematical arguments and formal verification work were
subsequently checked by \texttt{claude-fable-5.1}. Generative AI tools were
also used extensively in preparing the manuscript. These model-based
checks are distinct from the kernel checks described in
Appendix~\ref{verif:section}.
\label{page:main-end}

\appendix
\section{Probabilistic and elimination preliminaries}\label{sec:prelim}
We collect elementary ingredients used by both upper bounds and by the lower-bound comparison. All sample blocks below are fresh. Conditional statements therefore remain valid for an adaptively chosen active set and tolerance.

\subsection{Gaussian tails and finite adaptive experiments}
For the mean $\overline X_m$ of $m$ independent $\mathcal N(\mu,1)$ observations,
\begin{equation}\label{eq:gaussian-tail}
 \P\bigl(\overline X_m-\mu\ge t\bigr)\le e^{-mt^2/2},\qquad
 \P\Bigl(\bigl|\overline X_m-\mu\bigr|>t\Bigr)\le2e^{-mt^2/2}\quad(t>0).
\end{equation}
We write $\kl(p,q)=p\log(p/q)+(1-p)\log\bigl((1-p)/(1-q)\bigr)$ for binary relative entropy, using the usual extended-real conventions.

The following finite-horizon calculation is the Gaussian form of the
adaptive change-of-measure argument used in bandit lower bounds
\citep{kaufmann2016complexity}. We include the proof to account
explicitly for internal randomness and latent mixtures.
\begin{lemma}[Change of distribution for a finite adaptive Gaussian experiment]\label{lem:gaussian-kl}
Run the same algorithm on the mean vectors $\mu$ and $\nu$, stopping its transcript after return or after $m$ total observations. If $P_\mu^{(m)}$ and $P_\nu^{(m)}$ are the transcript laws and $N_i^{(m)}=N_i(m)$ is the number of observations of arm $i$ in this transcript, using the stopped-count convention of Section~\ref{sec:model}, then
\begin{equation}\label{eq:adaptive-kl}
 \KL\bigl(P_\mu^{(m)}\bigm\Vert P_\nu^{(m)}\bigr)
 \le\frac12\sum_i(\mu_i-\nu_i)^2\E_\mu N_i^{(m)}.
\end{equation}
Equality holds when the algorithm's internal random seed is included in the transcript. The same upper bound, averaged over a common latent-variable prior, holds for mixtures of such experiments after the latent variable is discarded.
\end{lemma}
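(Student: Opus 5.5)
We prove the equality first with the seed included in the transcript, via the chain rule for relative entropy; we then obtain the inequality for the seed-free transcript and for latent mixtures by data processing and joint convexity. Write the augmented transcript as $(U,Y_1,\dots,Y_{m\wedge\tau})$, where $U$ is the seed and $Y_t$ the observation at time $t$. The sampled labels $A_t$ and the return decision are measurable functions of $U$ and $Y_1,\dots,Y_{t-1}$, so they carry no extra relative entropy. The law of $U$ is the same under $\mu$ and $\nu$. Conditional on $(U,Y_{<t})$ on the event $\{\tau\ge t\}$, $Y_t$ has law $\mathcal N(\mu_{A_t},1)$ under $P_\mu$ and $\mathcal N(\nu_{A_t},1)$ under $P_\nu$. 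On $\{\tau<t\}$ nothing further is observed, so both conditional laws are the same point mass.

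By the chain rule for relative entropy on a finite product of standard Borel spaces (disintegration gives regular conditional laws), the KL of the augmented transcripts equals
\[
\sum_{t=1}^m \E_\mu\Bigl[\1\{\tau\ge t\}\,\KL\bigl(\mathcal N(\mu_{A_t},1)\bigm\Vert\mathcal N(\nu_{A_t},1)\bigr)\Bigr]
=\frac12\sum_{t=1}^m\E_\mu\bigl[\1\{\tau\ge t\}(\mu_{A_t}-\nu_{A_t})^2\bigr].
\]
We regroup by arm, using $\sum_{t\le m}\1\{\tau\ge t,A_t=i\}=N_i(m)$. Every term is finite and nonnegative, so Tonelli applies and the sum equals $\frac12\sum_i(\mu_i-\nu_i)^2\E_\mu N_i^{(m)}$. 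This is the equality case.

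The transcript without the seed is a measurable function of the augmented transcript. By the data-processing inequality its KL is at most the augmented KL, which gives \eqref{eq:adaptive-kl}.

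For mixtures, let $\theta$ be a latent variable with a common prior $\rho$, and let the means $\mu(\theta),\nu(\theta)$ depend measurably on $\theta$. Apply the bound conditionally on $\theta$ and use the chain rule with the identical $\theta$-marginals. The joint KL of $(\theta,\text{transcript})$ is then
\[
\int\KL\bigl(P^{(m)}_{\mu(\theta)}\bigm\Vert P^{(m)}_{\nu(\theta)}\bigr)\,\rho(d\theta)\le\frac12\int\sum_i(\mu_i(\theta)-\nu_i(\theta))^2\,\E_{\mu(\theta)}N_i^{(m)}\,\rho(d\theta).
\]
Discarding $\theta$ is again a projection, so data processing preserves the bound.

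The main obstacle is measure-theoretic rigor rather than computation. The transcript has random length and an infinite-dimensional seed. The fix is to pad the stopped transcript to a fixed length $m$ with a cemetery symbol, which makes it a fixed product of standard Borel spaces. The chain rule then applies with regular conditional distributions, and we check that the padded coordinates contribute zero relative entropy.
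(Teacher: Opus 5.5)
Your proposal is correct and follows essentially the paper's proof. Both apply the relative-entropy chain rule to the seed-augmented stopped transcript: actions and returns contribute nothing, and each observation of arm $i$ contributes $(\mu_i-\nu_i)^2/2$. Both then drop the seed by data processing, and for mixtures expose the latent variable with its common prior before discarding it. Your cemetery-symbol padding and the explicit regrouping by arm only spell out what the paper compresses into ``an absorbing state after return.''
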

\begin{proof}
Condition on the seed. Given a past transcript, the next selected arm is the same function of that history under both laws. Its conditional Gaussian divergence is $(\mu_i-\nu_i)^2/2$. Sampling actions and return decisions contribute no further divergence. The relative-entropy chain rule, applied for at most $m$ observations with an absorbing state after return, proves the identity for the augmented transcript. Discarding the seed gives the inequality by data processing. For a common latent prior, apply the same chain rule after exposing the latent variable and then discard it.
\end{proof}

For any event $F$, data processing also gives $\kl\bigl(P(F),Q(F)\bigr)\le\KL(P\Vert Q)$. We use Pinsker's inequality $\bigl|P(F)-Q(F)\bigr|\le\sqrt{\KL(P\Vert Q)/2}$ and the elementary bound
\begin{equation}\label{eq:binary-event-bound}
 \kl(p,q)\ge p\log(1/q)-\log2.
\end{equation}
These comparisons will always be applied at finite horizons before taking increasing limits of return events.

\subsection{A median-elimination subroutine with a deterministic sample budget}
This subroutine follows the median-elimination principle of
\citet{even2002pac,even2006action}; the formulation below records the
deterministic budget and Gaussian guarantee needed here.
\begin{lemma}[Median elimination]\label{lem:pac}
For a set of $s$ arms, $0<\varepsilon\le1$, and $0<\beta<0.1$, a procedure $\mathsf{PAC}(S,\varepsilon,\beta)$ returns an arm of mean at least $\max_{i\in S}\mu_i-\varepsilon$ with probability at least $1-\beta$. Its number of samples, denoted $N_{\mathrm{PAC}}(s,\varepsilon,\beta)$, is deterministic and satisfies
\[
 N_{\mathrm{PAC}}(s,\varepsilon,\beta)
 \le C_{\mathrm{PAC}}s\varepsilon^{-2}\log(8/\beta).
\]
A singleton is returned without sampling.
\end{lemma}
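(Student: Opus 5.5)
The plan is to implement the classical median-elimination scheme of \citet{even2002pac} with Gaussian tails \eqref{eq:gaussian-tail}, and to fix every round's sample size in advance so that the total count is deterministic.

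If $s=1$, return the arm without sampling. Otherwise run rounds $\ell=1,2,\ldots$ on an active set $S_\ell$, with $S_1=S$. Set
\[
\varepsilon_1=\varepsilon/4,\qquad \beta_1=\beta/2,\qquad \varepsilon_{\ell+1}=\tfrac34\varepsilon_\ell,\qquad \beta_{\ell+1}=\beta_\ell/2.
\]
In round $\ell$, sample each active arm $m_\ell=\ceil{(2/\varepsilon_\ell^{2})\cdot 2\log(3/\beta_\ell)}$ times with fresh observations. Then discard the arms whose empirical mean is below the empirical median, keeping exactly $\ceil{|S_\ell|/2}$ arms. Stop when one arm remains. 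Sizes follow the deterministic sequence $s,\ceil{s/2},\ldots$, so the sample count is a deterministic function of $(s,\varepsilon,\beta)$.

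\textbf{Sample count.} Since $|S_\ell|\le s2^{-(\ell-1)}+1$ and $\varepsilon_\ell^{-2}=(16/\varepsilon^2)(16/9)^{\ell-1}$, the round costs are dominated by a geometric series with ratio $8/9$. The logarithmic factor grows only like $\log(3/\beta)+\ell\log 2$, and this linear growth is absorbed by the geometric decay. The "$+1$" from the ceilings contributes at most about $\log_2 s$ further rounds, each with $O(1)$ arms. Their cost is bounded by the cost of the final $O(\log s)$ rounds, which is again $O(s\varepsilon^{-2}\log(8/\beta))$. This gives the constant $C_{\mathrm{PAC}}$.

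\textbf{Correctness of one round.} Fix a round with active set $S_\ell$ and best mean $\mu^*_\ell$. The round is \emph{bad} if $\max S_{\ell+1}<\mu^*_\ell-\varepsilon_\ell$. Let $E$ be the event that the best arm's empirical mean is below its mean by more than $\varepsilon_\ell/2$; Gaussian tails bound $\P(E)\le\beta_\ell/3$. Outside $E$, a bad round requires at least half of the arms to have empirical mean at least the best arm's empirical mean. Each arm with gap greater than $\varepsilon_\ell$ does so with probability at most $\beta_\ell/3$. Markov's inequality on the number of such arms then bounds this probability by $\frac{(\beta_\ell/3)|S_\ell|}{|S_\ell|/2}=2\beta_\ell/3$. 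Hence each round is bad with probability at most $\beta_\ell$, conditionally on the past, because the samples are fresh.

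A union bound over rounds gives total failure probability at most $\sum_\ell\beta_\ell\le\beta$. On the complement, the returned arm has mean at least
\[
\max_{i\in S}\mu_i-\sum_\ell\varepsilon_\ell=\max_{i\in S}\mu_i-\frac{\varepsilon/4}{1-3/4}=\max_{i\in S}\mu_i-\varepsilon.
\]

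\textbf{Main obstacle.} The delicate step is the sample-count bound: the ceilings in $\ceil{|S_\ell|/2}$ and in $m_\ell$ must be shown to contribute only a constant factor, uniformly in $s$ and in $\beta<0.1$. The per-round Markov argument is standard.
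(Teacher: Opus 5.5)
\textbf{Same route as the paper, but with a numerical slip.} Your construction matches the paper's: median elimination with accuracies shrinking by $3/4$ and confidences halving, and retention of exactly $\lceil|S_\ell|/2\rceil$ arms so that sizes, and hence the count, are deterministic. The paper's proof also uses a per-round Markov bound on arms that overtake the best, and bounds the count by a geometric series of ratio $8/9$. The paper does the bookkeeping slightly differently: it uses deviation $\varepsilon_r$, a per-round loss of $2\varepsilon_r$, $\varepsilon_0=\varepsilon/8$, and a generous per-arm tail $(\beta_r/8)^4$.

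\textbf{The slip.} Your $m_\ell=\lceil4\varepsilon_\ell^{-2}\log(3/\beta_\ell)\rceil$ is Even-Dar et al.'s sample size for $[0,1]$-valued rewards under Hoeffding. For unit-variance Gaussians, \eqref{eq:gaussian-tail} with $t=\varepsilon_\ell/2$ gives only
\[
 e^{-m_\ell\varepsilon_\ell^2/8}\le(\beta_\ell/3)^{1/2},
\]
not $\beta_\ell/3$. So both your bound $\P(E)\le\beta_\ell/3$ and the per-arm overtaking bound fail as written. The per-round failure probability becomes $\sqrt{3\beta_\ell}$, and $\sum_\ell\sqrt{3\beta_\ell}$ is not at most $\beta$. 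Taking $m_\ell=\lceil8\varepsilon_\ell^{-2}\log(3/\beta_\ell)\rceil$ gives exactly $\beta_\ell/3$ for both tails. This repairs the correctness argument and changes only $C_{\mathrm{PAC}}$.

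\textbf{The ceilings are not a real obstacle.} Before termination, $|S_\ell|=\lceil s2^{-(\ell-1)}\rceil\ge2$ forces $s2^{-(\ell-1)}>1$. Hence $|S_\ell|\le2s2^{-(\ell-1)}$, and the $+1$ is absorbed directly; this is how the paper handles it. You can therefore replace your vaguer accounting of extra rounds with $O(1)$ arms by this one line. The ceiling in $m_\ell$ adds at most $\sum_\ell|S_\ell|\le4s\le4s\varepsilon^{-2}\log(8/\beta)$ samples.
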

\begin{proof}
Start with $S_0=S$. For $r=0,1,\ldots$, while $|S_r|\ge2$, put
\[
 \varepsilon_r=\frac{\varepsilon}{8}(3/4)^r,\qquad
 \beta_r=\beta2^{-(r+1)},\qquad
 m_r=\ceil[\big]{8\varepsilon_r^{-2}\log(8/\beta_r)}.
\]
Sample each arm of $S_r$ exactly $m_r$ times and retain the empirical upper $\ceil[\big]{|S_r|/2}$ arms. These retained sizes and the stopping round are deterministic. By \eqref{eq:gaussian-tail}, the probability that a specified empirical mean exceeds its own true mean by more than $\varepsilon_r$, or falls below it by that amount, is at most $(\beta_r/8)^4$.

Except with probability at most $3(\beta_r/8)^4\le\beta_r$, the current true best is not underestimated by $\varepsilon_r$, and fewer than half the current arms overestimate by $\varepsilon_r$. The second assertion follows by Markov's inequality. If that best arm is discarded, every retained estimate is at least its estimate, and at least one retained arm has no upward error. The largest retained true mean is consequently within $2\varepsilon_r$ of the previous maximum. Summing the losses gives $2\sum_r\varepsilon_r=\varepsilon$, and summing the failure probabilities gives $\sum_r\beta_r=\beta$.

Before termination, $|S_r|\le2s2^{-r}$. Thus the sample sum is bounded by a constant times
\[
 s\varepsilon^{-2}\sum_{r\ge0}(8/9)^r
       \bigl[\log(8/\beta)+r+1\bigr]
 =O\bigl(s\varepsilon^{-2}\log(8/\beta)\bigr).
\]
The integer ceilings are absorbed by the same bound. This also specifies the exact deterministic sample count used in later reservations.
\end{proof}

\subsection{Threshold elimination and retention of at least half the arms}
Given $S$, tolerance $0<d\le1$, and $0<\alpha<0.1$, choose
$a=\mathsf{PAC}(S,d/8,\alpha/16)$. Independently estimate its mean with
$\ceil[\big]{512d^{-2}\log(32/\alpha)}$ observations, obtaining $z$. Independently estimate every active arm with exactly
\begin{equation}\label{eq:arm-sample-count}
 m(S,d,\alpha)=\ceil[\big]{512d^{-2}\log(128/\alpha)}
\end{equation}
observations per arm, obtaining $x_i$. Define the threshold retained set
\begin{equation}\label{eq:raw-threshold}
 R_0=\{i\in S:x_i\ge z-d/2\}.
\end{equation}
The raw threshold procedure returns $R_0$. The modified procedure returns a set $R$ consisting of $R_0$ together with the $\ceil[\big]{|S|/2}$ arms with largest empirical means. Equivalently, add arms in decreasing empirical-mean order until at least half are retained. This equivalence holds because a threshold set is itself an upper segment of the empirical ranking. Use a fixed tie rule on the probability-zero event of equal fresh empirical means. Both procedures use the same observations and have the same sample cost.

\begin{lemma}[Elimination guarantees]\label{lem:elimination}
The raw threshold procedure above has deterministic cost at most
$C_{\mathrm{elim}}|S|d^{-2}\log(128/\alpha)$. It satisfies:
\begin{enumerate}
\item If a unique best arm belongs to $S$, it belongs to $R_0$ except with probability at most $\alpha$.
\item If the unique best arm belongs to $S$, at most $N$ active arms (including that best) have gap less than $d$ from it, and $s=|S|>4N$, then $R_0$ both contains the best and has size at most $\ceil[\big]{s/2}$ except with probability at most $\alpha$.
\item If $s\le4$, the unique best arm belongs to $S$, and every other active arm has gap at least $d$ from it, then $R_0$ consists of exactly the best arm except with probability at most $\alpha$.
\end{enumerate}
Replacing $R_0$ by the enlarged set $R$ preserves the first two guarantees. Under the good event used to prove the third guarantee, $R$ retains the best and has size $\ceil[\big]{s/2}$; at most two such good calls finish from size at most four. The singleton conclusion in the third guarantee applies to $R_0$, not necessarily to $R$.
\end{lemma}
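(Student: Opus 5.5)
The plan is to treat the deterministic cost first and then the three guarantees, each by a union bound over a few Gaussian deviation events built from the fresh sample blocks.

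\textbf{Deterministic cost.} The cost has three parts:
\begin{itemize}
\item the $\mathsf{PAC}$ call, which by Lemma~\ref{lem:pac} costs $N_{\mathrm{PAC}}(s,d/8,\alpha/16)\le C_{\mathrm{PAC}}\,s\,64d^{-2}\log(128/\alpha)$;
\item the reference estimate, which uses $\ceil{512d^{-2}\log(32/\alpha)}$ samples;
\item the fresh estimates, which use $s\,m(S,d,\alpha)$ samples.
\end{itemize}
Since $\alpha<0.1$ makes $\log(128/\alpha)$ bounded below, each ceiling costs at most one extra sample per arm and is absorbed. This gives $C_{\mathrm{elim}}|S|d^{-2}\log(128/\alpha)$.

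\textbf{Good event.} Define the following events:
\begin{itemize}
\item $E_1$: the reference arm $a$ satisfies $\mu_a\ge\mu^*-d/8$, where $\mu^*=\max_S\mu_i$. This fails with probability at most $\alpha/16$.
\item $E_2$: $|z-\mu_a|\le d/16$. By \eqref{eq:gaussian-tail}, this fails with probability at most $2e^{-256\log(32/\alpha)/... }$, which is far below $\alpha/4$.
\item $E_3$: $x_{i^*}\ge\mu^*-3d/8$ for the best arm $i^*$. This fails with probability at most $e^{-512\cdot(9/64)\log(128/\alpha)/2}\le\alpha/128$.
\end{itemize}
On $E_1\cap E_2$ we have $z\in[\mu^*-3d/16,\ \mu^*+d/16]$.

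\textbf{Guarantee 1.} On $E_1\cap E_2\cap E_3$, $x_{i^*}\ge\mu^*-3d/8\ge z-d/16-3d/8>z-d/2$, so the best arm lies in $R_0$. The total failure probability is well below $\alpha$.

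\textbf{Guarantee 2.} An arm with gap $\ge d$ enters $R_0$ only if $x_i\ge z-d/2\ge\mu^*-11d/16$. This requires an overestimate of at least $5d/16$, which has probability $q\le(\alpha/128)^{25}$. There are at most $s$ such arms, so by Markov's inequality
\[
\P\bigl(\#\{\text{overestimates}\}\ge s/4\bigr)\le 4q\le\alpha/2 .
\]
Otherwise
\[
|R_0|\le N+s/4<s/4+s/4\le\ceil{s/2},
\]
using $N<s/4$. Combining this with Guarantee 1 gives the claim.

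\textbf{Guarantee 3.} With $s\le4$, a union bound over the at most three non-best arms, each overestimating with probability at most $q$, together with $E_1,E_2,E_3$, gives $R_0=\{i^*\}$ except with probability at most $\alpha$.

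\textbf{Enlarged set $R$.} Since $R\supseteq R_0$, Guarantee 1 is preserved. For Guarantee 2, on the good event $|R_0|\le\ceil{s/2}$, so the enlargement only pads $R$ up to exactly $\ceil{s/2}$. For Guarantee 3, on the good event $R_0=\{i^*\}$, so $R$ is $i^*$ plus the empirical top of the rest, of size $\ceil{s/2}$. Since $R_0$ is an upper segment of the empirical ranking containing $i^*$, $i^*$ is empirically first and is retained. Sizes then go $4\to2\to1$ or $3\to2\to1$, so at most two good calls finish.

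The main subtlety is Guarantee 2. The key step is bounding the number of large-gap arms that overestimate via Markov's inequality rather than a per-arm union bound, because the latter would cost a factor of $s$.
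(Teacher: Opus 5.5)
Your proposal is correct and follows essentially the paper's route. It uses the reference interval $\mu_S-3d/16\le z\le\mu_S+d/16$ and the best arm's survival through its own fresh estimate. It notes that a far arm needs an overestimate of $5d/16$ and applies Markov's inequality to the number of such far-arm deviations rather than a union bound over $s$ arms. It finishes with a union bound for $s\le4$ and the upper-segment property of $R_0$ for the enlarged set $R$.

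The only slip is your tail computation for $E_2$. With $\lceil 512d^{-2}\log(32/\alpha)\rceil$ samples at radius $d/16$, the failure bound is exactly $2e^{-\log(32/\alpha)}=\alpha/16$, not something far below it. This still leaves ample slack, and your per-event tails are sharper than the paper's uniform $d/16$ deviation events of probability $\alpha/64$.
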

\begin{proof}
Let $\mu_S=\max_{i\in S}\mu_i$. Except on the PAC failure and the reference-estimation failure, each of probability at most $\alpha/16$,
\begin{equation}\label{eq:reference-interval}
 \mu_S-3d/16\le z\le \mu_S+d/16.
\end{equation}
Each active estimate has two-sided error exceeding $d/16$ with probability at most $\alpha/64$. The best survives whenever its estimate is at least $\mu_S-d/16$. An arm with gap at least $d$ survives threshold elimination only if it overestimates by at least $5d/16$, an event contained in its $d/16$ deviation event. If $s>4N$, the expected number of such far-arm deviations is at most $s\alpha/64$, so the probability of at least $s/4$ of them is at most $\alpha/16$. Off that event, fewer than $s/4$ near arms and fewer than $s/4$ far arms survive. Adding the reference and best-estimation failures proves the second assertion with slack in the constants. The first assertion uses only retention of the best arm. For $s\le4$, a union bound over all current estimates gives the third assertion. With all these estimates accurate and all gaps at least $d$, the best also has the largest empirical mean, so it remains in the retained empirical upper half. Lemma~\ref{lem:pac} and the explicit sample counts prove the cost bound.
\end{proof}

\begin{remark}[Reusing a reference estimate]\label{rem:reference-reuse}
Suppose a reference is selected at scale entry with error budget $\alpha_0/16$ and its mean estimate is separately accurate to $d/16$ with failure probability $\beta$. If the original best is still active, \eqref{eq:reference-interval} continues to hold after all later same-scale deletions, even when the reference label itself has been removed. Later calls require only fresh active-arm estimates, and their conditional estimation/progress failures are charged to their own $\alpha$ budgets. The PAC and reference failures are charged once at entry. For correctness after loss of the best, only the one-sided inequality $z\le\mu_{[1]}+d/16$ is needed. In particular, a reference satisfying the upper bound and $\Delta_i\le d/8$ imply that a removed arm satisfies $x_i-\mu_i<-5d/16$. Selecting a reference arm with a low mean does not invalidate this implication. The single algorithm of Appendix~\ref{alg:section} estimates the reference once per scale with its own failure budget $\beta_{j,k}$ from \eqref{alg:reference-samples} instead of $\alpha/16$; the per-call bound of Lemma~\ref{lem:elimination} is therefore not applied verbatim there, and the corresponding failure events are accounted for separately in Appendices~\ref{sound:section} and~\ref{cost:section}.
\end{remark}

\subsection{Geometrically weighted group sizes and their entropy}
For an instance, put $n_k=|G_k|$ and $K=\max\{k:n_k>0\}$. For $0\le k\le K$, define
\begin{equation}\label{eq:smoothed-weights}
 d_k=2^{-k},\quad N_k=1+\#\bigl\{i\ne i_{[1]}:\Delta_i<d_k\bigr\},\quad
 W_k=4^k\biggl(1+\sum_{t\ge k}n_t\biggr),\quad W=\sum_{k=0}^K W_k.
\end{equation}
Here $W_0=n$, $N_K=1$, and $W_k=4^kN_{k-1}$ for $k\ge1$.
\begin{lemma}[Entropy of the normalized weights]\label{lem:smoothing}
Let $q_k=W_k/W$. Then
\begin{equation}\label{eq:smoothing-bounds}
 H\le W<\frac{32}{3}H,\qquad
 \Ent(q)\le\frac{32}{3}\Ent(I)+C_0,\qquad 4^K\le4D,
\end{equation}
where $C_0=\log(4/3)+(4/9)\log4$ is an absolute constant.
\end{lemma}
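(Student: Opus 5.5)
The plan is to prove the three bounds in \eqref{eq:smoothing-bounds} separately, each by comparison with the dyadic groups. The basic fact used throughout is that every $i\in G_t$ has $\Delta_i^{-2}\in(4^{t-1},4^t]$, so
\[
n_t4^{t-1}<H_t\le n_t4^t .
\]

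\textbf{Step 1: $H\le W<\frac{32}{3}H$.} Exchange the order of summation:
\[
W=\sum_{k=0}^K4^k+\sum_{t}n_t\sum_{k\le t}4^k=\frac{4^{K+1}-1}{3}+\sum_t n_t\frac{4^{t+1}-1}{3}.
\]
\begin{itemize}
\item Lower bound: $W_k\ge4^kn_k\ge H_k$ for each $k$, and summing gives $W\ge H$.
\item Upper bound, group terms: $n_t(4^{t+1}-1)/3<\frac{16}{3}n_t4^{t-1}<\frac{16}{3}H_t$.
\item Upper bound, leading term: $(4^{K+1}-1)/3<\frac{16}{3}4^{K-1}<\frac{16}{3}H_K$, using $n_K\ge1$.
\end{itemize}
Summing the last two bullets gives $W<\frac{16}{3}H+\frac{16}{3}H_K\le\frac{32}{3}H$.

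\textbf{Step 2: $4^K\le4D$.} The arm $i_{[2]}$ has the smallest positive gap, and $G_K$ is nonempty, so $\Delta_{[2]}<2^{-K+1}$. Hence $D>4^{K-1}$.

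\textbf{Step 3: the entropy bound.} I will realise $q$ as the marginal of a joint law on pairs $(t,k)$ with $0\le k\le t$. Assign mass $4^kn_t/W$ to each pair $(t,k)$ with $k\le t$. Add the leftover masses $4^k/W$ for $k\le K$ to the column $t=K$, i.e. treat the ``$1+$'' in $W_k$ as one extra arm of group $K$. The $k$-marginal of this law is exactly $q$.

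Since the marginal is a function of the pair, $\Ent(q)\le\Ent(T)+\Ent(k\mid T)$, where $T$ is the first coordinate. The two terms are handled as follows.

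\begin{itemize}
\item \emph{Conditional term.} Given $T=t$, the index $k$ has law proportional to $4^k$ on $\{0,\dots,t\}$. Reindexed by $j=t-k$, this is a truncated geometric law with ratio $1/4$. Its entropy is bounded by an absolute constant of the form $\log(4/3)+c\log4$, which is where a constant like $C_0$ arises.
\item \emph{Marginal term.} The law $r$ of $T$ satisfies $r_t=m_t(4^{t+1}-1)/(3W)$ with $m_t=n_t$ for $t<K$ and $m_K=n_K+1\le 2n_K$. The estimates of Step 1 give two-sided comparisons with $p_t=H_t/H$:
\[
r_t\ge\frac{n_t4^t}{W}\ge\frac{H_t}{W}>\frac{3}{32}p_t,
\qquad
r_t\le\frac{32}{3}\cdot\frac{H_t}{H}=\frac{32}{3}p_t .
\]
Then $\log(1/r_t)\le\log(1/p_t)+\log(32/3)$. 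Because $\log(1/p_t)\ge0$, this yields
\[
\Ent(r)=\sum_t r_t\log\frac1{r_t}\le\sum_t r_t\log\frac1{p_t}+\log\frac{32}{3}\le\frac{32}{3}\Ent(I)+\log\frac{32}{3}.
\]
\end{itemize}

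\textbf{Main obstacle.} The delicate point is matching the exact additive constant $C_0$ rather than merely some absolute constant. The route above produces an extra $\log(32/3)$ on top of the geometric entropy. To reach the stated $C_0$, I would first try bounding $\Ent(q)$ directly as $\sum_k q_k\log(1/q_k)$. The idea is to lower bound $W_k$ by $W_k\ge H_k+4^{k}(\cdots)$ and then use the Gibbs inequality against the reference law $(3/4)4^{-(t-k)}$ restricted to the columns, so that the $\log(32/3)$ is absorbed. If that fails, I would settle for an absolute constant, which is all that the later applications need.
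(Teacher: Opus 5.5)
Your construction matches the paper's: the joint law on pairs $(t,k)$, with the ``$1+$'' treated as a fictitious arm in group $K$, the chain rule $\Ent(q)\le\Ent(T)+\Ent(k\mid T)$, a truncated-geometric conditional term, and the comparison $r_t\le\frac{32}{3}p_t$. Steps 1 and 2 are correct.

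\textbf{The gap: the marginal term.} Your pointwise bound $\log(1/r_t)\le\log(1/p_t)+\log(32/3)$ leaves an additive $\log(32/3)$. As written, you prove the entropy bound only with $C_0+\log(32/3)$ in place of the stated $C_0$. Your suggested repair, a direct Gibbs argument for $\Ent(q)$ against a geometric reference law, is vague and is not what is needed.

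\textbf{The missing step.} Apply Gibbs' inequality to $r$ itself, with reference law $p$. Since $r_t>0$ only on nonempty groups,
\[
\Ent(r)=\sum_t r_t\log\frac1{p_t}-\KL(r\Vert p)\le\sum_t r_t\log\frac1{p_t}\le\frac{32}{3}\sum_t p_t\log\frac1{p_t}=\frac{32}{3}\Ent(I).
\]
This uses only your upper comparison $r_t\le\frac{32}{3}p_t$ and $\log(1/p_t)\ge0$. The lower comparison $r_t>\frac{3}{32}p_t$ then becomes unnecessary.

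\textbf{The conditional term.} To obtain exactly $C_0$, you must also carry out the calculation rather than assert its form. Put $j=t-k$ and $s_t=\sum_{l=0}^t4^{-l}\in[1,4/3]$. Then
\[
\Ent(k\mid T=t)=\log s_t+\log4\,\E[j\mid T=t].
\]
Because $s_t\ge1$, we have $\E[j\mid T=t]\le\sum_{j\ge0}j4^{-j}=4/9$. Hence $\Ent(k\mid T)\le\log(4/3)+(4/9)\log4=C_0$. Combined with the display above, this gives the stated bound.

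Your observation that later applications need only some absolute constant is correct. The lemma as stated, however, fixes the constant, and the Gibbs step is what delivers it.
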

\begin{proof}
Put $v_t=\bigl(n_t+\1\{t=K\}\bigr)4^t$ and $s_t=\sum_{l=0}^t4^{-l}$. The extra term adds one fictitious arm at the finest scale. Directly,
\[
 W_k=\sum_{t\ge k}v_t4^{k-t},\qquad
 W=\sum_t v_ts_t,\qquad 1\le s_t\le4/3.
\]
Because $H_t\le n_t4^t<4H_t$ on every nonempty group and $4^K<4D\le4H$, the total $V=\sum_t v_t$ satisfies $H\le V<8H$. This proves the bounds on $W$; the bound on $4^K$ follows from the group with largest index.

Sample $J$ with probabilities $r_t=v_ts_t/W$. Conditional on $J=t$, sample $Z\in\{0,\ldots,t\}$ with probabilities $4^{-l}/s_t$. Then $J-Z$ has distribution $q$, and
\[
 \Ent(Z\mid J=t)=\log s_t+\log4\,\E[Z\mid J=t]
 \le\log(4/3)+(4/9)\log4=C_0.
\]
Since $n_K\ge1$, one has $v_t<8H_t$ on every nonempty group, and therefore $r_t\le(32/3)p_t$. Nonnegativity of relative entropy gives
\[
 \Ent(r)=\sum_t r_t\log(1/p_t)-\KL(r\Vert p)
 \le\frac{32}{3}\sum_t p_t\log(1/p_t).
\]
Finally, entropy cannot increase upon discarding information, so
\[
 \Ent(q)=\Ent(J-Z)\le\Ent(J,Z)\le\Ent(r)+C_0.
\]
\end{proof}

\section{A gap-entropy lower bound on the original instance}
\label{lower:section}

We prove the lower-bound direction of Theorem~\ref{thm:entropy} for
every $\delta$-correct algorithm of Section~\ref{sec:model};
Appendix~\ref{bridge:section} extends it to algorithms with arbitrary
standard Borel internal randomness and to history-dependent probability kernels. The argument compares several
labelings of the original instance with one experiment with two best arms. Every
expected number of samples in the comparison is evaluated on the original
instance.

\begin{theorem}[Lower bound]
\label{lower:theorem}
For every instance $I$ with $n\ge2$ and every $0<\delta<1/10$,
\begin{equation}
\label{lower:main}
 \cL(I,\delta)\ge H(I)\max\bigl\{2\kl(1-\delta,\delta),\Ent(I)/4\bigr\}
 \ge\frac{H(I)}5\bigl(\log(1/\delta)+\Ent(I)\bigr).
\end{equation}
\end{theorem}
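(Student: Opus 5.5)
The plan follows the overview of Section~\ref{overview:entropy}. Fix a $\delta$-correct algorithm $A$ and replace it by its symmetrization $\bar A$: draw a uniform permutation from the seed, relabel, and run $A$. Then $\bar A$ is still $\delta$-correct. Its expected cost on $I$ equals $\frac1{n!}\sum_\pi T_A(\pi I)$, so it suffices to lower-bound $T_{\bar A}(I)$. We may assume this is finite.

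By permutation equivariance, every arm of exact gap $d$ has the same expected count $t_d$ on every labeling. Set $\alpha_d=d^2t_d$. The target cost is
\[
\sum_{i\ne i_{[1]}}t_{\Delta_i}=\sum_i\Delta_i^{-2}\alpha_{\Delta_i}.
\]
Letting $\alpha_k$ be the minimum over the gaps present in $G_k$, this cost is at least $H\sum_kp_k\alpha_k$. It therefore remains to show two things: $\alpha_d\ge2\kl(1-\delta,\delta)$ for all $d$, and $4\sum_kp_k\alpha_k\ge\Ent(I)$.

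\textbf{Step 1 (two-best comparison).} Define $P_d$ and $Q_a$ as in the overview. First show that $Q_a(\text{return}\ne1)\le\delta$. To do this, lower label $1$'s competitor by $\varepsilon$, apply correctness at finite horizon $m$, let $\varepsilon\downarrow0$ using weak continuity of finite-transcript laws, and then let $m\to\infty$.

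Next, for $a\ge d$, build the latent coupling: raise label $1$ to $\mu_{[1]}$ and move a uniformly chosen gap-$a$ arm to gap $d$. Lemma~\ref{lem:gaussian-kl} then gives
\[
\KL\bigl(P_d^{(m)}\Vert Q_a^{(m)}\bigr)\le\tfrac12(\alpha_d+\alpha_a).
\]
The crucial check is that the counts appearing here are original-instance counts, because $P_d$ is a labeling of $I$. With $a=d$ the bound is $\alpha_d/2$. Data processing on the event $\{\text{return }1\}$ gives $\alpha_d/2\ge\kl(1-\delta,\delta)$ after the limit in $m$. Hence $\alpha_d\ge2\kl(1-\delta,\delta)$, and this is $\ge\frac45\log(1/\delta)$ for $\delta<0.1$, since $\kl(1-\delta,\delta)\ge(1-2\delta)\log((1-\delta)/\delta)$.

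\textbf{Step 2 (packing, the main obstacle).} We must show $Z=\sum_ke^{-4\alpha_k}<10\delta$.

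For a threshold $x$, let $a$ be the largest-gap representative with $\alpha_k\le x$. For each representative $d$ with $\alpha_d\le x$, consider the event that $\bar A$ returns label $1$ while $N_1$ lies in a window around $t_d$; Markov's inequality makes $P_d$ of this event at least about $1/2$. Transferring to $Q_a$ by the KL bound, which is at most $x$, via $\kl(p,q)\ge p\log(1/q)-\log2$, gives $Q_a(E_d)\ge e^{-2x}/4$. This is the content of Lemma~\ref{lower:windows}.

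Representatives in distinct dyadic groups have counts $t_d=\alpha_d/d^2$ separated by factors of $4$ once $\alpha$'s are comparable. I would choose windows $[t_d/c,ct_d]$ and keep every other group so the windows are disjoint. Under $Q_a$ label $1$ is symmetric with the other best arm, so returning $1$ on a window has the same $Q_a$-mass as returning the other best arm on that window. The union of these events therefore has $Q_a$-probability at most a constant times $\delta$. This bounds the count
\[
\#\{k:\alpha_k\le x\}\le C\delta e^{2x}.
\]

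Integrating, $\sum_ke^{-4\alpha_k}=\int 4e^{-4x}\#\{k:\alpha_k\le x\}\,dx\le C'\delta$, with $\alpha_k\ge L/2$ used to start the integral. I expect tuning the window widths and constants so that this gives $10\delta$ to be the delicate part. A small number of windows with possible overlap is handled by the parity split.

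\textbf{Step 3 (entropy).} Gibbs' inequality with $q_k=e^{-4\alpha_k}/Z$ gives
\[
\sum_kp_k\log(p_k/q_k)\ge0,
\]
that is, $4\sum_kp_k\alpha_k\ge\Ent(I)-\log Z\ge\Ent(I)$, since $Z<1$. Combining with Step 1 yields $\cL\ge H\max\{2\kl(1-\delta,\delta),\Ent(I)/4\}$. The final inequality follows from $\max\{u,v\}\ge\frac15(\frac54u+4v)$, using $2\kl(1-\delta,\delta)\ge\frac45L$.
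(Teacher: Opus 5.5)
\textbf{Genuine gap in Step~2: the stopping windows.} Your architecture matches the paper: symmetrization, comparison of $P_d$ with a single law $Q_a$ through the two-coordinate latent coupling, a packing bound on $Z=\sum_k e^{-4\alpha_k}$, then Gibbs. Step~2, however, lacks the ingredient that makes the packing work.

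Markov's inequality controls only the upper tail of $N_1$. Nothing prevents $N_1$ from lying far below $t_d/c$ with probability close to one while a rare heavy upper tail carries the mean. So $P_d\bigl(N_1\in[t_d/c,\,ct_d]\bigr)$ need not be bounded below.

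Disjointness also fails. Since $t_d=\alpha_d d^{-2}$ with $\alpha_d$ anywhere in $[L,x]$, windows centred at $t_d$ are not ordered by group index. Keeping every other group therefore does not make them disjoint.

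What is needed is a lower endpoint depending only on $d$, justified by a second change of measure rather than by Markov. Censor $P_d$ and $Q_d$ before the count of label $1$ reaches $c_0d^{-2}$, with $c_0=1/100$. Only label $1$ differs, so the relative entropy is at most $c_0/2$. Pinsker's inequality together with $Q_d(\tau<\infty,\widehat{\imath}\ne1)\le\delta$ then gives $P_d(\tau<\infty,\ \widehat{\imath}\ne1,\ N_1<c_0d^{-2})\le\delta+\sqrt{c_0}/2$. Hence $E_d=\{\tau<\infty,\ \widehat{\imath}\ne1,\ c_0d^{-2}\le N_1\le16t_d\}$ satisfies $P_d(E_d)>1/2$.

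On the coordinate $\log_4(N_1/c_0)$, the window for group $k$ starts in the cell $(k-1,k]$. Its length is $\log_4(1600\alpha_k)$, which varies with $\alpha_k$ and is at most $\log_4(1600x)$. A greedy selection of disjoint windows therefore keeps only a $1/\bigl(\log_4(1600x)+2\bigr)$ fraction. This gives $N(x)\le4\delta\bigl[\log_4(1600x)+2\bigr]e^{2x}$ rather than $C\delta e^{2x}$. The logarithmic loss is harmless after integrating against $4e^{-4x}$ and still yields $Z<10\delta$.

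\textbf{Two further errors.} First, you use the wrong event. Under $P_d$ label $1$ carries a gap-$d$ arm, so correctness gives $P_d(\widehat{\imath}=1)\le\delta$. The event $\{\widehat{\imath}=1\}$ therefore cannot have $P_d$-probability about $1/2$ in Step~2. In Step~1, data processing on it yields nothing: you would need $Q_d(\widehat{\imath}=1)\ge1-\delta$, which is unavailable without termination at a tie. In fact, symmetry and your perturbation lemma give $Q_d(\widehat{\imath}=1)\le\delta$.

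The right event throughout is $\{\tau<\infty,\ \widehat{\imath}\ne1\}$. It has $P_d$-probability at least $1-\delta$ and, directly by your perturbation lemma, $Q_a$-probability at most $\delta$. The symmetry argument in Step~2 then becomes unnecessary.

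Second, your closing inequality $\max\{u,v\}\ge\tfrac15\bigl(\tfrac54u+4v\bigr)=\tfrac14u+\tfrac45v$ is false, since the weights sum to $21/20$. Use instead $2\kl(1-\delta,\delta)=2(1-2\delta)\log\bigl((1-\delta)/\delta\bigr)\ge1.6(L+\log0.9)\ge L$ for $\delta<0.1$. Then $\max\{u,v\}\ge\tfrac15u+\tfrac45v\ge\tfrac15(L+\Ent(I))$.
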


\subsection{Symmetrization and finite transcript experiments}

Fix a $\delta$-correct algorithm. Randomly permute its input
labels, run the algorithm through this relabeling, and undo the
permutation on its answer. The resulting algorithm $A$ is permutation
equivariant in distribution, remains $\delta$-correct, and its
expected stopping time on $I$ is exactly the original algorithm's
permutation-averaged expected stopping time. It therefore suffices to
prove~\eqref{lower:main} for $A$ on $I$. If its expected stopping time
is infinite, there is nothing to prove. Assume henceforth that it is
finite.

Let $\mathcal D$ be the finite set of distinct positive gaps in $I$,
let $m_d$ be the multiplicity of $d\in\mathcal D$, and write
$\mu_{[1]}$ for the best mean. Permutation equivariance implies that
the expected number of samples of an arm of gap $d$ is the same on every
labeling of the target multiset. Denote this count by $t_d$, and set
\begin{equation}
\label{lower:normalized-count}
 \alpha_d=d^2t_d.
\end{equation}
In particular, conditioning a uniformly labeled target instance to give
one designated label a specified gap does not change the expected count
of any other specified arm with a given gap. We have
\begin{equation}
\label{lower:target-cost}
 \frac{\E_I\tau_A}{H}
 \ge \sum_{d\in\mathcal D}\frac{m_d d^{-2}}H\alpha_d.
\end{equation}
The inequality only discards samples from the best arm.

We use finite experiments throughout changes of distribution. A transcript
at horizon $m\in\N$ records the realization of the algorithm's internal random seed, chosen arms,
observations, and any returned answer through at most $m$ samples; after
the algorithm returns, the transcript is extended by an absorbing
symbol. The chain rule for relative entropy gives, for two fixed mean
vectors $\mu,\nu$ and the same algorithm,
\begin{equation}
\label{lower:finite-kl}
 \KL\bigl(P_\mu^{(m)}\bigm\Vert P_\nu^{(m)}\bigr)
 =\frac12\sum_{i=1}^n(\mu_i-\nu_i)^2
       \E_\mu N_i(\tau_A\wedge m).
\end{equation}
Here $N_i(\tau_A\wedge m)$ counts observations from arm $i$ in the
finite experiment. To see the identity, condition successively on each
past transcript. The algorithm's conditional distribution of its next
action is the same under the two experiments, so it contributes zero
relative entropy. An observation from arm $i$ contributes
$\KL\bigl(\mathcal N(\mu_i,1)\bigm\Vert\mathcal N(\nu_i,1)\bigr)
=(\mu_i-\nu_i)^2/2$. Summing these conditional contributions proves
the identity, including arbitrary dependence on the internal random seed. The same
argument applies when the experiment is censored by an additional
stopping rule. We never require the algorithm to terminate on an
instance with tied best arms.

\subsection{Distributions on the original instance and instances with two best arms}

For $d\in\mathcal D$, let $P_d$ be the law obtained by placing a gap-$d$
arm at label $1$, uniformly assigning the remaining target multiset to
labels $2,\ldots,n$, and running $A$. This label is designated only in
the analysis and is not identified to the algorithm. If $N_1$ is the
number of samples from label $1$ before a return, then
\begin{equation}
\label{lower:source-facts}
 \E_{P_d}N_1=t_d,
 \qquad
 P_d\bigl(\tau_A<\infty,\widehat{\imath}\ne1\bigr)\ge1-\delta.
\end{equation}
For $a\in\mathcal D$, define $Q_a$ by assigning mean $\mu_{[1]}$ to
label $1$ and uniformly placing, at the remaining labels, the multiset
of $I$ with one gap-$a$ arm removed. This comparison instance has
exactly two arms of mean $\mu_{[1]}$.

\begin{lemma}[Output probabilities when two arms are best]
\label{lower:null}
For every $a\in\mathcal D$,
\begin{equation}
\label{lower:null-bound}
 Q_a\bigl(\tau_A<\infty,\widehat{\imath}\ne1\bigr)\le\delta.
\end{equation}
\end{lemma}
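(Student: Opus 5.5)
We prove \eqref{lower:null-bound} by perturbing the tie away, using correctness of $A$ on the perturbed instance, and then letting the perturbation vanish at a finite horizon. The overview already indicates the route.

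\textbf{Perturbation.} Condition first on the uniform labeling used to define $Q_a$. Then $Q_a$ is a mixture over labelings $\sigma$ of the laws of $A$ run on fixed mean vectors $\nu^\sigma$. Each $\nu^\sigma$ has label $1$ with mean $\mu_{[1]}$ and exactly one other label $j(\sigma)$ with mean $\mu_{[1]}$. By linearity of the mixture, it suffices to prove
\[
P_{\nu^\sigma}\bigl(\tau_A<\infty,\widehat{\imath}\ne1\bigr)\le\delta
\]
for each fixed $\sigma$. Fix $\sigma$ and, for $\varepsilon>0$ small, let $\nu^\varepsilon$ agree with $\nu^\sigma$ except that label $j(\sigma)$ has mean $\mu_{[1]}-\varepsilon$. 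Lowering rather than raising keeps every mean in $[0,1]$, even when $\mu_{[1]}=1$. For $\varepsilon$ smaller than $\mu_{[1]}-\max_{i\ne1,j}\nu^\sigma_i$ (or any $\varepsilon\le\mu_{[1]}$ if no such arms exist), the instance $\nu^\varepsilon$ is admissible with unique best arm at label $1$. Since $A$ is $\delta$-correct, $P_{\nu^\varepsilon}(\tau_A<\infty,\widehat{\imath}\ne1)\le\delta$. In particular, $P_{\nu^\varepsilon}(\tau_A\le m,\widehat{\imath}\ne1)\le\delta$ for every horizon $m$.

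\textbf{Limit at a fixed horizon.} Fix $m$. The event $\{\tau_A\le m,\widehat{\imath}\ne1\}$ is measurable with respect to the horizon-$m$ transcript. By \eqref{lower:finite-kl},
\[
\KL\bigl(P_{\nu^\sigma}^{(m)}\bigm\Vert P_{\nu^\varepsilon}^{(m)}\bigr)
=\tfrac12\varepsilon^2\,\E_{\nu^\sigma}N_{j(\sigma)}(\tau_A\wedge m)\le\varepsilon^2m/2 .
\]
This bound requires no finiteness of $\tau_A$ under the tied instance, because the count is censored at $m$. Pinsker's inequality then gives
\[
P_{\nu^\sigma}(\tau_A\le m,\widehat{\imath}\ne1)\le\delta+\varepsilon\sqrt{m}/2 .
\]
Letting $\varepsilon\downarrow0$ with $m$ fixed yields $P_{\nu^\sigma}(\tau_A\le m,\widehat{\imath}\ne1)\le\delta$.

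\textbf{Infinite time.} The events $\{\tau_A\le m,\widehat{\imath}\ne1\}$ increase in $m$ to $\{\tau_A<\infty,\widehat{\imath}\ne1\}$, so continuity from below gives the bound under $P_{\nu^\sigma}$. Averaging over $\sigma$ gives \eqref{lower:null-bound}.

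The only delicate point is the order of limits. Taking $\varepsilon\to0$ directly at infinite horizon would need control of $\E N_{j}$ on the tied instance, where termination is not assumed. The finite-horizon censoring circumvents this, so I expect no real obstacle beyond checking admissibility of the perturbed instance and measurability of the return event in the horizon-$m$ transcript. Internal randomness is harmless, since \eqref{lower:finite-kl} already includes the seed in the transcript.
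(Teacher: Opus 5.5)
Your proposal is correct and follows the paper's proof essentially step for step. You fix a labeling of $Q_a$, lower the other best arm by $\varepsilon$ so that label $1$ is uniquely best, apply $\delta$-correctness at a finite horizon $m$, use the $m\varepsilon^2/2$ KL bound with Pinsker to let $\varepsilon\downarrow0$, and finally average over labelings and let $m\to\infty$. (The only restriction you actually need is $0<\varepsilon\le\mu_{[1]}$, which is available because $\mu_{[1]}\ge a>0$. The remaining arms are already strictly below $\mu_{[1]}$.)
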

\begin{proof}
Fix a labeled mean vector occurring in $Q_a$. Lower its other
mean-$\mu_{[1]}$ arm to $\mu_{[1]}-\varepsilon$, leaving label $1$
unchanged. Because $I$ has at least two arms, a unique best arm, and means
in $[0,1]$, $\mu_{[1]}>0$. Thus every sufficiently small
$\varepsilon>0$ gives a valid instance in which label $1$ is uniquely
best. This perturbation is valid also at the endpoint $\mu_{[1]}=1$.
Correctness bounds the probability of returning another label
by horizon $m$ by $\delta$ on the perturbed instance.

By~\eqref{lower:finite-kl}, the relative entropy between the unperturbed
and perturbed horizon-$m$ experiments is at most $m\varepsilon^2/2$.
Pinsker's inequality therefore makes their total variation tend to zero
as $\varepsilon\downarrow0$. The same error bound holds on the fixed
instance with two best arms at every finite horizon. Average over the finitely many
background labelings and then let $m\to\infty$. The finite-return
events increase to the event in~\eqref{lower:null-bound}, proving the
claim without a termination assumption on $Q_a$.
\end{proof}

\begin{lemma}[Changing two arm means]
\label{lower:coupling}
For every finite horizon $m$ and $a,d\in\mathcal D$ with $a\ne d$,
\begin{equation}
\label{lower:two-coordinate}
 \KL\bigl(P_d^{(m)}\bigm\Vert Q_a^{(m)}\bigr)
 \le\frac12\bigl(d^2t_d+(a-d)^2t_a\bigr).
\end{equation}
For $a=d$, the upper bound is $\alpha_d/2$. Consequently, if $a\ge d$,
\begin{equation}
\label{lower:coarser-comparison}
 \KL\bigl(P_d^{(m)}\bigm\Vert Q_a^{(m)}\bigr)
 \le\frac12(\alpha_d+\alpha_a).
\end{equation}
\end{lemma}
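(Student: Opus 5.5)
We realize both $P_d$ and $Q_a$ as mixtures over a common latent variable. Then we apply Lemma~\ref{lem:gaussian-kl} conditionally on that variable, average with the common prior, and discard the latent variable.

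\textbf{The coupling.} Fix $a\ne d$ and let $\mathcal M$ be the target multiset with one gap-$d$ arm removed; it contains $m_a$ gap-$a$ arms, or $m_a$ if $a\neq d$ as here. The latent variable is a uniformly random assignment $\sigma$ of $\mathcal M$ to labels $2,\ldots,n$, together with a uniformly chosen label $J$ among the $m_a$ labels that carry gap $a$ under $\sigma$.
\begin{itemize}
\item Under the $P$-experiment, label $1$ has mean $\mu_{[1]}-d$ and labels $2,\ldots,n$ carry $\sigma$. Since $J$ is independent of the observations given $\sigma$, the marginal transcript law is exactly $P_d$.
\item Under the $Q$-experiment, label $1$ has mean $\mu_{[1]}$ and label $J$ is changed from mean $\mu_{[1]}-a$ to $\mu_{[1]}-d$. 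The background then carries $\mathcal M$ with one gap-$a$ arm replaced by a gap-$d$ arm, which is the target multiset minus one gap-$a$ arm.
\end{itemize}

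The step needing care is checking that this background is uniformly labeled. The map $(\sigma,J)\mapsto$ (modified assignment) sends the uniform product measure to the uniform distribution over arrangements of the new multiset. This holds because each arrangement of the new multiset has the same number of preimages, namely the number of its gap-$d$ positions, one of which must be $J$, counted against the fixed multiplicity structure. So the $Q$ marginal is $Q_a$.

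\textbf{The KL bound.} Conditional on $(\sigma,J)$, the two mean vectors differ only at label $1$, by $d$, and at label $J$, by $a-d$ in absolute value. Lemma~\ref{lem:gaussian-kl} gives
\[
\tfrac12\bigl[d^2\E N_1^{(m)}+(a-d)^2\E N_J^{(m)}\bigr],
\]
with expectations under the $P$-experiment given $(\sigma,J)$. We average over the prior and use that the latent variable is common. Since $N^{(m)}\le N(\tau)$, the averaged first count is at most $\E_{P_d}N_1=t_d$ by \eqref{lower:source-facts}.

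For the second count, averaging over $J$ uniform among the gap-$a$ labels of a uniform labeling of $\mathcal M$ gives the expected count of a designated gap-$a$ arm in the uniformly labeled target conditioned on label $1$ having gap $d$. By the remark after \eqref{lower:normalized-count} this equals $t_a$. This yields \eqref{lower:two-coordinate}. Data processing, which discards $(\sigma,J)$, is already covered by the mixture clause of Lemma~\ref{lem:gaussian-kl}.

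\textbf{The cases $a=d$ and $a\ge d$.} For $a=d$, no second change is needed: the background is already the target minus one gap-$d$ arm, uniformly labeled. Only label $1$ differs, so the bound is $\tfrac12 d^2t_d=\alpha_d/2$.

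For \eqref{lower:coarser-comparison} with $a>d$, we have $0<a-d<a$. Hence $(a-d)^2t_a\le a^2t_a=\alpha_a$ and $d^2t_d=\alpha_d$.

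I expect the main obstacle to be the multiset-uniformity verification for the $Q$ marginal, since it must hold when gaps have multiplicities. The counting argument above should settle it.
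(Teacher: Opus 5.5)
Your proposal is correct and follows the paper's proof essentially step for step. It uses the same latent pair (background assignment together with a uniformly selected gap-$a$ label), the same fiber count showing that each arrangement of the target-minus-gap-$a$ multiset has $m_d$ equally likely preimages (the paper phrases this via $M_Q/M_P=m_a/m_d$), and the same conditional application of Lemma~\ref{lem:gaussian-kl} followed by averaging and discarding the latent variables. The only step left implicit is that for $a=d$ the bound $\alpha_d/2$ already gives \eqref{lower:coarser-comparison}, which is immediate since $\alpha_d\ge0$.
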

\begin{proof}
Suppose first that $a\ne d$. Draw a labeled assignment from the source
experiment $P_d$, and independently of all observations and of the
algorithm's internal random seed, choose uniformly one of its $m_a$ gap-$a$ arms. Raise label
$1$ from $\mu_{[1]}-d$ to $\mu_{[1]}$, and change the selected arm
from $\mu_{[1]}-a$ to $\mu_{[1]}-d$. The resulting background
multiset is the original target multiset with one gap-$a$ arm removed.

We verify its uniform labeling, since a common comparison law is
essential. Let $M_P$ and $M_Q$ count the distinct background labelings
in $P_d$ and $Q_a$. Their multiplicities give
$M_Q/M_P=m_a/m_d$. Any fixed resulting background assignment has
exactly $m_d$ preimages among pairs of a source assignment and a
selected label:
choose which of its $m_d$ gap-$d$ labels was changed from gap $a$.
Each pair has probability $1/(M_Pm_a)$, so this background has
probability $m_d/(M_Pm_a)=1/M_Q$. Thus the induced law is exactly
$Q_a$, including all repeated-gap cases.

Keep the source assignment and selected label as latent variables with
the same prior in both experiments. Conditional on them, the means
differ at precisely the two described coordinates. Under the source
law, their expected numbers of samples are $t_d$ and $t_a$, respectively:
the first follows from symmetry, and the second follows from the same
symmetry and the independent uniform selection. More explicitly, for
every fixed source assignment and every eligible selected label, the
source transcript kernel is just $A$ on that assignment, and its
expected count at that label is $t_a$. The hidden selection does not
change this kernel. In the comparison experiment, each latent pair
likewise uses $A$ on its mapped mean vector; pairs with the same mapped
vector give the same conditional transcript law. Apply
\eqref{lower:finite-kl} conditionally, average over the common latent
prior, and bound truncated counts by their full source expectations.
This gives the right-hand side of~\eqref{lower:two-coordinate} for the
joint laws including the latent variables. Discarding latent variables
cannot increase relative entropy, proving the asserted marginal
inequality. Neither latent variable is revealed to the algorithm.

When $a=d$, raising label $1$ alone already induces $Q_d$, and its KL divergence is at most $d^2t_d/2$. Finally, if $a\ge d$, then
$(a-d)^2\le a^2$, which proves~\eqref{lower:coarser-comparison}.
All numbers of samples used here belong to labelings of $I$.
\end{proof}

The $a=d$ case, binary data processing for the event of returning a label
other than $1$ by horizon $m$, and then the limit $m\to\infty$ yield
\begin{equation}
\label{lower:baseline}
 \alpha_d\ge2\kl(1-\delta,\delta)
 \ge\log(1/\delta)>1.
\end{equation}
Indeed, let $F_m=\bigl\{\tau_A\le m,\ \widehat{\imath}\ne1\bigr\}$ be the event of
returning a label other than $1$ by horizon $m$, and put
$p_m=P_d^{(m)}(F_m)$ and $q_m=Q_d^{(m)}(F_m)$. Binary data processing
gives $\kl(p_m,q_m)\le\alpha_d/2$ at every finite horizon. By
Lemma~\ref{lower:null}, $q_m\le\delta$, and by
\eqref{lower:source-facts}, $p_m$ increases to
$P_d\bigl(\tau_A<\infty,\widehat{\imath}\ne1\bigr)\ge1-\delta>\delta$. Fix
$\varepsilon>0$ with $1-\delta-\varepsilon>\delta$. For all
sufficiently large $m$, $p_m\ge1-\delta-\varepsilon>\delta\ge q_m$.
Since $q\mapsto\kl(p,q)$ is decreasing on $(0,p]$, with
$\kl(p,0)=+\infty$ for $p>0$, and
$p\mapsto\kl(p,\delta)$ is increasing on $[\delta,1]$,
\[
 \kl(1-\delta-\varepsilon,\delta)\le\kl(p_m,\delta)
 \le\kl(p_m,q_m)\le\alpha_d/2
\]
for all such $m$. Letting $\varepsilon\downarrow0$ and using the
continuity of $\kl(\cdot,\delta)$ on $[0,1)$ gives
$\kl(1-\delta,\delta)\le\alpha_d/2$.
For the elementary second inequality, write
$\kl(1-\delta,\delta)=(1-2\delta)
\log\bigl((1-\delta)/\delta\bigr)$. If $L=\log(1/\delta)\ge\log10$,
then $2\kl(1-\delta,\delta)\ge1.6(L+\log0.9)\ge L$.

\subsection{Disjoint stopping events under a common comparison distribution}

Set $c_0=1/100$ and define events on returned transcripts by
\begin{equation}
\label{lower:window}
 E_d=\bigl\{\tau_A<\infty,\ \widehat{\imath}\ne1,\
          c_0d^{-2}\le N_1\le16t_d\bigr\}.
\end{equation}
These are transcript events whose endpoints depend only on the fixed
target and the algorithm, not on the law under which they are evaluated.

\begin{lemma}[Probabilities of the stopping events]
\label{lower:windows}
For every $d\in\mathcal D$, $P_d(E_d)>1/2$. Whenever $a\ge d$,
\begin{equation}
\label{lower:null-window}
 Q_a(E_d)\ge\frac14\exp(-\alpha_d-\alpha_a).
\end{equation}
\end{lemma}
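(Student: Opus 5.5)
The first claim is a union of three small exceptional events under $P_d$, and the second is a change of measure from $P_d$ to $Q_a$ restricted to the event $E_d$. Throughout, the events are compared at a finite horizon $m$, and we let $m\to\infty$ only at the end, as in the derivation of~\eqref{lower:baseline}.

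\textbf{First claim.} Under $P_d$, the complement of $E_d$ is contained in the union of three events. The first is failure to return some label other than $1$, which has probability at most $\delta$ by~\eqref{lower:source-facts}. The second is $N_1>16t_d$. Since $\E_{P_d}N_1=t_d$, Markov's inequality bounds its probability by $1/16$. The third is $\{\tau_A<\infty,\widehat{\imath}\ne1,N_1<c_0d^{-2}\}$. For this event we use the $a=d$ comparison: raising label $1$ to $\mu_{[1]}$ and stopping the experiment once $N_1$ reaches $c_0d^{-2}$. Lemma~\ref{lower:gaussian-kl}, applied with this additional stopping rule as allowed after~\eqref{lower:finite-kl}, gives a relative entropy of at most $\tfrac12d^2\cdot c_0d^{-2}=c_0/2$. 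The event is determined by the censored transcript, since it requires returning before label $1$ is sampled $c_0d^{-2}$ times. Its probability under $Q_d$ is at most $\delta$ by Lemma~\ref{lower:null}. Pinsker's inequality then bounds its $P_d$-probability by $\delta+\sqrt{c_0/4}=\delta+1/20$. The total exceptional probability is at most $2\delta+1/16+1/20<1/2$ for $\delta<1/10$. Hence $P_d(E_d)>1/2$.

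\textbf{Second claim.} Work with the horizon-$m$ versions $E_d^{(m)}=E_d\cap\{\tau_A\le m\}$. For large $m$ these satisfy $P_d(E_d^{(m)})>1/2$, by monotone convergence from the first claim. By Lemma~\ref{lower:coupling}, $\KL(P_d^{(m)}\Vert Q_a^{(m)})\le\tfrac12(\alpha_d+\alpha_a)=:\kappa$ whenever $a\ge d$. One option is the binary bound~\eqref{eq:binary-event-bound}:
\[
\kl(p,q)\ge p\log(1/q)-\log2 .
\]
With $p>1/2$ this yields $q\ge\tfrac12e^{-2\kappa}$ (up to the $\log 2$ slack), that is, $q\ge\tfrac14\exp(-\alpha_d-\alpha_a)$ after accounting for the constants. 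More carefully, $\kl(p,q)\le\kappa$ together with $p\ge1/2$ gives $\tfrac12\log(1/q)\le\kappa+\log2$, so $q\ge\tfrac14e^{-2\kappa}=\tfrac14e^{-(\alpha_d+\alpha_a)}$, which is exactly~\eqref{lower:null-window}. Finally, $Q_a(E_d)\ge Q_a(E_d^{(m)})$ for every $m$, which completes the argument.

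\textbf{Main obstacle.} The delicate step is the lower window $N_1\ge c_0d^{-2}$. Returning a label other than $1$ quickly must be shown to be unlikely under $P_d$, even though $\alpha_d$ could be huge, so the comparison has to be censored at $c_0d^{-2}$ samples of label $1$. This requires checking that the censoring time is a stopping time of the transcript and that $\{N_1<c_0d^{-2}\}$ together with the return is measurable with respect to the censored transcript. The upper window costs only Markov's inequality, and the rest is the routine finite-horizon limit.
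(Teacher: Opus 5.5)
Your proof is correct and follows the paper's argument essentially step for step: Markov's inequality handles the upper window, a comparison of $P_d$ with $Q_d$ censored on the count of label~$1$ (combined with Pinsker's inequality and Lemma~\ref{lower:null}) handles the lower window, and the bound $\kl(p,q)\ge p\log(1/q)-\log2$ at finite horizons gives~\eqref{lower:null-window}. The one point to state precisely is where the censoring happens: it must occur immediately before a request that would raise $N_1$ to $\lceil c_0d^{-2}\rceil$, so that the censored count is at most $\lceil c_0d^{-2}\rceil-1<c_0d^{-2}$ and the divergence bound $c_0/2$ holds. Stopping after that sample would only give $\tfrac12d^2\lceil c_0d^{-2}\rceil$, which equals $d^2/2$ when $c_0d^{-2}<1$ and would spoil the Pinsker estimate.
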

\begin{proof}
By~\eqref{lower:baseline}, $\alpha_d>1$, so $t_d=\alpha_dd^{-2}>0$ and
the interval $\bigl[c_0d^{-2},16t_d\bigr]$ is nonempty; in particular Markov's
inequality applies and gives $P_d(N_1>16t_d)\le1/16$.
For the lower endpoint, compare $P_d$ with $Q_d$ and censor both
experiments immediately before a request that would take the count of
label $1$ above
$\ceil[\big]{c_0d^{-2}}-1$, the largest integer strictly below $c_0d^{-2}$.
Also impose a total-sample horizon $m$. Every return by this horizon
with $N_1<c_0d^{-2}$ is preserved by this censoring. The KL divergence is at
most $c_0/2$, since only label $1$ changes. Pinsker's inequality gives
total variation at most $\sqrt{c_0}/2=0.05$. The comparison probability
of such an early return with answer different from $1$ is at most
$\delta$ by Lemma~\ref{lower:null}. Letting $m\to\infty$ gives
\[
 P_d\bigl(\tau_A<\infty,\widehat{\imath}\ne1,N_1<c_0d^{-2}\bigr)
 \le\delta+0.05.
\]
Together with~\eqref{lower:source-facts}, this implies
\[
 P_d(E_d)\ge1-2\delta-0.05-1/16>1/2.
\]

Apply~\eqref{lower:coarser-comparison} and binary data processing to
$E_d\cap\{\tau_A\le m\}$, and take the increasing-event limit. If
$p=P_d(E_d)>1/2$ and $q=Q_a(E_d)$, then
\[
 \frac12(\alpha_d+\alpha_a)
 \ge\kl(p,q)\ge p\log(1/q)-\log2.
\]
The last inequality follows by dropping the nonnegative term
$(1-p)\log\bigl(1/(1-q)\bigr)$ and using that binary entropy is at most $\log2$.
Since $p>1/2$, rearranging proves~\eqref{lower:null-window}. Finite
transcript bounds justify this step even if $A$ never returns on a set
of positive $Q_a$-probability.
\end{proof}

In each nonempty group choose an exact gap $d_k$ minimizing
$\alpha_d$, and abbreviate $\alpha_k=\alpha_{d_k}$. For $x\ge1$ let
\[
 \mathcal C_x=\{k:\alpha_k\le x\},\qquad N(x)=|\mathcal C_x|.
\]
If $\mathcal C_x$ is nonempty, choose its largest-gap representative
$a$. Then $a\ge d_k$ and $\alpha_a\le x$ for every
$k\in\mathcal C_x$. Lemma~\ref{lower:windows} gives the simultaneous
bounds under this \emph{single} law $Q_a$,
\begin{equation}
\label{lower:cheap-window}
 Q_a(E_{d_k})\ge\tfrac14e^{-2x},\qquad k\in\mathcal C_x.
\end{equation}

On the logarithmic sample-count coordinate $\log_4(N_1/c_0)$, the
event for $d_k$ corresponds to the interval
\[
 [s_k,s_k+b_k],\qquad
 s_k=\log_4\bigl(d_k^{-2}\bigr)\in(k-1,k],\qquad
 b_k=\log_4(1600\alpha_k).
\]
All intervals with $\alpha_k\le x$ have length at most $b(x)=\log_4(1600x)$.
There is at most one starting point in each cell $(k-1,k]$, so any
interval of length $u$ contains at most $u+2$ starting points. Greedily
select the remaining interval with the smallest starting point, then
delete every remaining interval whose start is at most its right
endpoint. Each selection deletes at most $b(x)+2$ intervals, and
the selected closed intervals are pairwise disjoint. We obtain at least
$N(x)/\bigl(b(x)+2\bigr)$ disjoint events, all contained in
$\bigl\{\tau_A<\infty,\widehat{\imath}\ne1\bigr\}$. Equations
\eqref{lower:null-bound} and~\eqref{lower:cheap-window} imply
\begin{equation}
\label{lower:counting}
 N(x)\le4\delta\,\bigl[\log_4(1600x)+2\bigr]e^{2x}.
\end{equation}
If $\mathcal C_x$ is empty the inequality is immediate. The reference
gap $a$ can depend on $x$: each application proves a deterministic
inequality for the same counting function $N$.

\subsection{From scale counting to entropy}

By~\eqref{lower:baseline}, $N(x)=0$ for $x<1$. Tonelli's theorem and
\eqref{lower:counting} yield
\begin{align}
\label{lower:kraft}
 Z:=\sum_k e^{-4\alpha_k}
 &=\int_0^\infty4e^{-4x}N(x)\,dx\notag\\
 &\le16\delta\int_1^\infty
       \bigl[\log_4(1600x)+2\bigr]e^{-2x}\,dx
 <10\delta<1.
\end{align}
For an explicit verification of the inessential numerical constant,
use $\log x\le x$ for $x\ge1$. The integral multiplied by $16$ is at
most
\[
 16e^{-2}\biggl[
   \frac12\Bigl(\frac{\log1600}{\log4}+2\Bigr)
   +\frac3{4\log4}\biggr]<10.
\]
Let $p_k=H_k/H$ on the nonempty groups. Nonnegativity of relative
entropy between $(p_k)$ and $\bigl(e^{-4\alpha_k}/Z\bigr)$ gives
\begin{equation}
\label{lower:gibbs}
 4\sum_kp_k\alpha_k
 \ge\Ent(I)-\log Z\ge\Ent(I).
\end{equation}
The minimizing representatives and~\eqref{lower:target-cost} imply
\[
 \frac{\E_I\tau_A}H
 \ge\sum_{d\in\mathcal D}\frac{m_d d^{-2}}H\alpha_d
 \ge\sum_kp_k\alpha_k.
\]
By~\eqref{lower:baseline} every $\alpha_k$ is at least
$2\kl(1-\delta,\delta)\ge L=\log(1/\delta)$, and so is this last sum,
since the $p_k$ are probability weights. Hence
\[
 \frac{\E_I\tau_A}H
 \ge\max\bigl\{2\kl(1-\delta,\delta),\Ent(I)/4\bigr\}
 \ge\max\bigl\{L,\Ent(I)/4\bigr\}
 \ge\frac{L+\Ent(I)}5,
\]
where the last step bounds the maximum below by the weighted mean
$\tfrac15L+\tfrac45\cdot\tfrac{\Ent(I)}4$.
Undoing the initial symmetrization and taking the infimum over $\delta$-correct algorithms proves Theorem~\ref{lower:theorem}.
The argument covers repeated suboptimal means, arbitrary positive
gaps, the full range $0<\delta<1/10$, and the endpoint best mean $1$.

\section{An instance-wise upper bound}
\label{pointwise:section}

The upper-bound direction of Theorem~\ref{thm:entropy} concerns a
pointwise infimum over algorithms. We may therefore choose an algorithm
whose fixed parameters depend on the target instance, provided it is
correct on every admissible input. The following construction makes this
quantifier distinction explicit.

\begin{proposition}[Instance-wise upper bound]
\label{pointwise:theorem}
For every target instance $I$ and every $0<\delta<1/10$, there exists a
$\delta$-correct algorithm $A_{I,\delta}$ with finite expected
stopping time on every instance with a unique best arm such that, uniformly over
all labelings $\pi I$ of the target,
\begin{equation}
\label{pointwise:main}
 \E_{\pi I}\tau_{A_{I,\delta}}
 \le C H(I)\bigl(\log(1/\delta)+\Ent(I)\bigr).
\end{equation}
Consequently $\cL(I,\delta)$ is bounded above by the right-hand side.
\end{proposition}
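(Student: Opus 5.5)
The plan follows the sketch in Section~\ref{overview:entropy}. The algorithm $A_{I,\delta}$ has three parts: a finite ``run'' whose parameters depend only on the group sizes $n_k$ of the target, fresh repetitions of that run with geometrically shrinking error budgets, and an interleaved fallback that guarantees termination. Fix $n$, $K$, the weights $W_k$ and $q_k=W_k/W$ of \eqref{eq:smoothed-weights}, and $N_k$. One run with error parameter $\varepsilon$ proceeds through scales $k=0,1,\ldots,K$ with tolerance $d_k=2^{-k}$. At scale $k$ it repeatedly calls the raw threshold procedure of Lemma~\ref{lem:elimination}. The $r$th call at that scale gets confidence $\alpha_{k,r}=\varepsilon q_k2^{-(r+2)}$. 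Calls continue as long as the active size exceeds $4N_k$, and each call must at least halve the size; if a required halving fails, the run aborts. At the last scale $K$ (where $N_K=1$), the run finishes with calls governed by guarantee~3 and returns the retained set only if it is a singleton; otherwise it aborts. Aborts never produce an answer. An incorrect answer therefore requires the best arm to be eliminated in some call, which by guarantee~1 has probability at most $\sum_{k,r}\alpha_{k,r}\le\varepsilon$. This bound holds on every input with $n$ arms, because guarantee~1 needs nothing about the gaps.

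For the deterministic cost, note that the prescribed halvings cap the active size. Entering scale $k$ the size is at most $4N_{k-1}$ (or $n$ at $k=0$), and it halves along the calls. By Lemma~\ref{lem:elimination}, scale $k$ therefore costs
\[
O\Bigl(\textstyle\sum_r 4^k\,4N_{k-1}2^{-r}\log\bigl(128/\alpha_{k,r}\bigr)\Bigr)=O\bigl(W_k\log(1/(\varepsilon q_k))\bigr).
\]
The $r$ in $\log 2^{r}$ is absorbed by the geometric factor $2^{-r}$. Summing over scales and applying Lemma~\ref{lem:smoothing} ($W<\tfrac{32}{3}H$ and $\sum_k q_k\log(1/q_k)=O(1+\Ent(I))$) gives the bound \eqref{overview:pointwise-cost}, $B_I(\varepsilon)=O(H[\log(1/\varepsilon)+\Ent(I)+1])$. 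This holds on every trajectory, since aborts only shorten a run. Since $\delta<0.1$ we have $H\log(1/\varepsilon)\ge H\log 10$, so the additive $H$ is absorbed.

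Success on a labeling of the target has a subtle point, which I expect to be the main obstacle. We need the good events of guarantee~2 at each scale to hold, and these require that at most $N$ active arms have gap below $d_k$ relative to the \emph{best} arm. On the target this count is exactly $N_k$ whenever the best arm is retained, so guarantee~2 applies with $N=N_k$ as long as the size exceeds $4N_k$. At scale $K$, every other arm has gap at least $d_K$ and at most four arms remain, so guarantee~3 applies and at most two good calls finish. I must also check that the schedule stays consistent: each scale's stopping size is at most $4N_k$, which is at most the entry cap for the next scale. The union bound over all good-event failures is again at most $\sum\alpha_{k,r}\le\varepsilon$. So a run succeeds with probability at least $1-\varepsilon$.

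Finally, run attempts $t=0,1,\ldots$ with $\varepsilon_t=\delta2^{-(t+2)}$ and fresh samples. These are interleaved step by step with an independent $\delta/2$-correct fallback: for instance, a successive-elimination rule with anytime confidence intervals, which has finite expected stopping time whenever the best arm is unique. Inputs with a different arm count use the fallback alone. Output is whichever answer comes first. The total error is at most $\delta/2+\sum_t\varepsilon_t\le\delta$. Expected time is finite everywhere because it is at most twice the fallback's. On $\pi I$, attempt $t$ is reached with probability at most $\prod_{s<t}\varepsilon_s\le 2^{-t}$, and interleaving at most doubles the cost. Hence
\[
\E\tau\le2\sum_t2^{-t}B_I(\varepsilon_t)=O\bigl(H[\log(1/\delta)+\Ent(I)]\bigr),
\]
because $\log(1/\varepsilon_t)=L+O(t)$ and $\sum_t t2^{-t}<\infty$. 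This is uniform over labelings, which gives \eqref{pointwise:main}, and taking this algorithm in \eqref{eq:benchmark} bounds $\cL(I,\delta)$.
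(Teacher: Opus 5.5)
Your proposal is correct and follows essentially the same construction and argument as the paper. The shared elements are the target-dependent run with confidences $\alpha_{k,r}=(\varepsilon/2)q_k2^{-(r+1)}$, aborts on failed halvings that give a trajectory-wise cost bound via Lemma~\ref{lem:smoothing}, repeated runs with $\varepsilon_t=\delta2^{-(t+2)}$, and an interleaved $\delta/2$-correct fallback. The only cosmetic differences are that the paper also aborts when $R_0$ is empty, and it ends with a single raw call at confidence $\varepsilon/2$: guarantee~3 already makes $R_0$ the singleton best arm in one call, and the ``at most two calls'' remark refers to the enlarged set $R$.
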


\subsection{Parameters and one finite run}

Fix a target instance; it has $n\ge2$ arms by definition. Its group sizes $n_k=|G_k|$, the largest index $K$ of a nonempty group, and the following quantities are fixed parameters of the construction:
\begin{align}
\label{pointwise:weights}
 d_k&=2^{-k},&
 N_k&=1+\sum_{j>k}n_j,
 &W_k&=4^k\biggl(1+\sum_{j\ge k}n_j\biggr),\notag\\
 W&=\sum_{k=0}^K W_k,&q_k&=W_k/W,
 &&0\le k\le K.
\end{align}
Here $N_k$ is exactly the number of original arms whose gaps are below
$d_k$, including $i_{[1]}$. In particular,
\[
 W_0=n,\qquad W_k=4^kN_{k-1}\ (k\ge1),\qquad N_K=1.
\]
Lemma~\ref{lem:smoothing} gives
\begin{equation}
\label{pointwise:smoothing}
 H\le W<\tfrac{32}{3}H,
 \qquad \Ent(q)=O\bigl(\Ent(I)+1\bigr),
 \qquad 4^K\le4D\le4H.
\end{equation}
Only the group sizes and the quantities derived from them are supplied as constants
to this construction; its correctness will not assume that an input
has these group sizes.

For a run error parameter $0<\varepsilon<1/10$, set
$\eta_k=(\varepsilon/2)q_k$. Start with all input arms active. At
scales $k=0,\ldots,K$, repeatedly apply the raw threshold
elimination procedure of Lemma~\ref{lem:elimination} while
$|S|>4N_k$. On its $r$th invocation at this scale, starting with $r=0$,
use tolerance $d_k$ and confidence
\begin{equation}
\label{pointwise:allocation}
 \alpha_{k,r}=\eta_k2^{-(r+1)}.
\end{equation}
If the returned set $R_0$ is empty or has size greater than
$\ceil[\big]{|S|/2}$, abort the run. Otherwise replace $S$ by $R_0$ and
continue. After all scales, the active set has at most $4N_K=4$ arms.
If it is a singleton, return its member. Otherwise make one final
raw threshold-elimination call at tolerance $d_K$ and confidence $\varepsilon/2$;
return its arm only if $R_0$ is a singleton, and abort otherwise.
Every call uses fresh samples; the construction uses no internal randomness, and ties are broken by a fixed rule.

A run is always finite. Every accepted loop invocation reduces
the active size geometrically, there are finitely many scales, and an
invocation violating the size condition aborts immediately.

\begin{lemma}[Guarantees for one run]
\label{pointwise:attempt}
The preceding run has the following properties.
\begin{enumerate}
 \item On every input with $n$ arms and a unique best arm, its probability of an
 incorrect return is at most $\varepsilon$.
 \item On every labeling of the target $I$, it returns the correct arm
 with probability at least $1-\varepsilon$.
 \item On every input with $n$ arms and every trajectory, its sample count is at most
 \begin{equation}
 \label{pointwise:attempt-budget}
 B_I(\varepsilon)
 =C H(I)\bigl(\log(1/\varepsilon)+\Ent(I)\bigr)
 \end{equation}
 for a sufficiently large universal constant $C$.
\end{enumerate}
\end{lemma}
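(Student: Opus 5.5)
I will prove the three parts separately, using Lemma~\ref{lem:elimination} for each call and Lemma~\ref{lem:smoothing} through \eqref{pointwise:smoothing} for the cost.

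\emph{Correctness on arbitrary inputs.} Fix an input with $n$ arms and a unique best arm; its gap structure need not match the target. An incorrect return happens in one of two ways. Either some loop call removes the best arm while the best arm is still active, or the final call returns a singleton $R_0$ not containing the best arm while the best is active. The final singleton without a call also requires an earlier removal of the best arm. By part~1 of Lemma~\ref{lem:elimination}, each loop call removes an active unique best arm with probability at most $\alpha_{k,r}$. This bound holds conditionally on the past, because the samples are fresh and the active set is adaptively chosen. Since a returned $R_0$ contains the best whenever that event fails, an incorrect singleton in the final call also requires the best to be eliminated, with probability at most $\varepsilon/2$. A union bound over all possible calls then gives at most
\[
\sum_k\sum_{r\ge0}\eta_k2^{-(r+1)}+\varepsilon/2\le\sum_k\eta_k+\varepsilon/2=\varepsilon .
\]
This argument never uses the group sizes of the input.

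\emph{Success on the target.} On a labeling of $I$, the number of arms (including the best) with gap less than $d_k$ is exactly $N_k$. The loop at scale $k$ runs only while $|S|>4N_k$, so part~2 of Lemma~\ref{lem:elimination} applies with $N=N_k$. Except with probability $\alpha_{k,r}$, the call retains the best and returns $R_0$ with $|R_0|\le\lceil |S|/2\rceil$; $R_0$ is nonempty because it contains the best. On this event the run does not abort and the best stays active. After scale $K$ at most $4$ arms remain, all with gaps at least $d_K$ except the best, since $N_K=1$. Part~3 of Lemma~\ref{lem:elimination} then gives a correct singleton in the final call except with probability $\varepsilon/2$. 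Summing the failure probabilities as before gives total failure at most $\varepsilon$.

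\emph{Deterministic cost.} This is the main step. The key point is that the size reductions are enforced, so the bound holds on every trajectory and on every input. At scale $k$, the call with index $r$ is made only if $|S|>4N_k$. Also, $|S|$ has been at least halved (up to ceilings) at each accepted call since scale entry. Entering scale $k\ge1$, the active size is at most $4N_{k-1}$, or $n=W_0$ when $k=0$. So the $r$th call has size at most $c\,4N_{k-1}2^{-r}$, and also more than $4N_k\ge4$. By Lemma~\ref{lem:elimination} its cost is at most
\[
C\,4^k N_{k-1}2^{-r}\bigl[\log(1/\varepsilon)+\log(1/q_k)+r+O(1)\bigr].
\]
Ceilings matter only while the size is small, and they change the count of halvings by at most a constant, so the size bound $O(N_{k-1}2^{-r})$ survives them. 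Summing over $r$ gives $O\bigl(W_k[\log(1/\varepsilon)+\log(1/q_k)]\bigr)$. The final call costs $O\bigl(4^K\log(1/\varepsilon)\bigr)$, which is at most $O(D\log(1/\varepsilon))=O(H\log(1/\varepsilon))$.

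Summing over scales yields $O\bigl(W\log(1/\varepsilon)+W\,\Ent(q)\bigr)$. Applying \eqref{pointwise:smoothing} turns this into $O\bigl(H[\log(1/\varepsilon)+\Ent(I)+1]\bigr)$. Since $\varepsilon<1/10$, the additive $1$ is absorbed into $\log(1/\varepsilon)$. This is $B_I(\varepsilon)$ for a suitable universal constant $C$.

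The delicate points are two. The first is the size bound at scale entry on non-target inputs. There I rely only on the abort rule: the loop at the previous scale exits only when $|S|\le4N_{k-1}$, whatever the input. The second is that $W_0=n$ matches the initial size.
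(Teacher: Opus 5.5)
Your proposal is correct and follows the paper's proof essentially step for step. It uses a union bound over the confidence allocation $\sum_{k,r}\alpha_{k,r}+\varepsilon/2=\varepsilon$ on arbitrary inputs, the halving guarantee of Lemma~\ref{lem:elimination} with $N=N_k$ plus its small-set singleton guarantee on the target, and the deterministic per-call budget together with the enforced size reductions and Lemma~\ref{lem:smoothing} for the cost. The only difference is cosmetic. You track call sizes through $2^{-r}$ with a ceiling constant, which is valid because every loop call has at least five arms, so $s_r\le s_02^{-r}+1\le\tfrac54 s_02^{-r}$. The paper instead uses $\lceil s/2\rceil\le 3s/4$ to get a clean $(3/4)^r$ factor.
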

\begin{proof}
For the first claim, an incorrect singleton can be returned only after
the input's original best has been removed. Conditional on its survival
to a call, the best-arm retention guarantee in Lemma~\ref{lem:elimination} bounds
this removal probability by that call's confidence. The confidence sum
over every possible trajectory is at most
\[
 \sum_{k=0}^K\sum_{r\ge0}\alpha_{k,r}+\varepsilon/2
 =\sum_{k=0}^K\eta_k+\varepsilon/2=\varepsilon.
\]
An abort cannot create an incorrect answer. This argument requires no agreement between the input and the fixed
target instance.

For the second claim, on a labeling of $I$ the active number of arms
within gap $d_k$ of the original best is at most $N_k$, as long as that
best remains active. Thus every loop invocation meets the halving
condition in Lemma~\ref{lem:elimination}. Except with its allocated
error probability, it protects the best and satisfies the acceptance
condition. At scale $K$ all original suboptimal gaps are at least
$d_K$. The final invocation therefore satisfies the primitive's
small-set singleton guarantee. A union bound with the same confidence
sum proves target success with probability at least $1-\varepsilon$.

For the deterministic cost claim, the active size entering scale zero
is $n=W_0$. If scale $k\ge1$ is reached, its entry size is at most
$4N_{k-1}$ by the preceding loop's stopping condition, independently
of all statistical events. Every loop call starts with at least five
arms; an accepted result has at most $\ceil[\big]{s/2}\le3s/4$ arms.
Consequently, including a possible final aborting invocation, the
scale's cost is at most
\begin{align*}
 C W_k\sum_{r\ge0}(3/4)^r
   \bigl(\log(1/\eta_k)+(r+1)\log2\bigr)
 &\le C'W_k\log(1/\eta_k).
\end{align*}
Here $C$ absorbs $C_{\mathrm{elim}}$ and the additive $\log128$ in the
primitive's cost $C_{\mathrm{elim}}|S|d^{-2}\log(128/\alpha)$, using
$\log(1/\eta_k)\ge\log(20)$.
This uses the deterministic sample budget in
Lemma~\ref{lem:elimination}, not a cost bound conditioned on successful
elimination. The final call uses at most
$C4^K\log(2/\varepsilon)$ samples (the factor $|S|\le4$ and the term
$\log256$ of the primitive's cost at confidence $\varepsilon/2$ are
absorbed into $C$). Summing and using
\eqref{pointwise:smoothing},
\begin{align*}
 B_I(\varepsilon)
 &\le C\Biggl[
   \sum_{k=0}^K W_k\log\frac{2W}{\varepsilon W_k}
        +H\log(2/\varepsilon)\Biggr]\\
 &=C\Bigl[W\bigl(\log(2/\varepsilon)+\Ent(q)\bigr)
        +H\log(2/\varepsilon)\Bigr]\\
 &\le C'H\bigl(\log(1/\varepsilon)+\Ent(I)\bigr).
\end{align*}
The additive entropy constant is absorbed by
$\log(1/\varepsilon)>\log10$. These are bounds in terms of the fixed
target even when the actual input has different means or gaps.
\end{proof}

\subsection{A finite-expectation fallback}

For completeness, we specify a fallback $F$ that is $\delta/2$-correct
and has finite expected number of samples on every input with a unique best arm.
At round $r=0,1,\ldots$, sample each of the $n$ arms cumulatively
$m_r=2^r$ times and let $\widehat\mu_{i,r}$ be its empirical mean.
Use confidence radius
\begin{equation}
\label{pointwise:fallback-radius}
 \rho_r=
 \sqrt{\frac2{m_r}\log\frac{8n(r+1)^2}{\delta}}.
\end{equation}
Stop once one interval
$\bigl[\widehat\mu_{i,r}-\rho_r,\widehat\mu_{i,r}+\rho_r\bigr]$
lies strictly above all the others, and return its arm. Gaussian tails
give
\[
 \P\Bigl(\bigl|\widehat\mu_{i,r}-\mu_i\bigr|>\rho_r\Bigr)
 \le\frac{\delta}{4n(r+1)^2}.
\]
The union bound over all arms and rounds is less than $\delta/2$.
When every interval contains its true mean, a returned arm must be the
unique best, proving correctness.

Fix now any input with smallest gap $\Delta_{[2]}>0$. There exists a
finite $r_0$ such that $\rho_r\le\Delta_{[2]}/8$ for all
$r\ge r_0$. If every empirical error at such a round is at most
$\Delta_{[2]}/8$, the best arm's interval is strictly above every
other interval. Therefore
\[
 \P\bigl(F\text{ has not stopped by round }r\bigr)
 \le2n\exp\bigl(-m_r\Delta_{[2]}^2/128\bigr),\qquad r\ge r_0.
\]
It follows, for example by summing each next round's sample cost times
its reaching probability, that
\[
 \E\tau_F
 \le nm_{r_0}+\sum_{r\ge r_0}
       nm_{r+1}\,2n\exp\bigl(-m_r\Delta_{[2]}^2/128\bigr)<\infty.
\]
No sharp instance-dependent guarantee is required of this fallback.

\subsection{Correctness on every instance and target expected cost}

Perform independent runs, indexed by $t=0,1,\ldots$, with
\[
 \varepsilon_t=\delta2^{-(t+2)}.
\]
An abort starts the next run; an answer stops the stream. In
parallel, run an independent copy of $F$, alternating one requested
sample from each stream, and return whichever stream answers first.
Internal computation between sample requests does not contribute to
sample complexity. On an input with a different arm count from the
fixed target, run $F$ alone. A singleton input returns immediately.
This defines the algorithm on all admissible input sizes.

Lemma~\ref{pointwise:attempt} bounds the run stream's total error
probability by
$\sum_{t\ge0}\varepsilon_t=\delta/2$. Together with the fallback's
bound, the combined algorithm is $\delta$-correct. Its sample
count is at most twice the hypothetical fallback sample count, up to
one sample. Hence it has finite expectation and terminates almost
surely on every input with a unique best arm, including instances different from the fixed target.

On a labeling of $I$, fresh observations and
Lemma~\ref{pointwise:attempt} imply that, conditional on reaching a run, its abort probability is at most $\varepsilon_t$. Thus the
probability that the run-only stream reaches run $t$ is at
most
\[
 \prod_{u<t}\varepsilon_u\le2^{-t}.
\]
Its expected total sample count is consequently at most
\begin{align*}
 \sum_{t\ge0}2^{-t}B_I(\varepsilon_t)
 &\le C H\sum_{t\ge0}2^{-t}
       \bigl(L+\Ent(I)+(t+2)\log2\bigr)\\
 &=O\Bigl(H\bigl[L+\Ent(I)\bigr]\Bigr).
\end{align*}
Interleaving increases this upper bound by at most a factor two, up to
one sample, because an earlier fallback return only shortens the
execution. The constants are uniform over all target labelings. This
proves Proposition~\ref{pointwise:theorem}, and taking the pointwise
infimum proves the upper-bound direction of
Theorem~\ref{thm:entropy}.

\section{A single algorithm}
\label{alg:section}

We construct a single algorithm for Theorem~\ref{thm:universal}. It is
given only the arms and the confidence level; the complexity, gaps, and
gap entropy enter solely into its analysis. A singleton input is returned
without sampling. Throughout the construction, assume $n\ge2$ and set
\begin{equation}
 L=\log(1/\delta),\qquad \gamma=\delta/1024.
 \label{alg:constants}
\end{equation}
The algorithm performs successive runs indexed by $j=0,1,\ldots$. The bounds on their total weight and number of samples are
\begin{equation}
 h_j=2^j,\qquad M_j=1024h_j,\qquad
 Q_j=\ceil[\big]{C_Qh_jL},
 \label{alg:caps}
\end{equation}
where $C_Q$ is the absolute constant $C_*$ of
\eqref{cost:sample-cap-fits}, which depends only on the constants of
Lemmas~\ref{lem:pac} and~\ref{lem:elimination}; any larger absolute
constant also works, and the algorithm is fully specified once it is
fixed.
All observations in a new run are fresh; the algorithm uses no
internal randomness.

\subsection{Sampling and confidence parameters}
\label{alg:parameters}

Each run maintains an active set $S$, a scale $k\ge0$, the cumulative weight of its calls, and its reserved number of samples. A call with $s=|S|\ge2$ uses
\begin{equation}
 d=2^{-k},\qquad u=d^{-2}=4^k,\qquad w=su.
 \label{alg:work}
\end{equation}
Here $w$ is the factor in the sample bound before the confidence logarithm. If adding $w$ respects the bound on the cumulative weight, the confidence parameter is
\begin{equation}
 \alpha=\gamma\frac{w}{M_j}
 \begin{cases}
 (j+1)^{-2},&2\le s\le7,\\
 1,&s\ge8.
 \end{cases}
 \label{alg:alpha}
\end{equation}
The branch is recomputed from the active-set size at every call. Each arm estimate in the call uses exactly
\begin{equation}
 m=\ceil[\Big]{512u\log\frac{128}{\alpha}}
 \label{alg:arm-samples}
\end{equation}
fresh samples.

Only the first call at a scale selects a reference. Apply the deterministic
sample-budget PAC procedure of Lemma~\ref{lem:pac} to the active set, with
accuracy $d/8$ and error probability $\alpha/16$. Denote its output by $a$.
The deterministic number of samples used by this procedure is written
$N_{\mathrm{PAC}}(s,d/8,\alpha/16)$. Independently estimate $\mu_a$ using
\begin{equation}
 \beta_{j,k}=\frac{\delta}{64(j+1)^2(k+1)^2},\qquad
 m_z=\ceil[\Big]{512u\log\frac{2}{\beta_{j,k}}}
 \label{alg:reference-samples}
\end{equation}
samples, and denote this empirical mean by $z$. All subsequent calls at
this scale reuse the numerical value $z$. Reuse is allowed even if $a$
has left the active set: no further observation from $a$ is then taken.
The PAC procedure, the estimate $z$, and the estimates used for
elimination use mutually independent fresh data.

\subsection{The procedure and its reservation order}
\label{alg:procedure}

Run $j$ starts with all arms active, $k=0$, and both counters zero.
It repeats the following steps until it aborts or returns an arm.
\begin{enumerate}
\item Compute $s,d,u,w$ from \eqref{alg:work}. If the used weight plus
$w$ exceeds $M_j$, abort the run before taking any samples.
Otherwise compute $\alpha$ and $m$ from
\eqref{alg:alpha}--\eqref{alg:arm-samples}.

\item Compute the entire deterministic sample count for this call.
At scale entry it is
\[
 sm+m_z+N_{\mathrm{PAC}}(s,d/8,\alpha/16);
\]
at a later call at the same scale it is $sm$. If the current reserved
sample count plus this quantity exceeds $Q_j$, abort before taking any
samples for the call. Otherwise reserve this entire quantity and charge
$w$ to the weight counter.

\item At scale entry, select the reference and obtain $z$ as specified
above. At every call, obtain independent empirical means $(x_i)_{i\in S}$,
each from $m$ fresh samples, and form
\begin{equation}
 R_0=\{i\in S:x_i\ge z-d/2\}.
 \label{alg:raw-set}
\end{equation}
If $|R_0|<\ceil[\big]{s/2}$, add arms in decreasing order of their $x_i$ values
until the resulting set has size $\ceil[\big]{s/2}$. Call the resulting set
$R$; if $|R_0|\ge\ceil[\big]{s/2}$, take $R=R_0$.

\item Replace $S$ by $R$. If $|R|=1$, return its arm and stop the entire
algorithm. Otherwise, if $|R|\le\ceil[\big]{s/2}$, make the next call at the
same scale. If $|R|>\ceil[\big]{s/2}$, increase $k$ by one and make a new
scale-entry call.
\end{enumerate}
On abort, discard the run and start run $j+1$. A fixed arm order
resolves any ties; ties between independent Gaussian estimates occur
with probability zero.

The retained set always contains both $R_0$ and the empirical upper
$\ceil[\big]{s/2}$ arms. In particular, the retained set is enlarged before testing
for a singleton. A singleton output from a nontrivial input therefore
occurs only after a call with $s=2$. Same-scale continuation is an exact
halving up to rounding and decreases a size of at least three by a
factor at most $3/4$. Each run makes finitely many calls: it can
continue only finitely many times at one scale, and eventually
$2\cdot4^k>M_j$ prevents another scale from being entered.

The reservation order ensures, on every trajectory, including trajectories
with mistaken eliminations,
\begin{equation}
 \sum_{t\text{ in run }j}w_t\le M_j,\qquad
 \sum_{t\text{ in run }j}\alpha_t\le\gamma,
 \qquad T_j\le Q_j,
 \label{alg:pathwise-caps}
\end{equation}
where $T_j$ is the run's sample count. In particular all executed
calls have $\alpha\le\gamma$. The deterministic PAC budget is needed
here to reserve the complete first call before observing its data.

\subsection{Reference estimation and underestimation events}
\label{alg:local-errors}

Let $i_{[1]}$ be the original best arm, $\mu_{[1]}=\mu_{i_{[1]}}$,
and $\Delta_i=\mu_{[1]}-\mu_i$, with $\Delta_{[1]}=0$.
The one-sided Gaussian inequality
\begin{equation}
 \P\bigl(\overline X-\mu\ge t\bigr)\le e^{-mt^2/2},\qquad
 \P\bigl(\overline X-\mu\le-t\bigr)\le e^{-mt^2/2}
 \label{alg:gaussian-tail}
\end{equation}
holds for the mean of $m$ unit-variance Gaussian observations.
Conditional on a scale-entry history and on the selected reference arm,
\eqref{alg:reference-samples} implies
\[
 \P\bigl(|z-\mu_a|>d/16\bigm|\text{entry history},a\bigr)\le\beta_{j,k}.
\]
Let $\mathcal F_z$ be the event that any such estimate fails anywhere
in the stream of runs. Adaptively visited scales satisfy the same
conditional bound, so
\begin{equation}
 \P(\mathcal F_z)
 \le\sum_{j,k\ge0}\beta_{j,k}
 =\frac{\delta}{64}\biggl(\sum_{m\ge1}m^{-2}\biggr)^2
 <\frac{\delta}{16}.
 \label{alg:global-reference-error}
\end{equation}

The reference upper bound is the event $z\le\mu_a+d/16$.
Since every original arm has mean at most $\mu_{[1]}$, the reference upper bound
implies, regardless of which arms remain active,
\begin{equation}
 z-d/2\le\mu_{[1]}-7d/16.
 \label{alg:upper-threshold}
\end{equation}
Consequently, removing an active arm with $\Delta_i\le d/8$ requires
\begin{equation}
 x_i-\mu_i<-5d/16.
 \label{alg:severe-flag}
\end{equation}
We call \eqref{alg:severe-flag} an \emph{underestimation event}, whether or not
the arm is actually removed. Enlarging the retained set can only make removal harder.
Writing $\mathcal H_t$ for the information before the fresh arm
estimates in call $t$, including its reference data, gives
\begin{equation}
 \P\bigl(\text{arm }i\text{ has an underestimation event in call }t\bigm|\mathcal H_t\bigr)
 \le(\alpha_t/128)^{25}.
 \label{alg:severe-probability}
\end{equation}
Indeed, $512(5/16)^2/2=25$. This bound is about fresh Gaussian errors
and needs no conditioning on the reference upper bound. The reference upper bound is used only to deduce that eliminating a near-optimal arm requires an underestimation event. This distinction will be
essential in the correctness proof.

\section{Correctness through sets of near-optimal arms}
\label{sound:section}

The per-run bound $\sum_t\alpha_t\le\gamma$ is not by itself a
correctness guarantee for infinitely many runs. We instead charge
errors in calls with at most seven arms globally and control the remaining errors using a
fixed nested family of sets determined by the instance. Throughout this
section, put
\begin{equation}
 \eta=\gamma/128=\delta/131072,\qquad p=2\eta^{25}.
 \label{sound:parameters}
\end{equation}

\subsection{Stochastic domination for adaptive sampling}
\label{sound:domination-section}

\begin{lemma}[Stochastic domination by independent Bernoulli variables]
\label{sound:domination}
For one fresh run, let $F_i$ indicate that the empirical mean of original arm $i$ satisfies \eqref{alg:severe-flag} at some executed call. On an extension of the probability
space there are independent Bernoulli variables $(Y_i)_{i=1}^n$, each
with parameter at most $p$, such that $F_i\le Y_i$ simultaneously.
\end{lemma}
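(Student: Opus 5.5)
The plan is to build the coupling with a continuous-time clock driven by independent Poisson processes, as announced in Section~\ref{overviewu:soundness}. Let $(\Pi_i)_{i=1}^n$ be independent unit-rate Poisson processes on $[0,\infty)$, independent of all rewards. Run the algorithm in discrete calls $t=1,2,\ldots$. Before call $t$ the history $\mathcal H_t$ (including the reference data) is known. For each active arm $i$ put
\[
 \pi_{t,i}=\P\bigl(x_i-\mu_i<-5d_t/16\bigm|\mathcal H_t\bigr)\le(\alpha_t/128)^{25},
\]
using \eqref{alg:severe-probability}. This probability is an exact Gaussian value depending only on $m_t$ and $d_t$, hence $\mathcal H_t$-measurable. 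Set the predictable increment $c_t=-\log(1-\pi_{t,i})$; it is the same for every active arm, since all active arms in a call use the same $m$ and $d$. Advance a common clock from $\theta_{t-1}$ to $\theta_t=\theta_{t-1}+c_t$.

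The underestimation indicator of arm $i$ at call $t$ is realized as $\{\Pi_i(\theta_t)-\Pi_i(\theta_{t-1})\ge1\}$. The event has conditional probability $1-e^{-c_t}=\pi_{t,i}$ and is independent across arms given $\mathcal H_t$. Conditional on this indicator, the fresh empirical mean $x_i$ is then drawn from the appropriate truncated Gaussian law, using extra independent uniforms. Independence of Poisson increments over disjoint intervals, together with the strong Markov property at the predictable times $\theta_{t-1}$, shows that the resulting joint law of all estimates and of the run coincides with the original one. This verification is the main obstacle. One must check, by induction on calls, that the Poisson increments on $(\theta_{t-1},\theta_t]$ are independent of $\mathcal H_t$ even though $\theta_{t-1}$ is random. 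The natural route is to treat $\theta_t$ as a stopping time for the filtration generated by the processes and the auxiliary uniforms, and to apply the strong Markov property of Poisson processes at each step.

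With the coupling in place, $F_i\le\1\{\Pi_i(\theta_\infty)\ge1\}$, where $\theta_\infty=\sum_t c_t$ over executed calls. This holds because inactive arms get no flags, so an arm's indicator is dominated by its process jumping anywhere before the final clock value. The clock is bounded deterministically. Since $\pi\le(\alpha_t/128)^{25}\le(\gamma/128)^{25}$ is tiny, we have $c_t\le2\pi_{t,i}\le2(\alpha_t/128)^{25}$. Therefore
\[
 \theta_\infty\le2\sum_t(\alpha_t/128)^{25}\le2\Bigl(\sum_t\alpha_t/128\Bigr)^{25}\le2\eta^{25}=p,
\]
by \eqref{alg:pathwise-caps} and superadditivity of $x\mapsto x^{25}$.

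Finally set $Y_i=\1\{\Pi_i(p)\ge1\}$. Then $F_i\le\1\{\Pi_i(\theta_\infty)\ge1\}\le Y_i$ pathwise, because $\theta_\infty\le p$ and counting processes are monotone. The $Y_i$ are independent Bernoulli variables with parameter $1-e^{-p}\le p$. They are functions of the independent $\Pi_i$ at a fixed time, so no conditioning on the adaptive trajectory or on which near-optimal set is later selected enters their law.
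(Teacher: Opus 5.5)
Your proposal is correct and follows the paper's proof essentially step for step. The shared steps are a common clock over independent unit-rate Poisson processes, advanced by $-\log(1-p_t)$ with $p_t=\Phi(-5d_t\sqrt{m_t}/16)$; conditional Gaussian generation of the estimates from auxiliary randomness; the strong Markov property at the clock stopping times; the pathwise bound $\sum_t\lambda_t\le2\bigl(\sum_t\alpha_t/128\bigr)^{25}\le2\eta^{25}=p$; and $Y_i=\1\{\mathsf N_i(p)\ge1\}$.
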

\begin{proof}
Give each original arm an independent rate-one Poisson process $\mathsf N_i$,
and run these processes on one common clock. Suppose the current clock
is $\theta$ immediately before call $t$'s fresh arm estimates. The exact
conditional probability of \eqref{alg:severe-flag} is
\[
 p_t=\Phi\bigl(-5d_t\sqrt{m_t}/16\bigr),
\]
where $\Phi$ is the standard Gaussian distribution function. It is
the same for all active arms because their centered empirical means
have the same variance. The parameters $d_t,m_t$ are predictable.
Advance the common clock by $\lambda_t=-\log(1-p_t)$. For each
active arm $i$, use the event
\[
 \mathsf N_i(\theta+\lambda_t)-\mathsf N_i(\theta)\ge1
\]
as its underestimation indicator. Using additional independent random
variables, generate $x_i-\mu_i$ from its Gaussian law conditional on
the indicated event or its complement. Its unconditional law is exactly
$\mathcal N(0,1/m_t)$, and the empirical means are independent across active
arms conditional on the past. If desired, the individual observations
can be generated from their conditional Gaussian distribution given
the mean; the algorithm uses only the mean. Inactive arms receive dummy
Poisson increments, which are never revealed to the algorithm. All PAC
and reference observations use independent streams.

This construction can be iterated adaptively. To be explicit, enlarge
the natural joint Poisson filtration by the independent auxiliary
randomness. Each current clock endpoint is a stopping time for this
filtration. Adding a nonnegative duration measurable at that endpoint
again gives a stopping time. The strong Markov property of the joint
Poisson process therefore gives fresh independent increments at the
next call. Conditional Gaussian generation reveals no future Poisson
increments. This proves that the construction realizes precisely the
law of the algorithm's observations, despite adaptive active sets,
scales, and sample counts.
The conditional Gaussian-law assertions refer to the algorithm's
observed-history filtration, in which unused auxiliary variables remain
fresh. The enlarged Poisson filtration is used only to justify stopping
times and independent Poisson increments; it may reveal the independent
auxiliary reservoir at time zero without changing those properties.

By \eqref{alg:severe-probability},
$p_t\le(\alpha_t/128)^{25}<1/2$. Hence every sample path satisfies
\begin{align}
 \sum_t\lambda_t
 &\le2\sum_t p_t
 \le2\sum_t(\alpha_t/128)^{25}
 \le2\biggl(\frac{\sum_t\alpha_t}{128}\biggr)^{25}
 \le2\eta^{25}=p.
 \label{sound:clock-cap}
\end{align}
Set $Y_i=\1\bigl\{\mathsf N_i(p)\ge1\bigr\}$. These variables are independent, each
has probability $1-e^{-p}\le p$, and they dominate the indicators $F_i$, since the common clock never
exceeds $p$. This is an unconditional coupling;
in particular, no future reference upper-bound event is conditioned upon.
\end{proof}

\subsection{Errors in calls with at most seven arms}
\label{sound:small-section}

Let $\mathcal F_{\mathrm{small}}$ denote the event that underestimation
as in \eqref{alg:severe-flag} occurs in any call with $2\le s\le7$ in the entire algorithm. The
piecewise rule and weight bound give the pathwise bound
\[
 \sum_{t:s_t\le7}\alpha_t\le\gamma(j+1)^{-2}
\]
within run $j$. Conditional union bounds over the arms and calls,
followed by $\sum_t a_t^{25}\le\bigl(\sum_t a_t\bigr)^{25}$ for $a_t\ge0$,
therefore imply
\begin{align}
 &\P\bigl(\text{an underestimation event at size at most seven in run }j
       \bigm|\text{run }j\text{ is reached}\bigr)
       \nonumber\\
 &\hspace{25mm}\le
 7\E\sum_{t:s_t\le7}(\alpha_t/128)^{25}
 \le7\eta^{25}(j+1)^{-50}.
 \label{sound:small-attempt}
\end{align}
Dropping the probability of reaching a run can only increase the
sum. Since $\sum_{m\ge1}m^{-50}\le1+1/49<2$,
\begin{equation}
 \P(\mathcal F_{\mathrm{small}})\le14\eta^{25}.
 \label{sound:small-global}
\end{equation}
We charge this event once and exclude it before summing over the sets
defined below. This avoids a factor depending on the number of distinct sets.

\subsection{A deterministic family of sets of near-optimal arms}
\label{sound:cores-section}

Fix the original instance and form the sets
\begin{equation}
 B_k=\bigl\{i:\Delta_i<2^{-k}/8\bigr\},\qquad k\ge0.
 \label{sound:core-family}
\end{equation}
Let $\mathcal B$ be this family with repeated sets removed. It is nested
and consists of nonempty subsets containing $i_{[1]}$. There is at most
one member of $\mathcal B$ of any given cardinality. In particular,
long intervals of empty gap scales do not enlarge this family.

For fixed $B\in\mathcal B$, write
\begin{equation}
 r=|B|,\qquad h=\sum_{i\notin B}\Delta_i^{-2},\qquad
 R_r=\max\{1,r-1\},\qquad A_r=rp^{R_r}.
 \label{sound:core-parameters}
\end{equation}
Associate an output with $B$ when its final call is at a scale $k$
with $B_k=B$. We always fix $B$ first and analyze this output
event on the original probability space; we never condition the
distribution of observations on selection of $B$.

Consider an incorrect singleton output associated with $B$ and
references satisfying the upper bound throughout its run. Every earlier tolerance
is at least the final tolerance, so every removed arm in $B$ has an underestimation event by \eqref{alg:upper-threshold}--\eqref{alg:severe-flag}.
If $r\ge2$, a singleton leaves at most one of the $r$ arms in $B$.
If $r=1$, an incorrect output must remove $i_{[1]}$. Thus
Lemma~\ref{sound:domination} and a union bound over the possible arm in $B$ with no underestimation
event give
\begin{equation}
 \P\biggl(\substack{\text{incorrect output associated with }B\text{ in run }j,\\
                       \text{all reference upper bounds hold in that run}}\biggr)
 \le A_r.
 \label{sound:core-risk}
\end{equation}
For $r\ge2$ this is the bound $rp^{r-1}$; for $r=1$ it is $p$.

\subsection{Small and intermediate weight bounds}
\label{sound:small-caps-section}

Fix $B$ and a run with weight budget $M=M_j$. Suppose it has an output
associated with $B$, at tolerance $d_*$. Its final call begins with at
least two arms and is included in the weight budget. Every outside arm
has $\Delta_i\ge d_*/8$. Therefore
\begin{equation}
 D_{\mathrm{out}}:=\max_{i\notin B}\Delta_i^{-2}
 \le64d_*^{-2}\le32M,
 \label{sound:outside-max}
\end{equation}
whenever the outside set is nonempty. Both $D_{\mathrm{out}}$ and $M$
are deterministic for this fixed $B$ and run. If their inequality
fails, no output can be associated with $B$; thus use of
\eqref{sound:outside-max} below involves no conditional selection of a
random weight bound.

Consider an arm $i\notin B$ eliminated at tolerance $d_t\le8\Delta_i$.
Its complexity satisfies $\Delta_i^{-2}\le64d_t^{-2}$. Charge this
complexity to the contribution $d_t^{-2}$ of arm $i$ to the weight of
its unique elimination call. The total complexity of all such arms is
at most $64M$. If the output is an outside arm, its complexity
is at most $32M$ by \eqref{sound:outside-max}. It follows that when
\begin{equation}
 M\le h/192,
 \label{sound:small-cap-range}
\end{equation}
outside arms eliminated at tolerances $d_t>8\Delta_i$ have total
complexity at least $h-96M\ge h/2$. Each of them has an underestimation
event if its reference satisfies the upper bound.

Here is the required weighted Bernoulli estimate, including its
constants. Write $a_i=\Delta_i^{-2}\le32M$ and $v_i=a_i/(32M)\le1$
for $i\notin B$. Since $e^{\lambda v}-1\le v(e^\lambda-1)$ on
$0\le v\le1$, independent dominating variables satisfy
\begin{align*}
 \E\exp\biggl(\lambda\sum_{i\notin B}v_iY_i\biggr)
 &\le\exp\biggl(p\sum_{i\notin B}\bigl(e^{\lambda v_i}-1\bigr)\biggr)\\
 &\le\exp\Bigl(\frac{ph}{32M}(e^\lambda-1)\Bigr).
\end{align*}
Choose $e^\lambda=1/(2p)$ and apply exponential Markov at $h/(64M)$.
This gives
\begin{equation}
 \P\biggl(\sum_{i\notin B}a_iY_i\ge h/2\biggr)
 \le \exp\Bigl[-\frac{h}{64M}\log\frac1{2p}
                    +\frac{h}{64M}\Bigr]
 =(2ep)^{h/(64M)}.
 \label{sound:weighted-tail}
\end{equation}
The dominating Bernoulli variables indexed by $B$ and its complement
are independent because these sets are fixed and disjoint. Intersecting the underestimation requirement with
the requirement on the total complexity of outside arms therefore bounds the relevant incorrect
output probability by
\begin{equation}
 A_r(2ep)^{h/(64M)}.
 \label{sound:small-cap-risk}
\end{equation}
Over the dyadic budgets satisfying \eqref{sound:small-cap-range}, the
exponents are at least three and double as budgets decrease. With
$a_0=2ep<1/4$, their sum is bounded by
$\sum_{t\ge0}a_0^{3\cdot2^t}\le a_0^3/\bigl(1-a_0^3\bigr)<1$.
Consequently all small-budget runs contribute at most $A_r$ for this
fixed $B$.

The intermediate range $h/192<M<h$ contains at most eight dyadic budgets.
Using \eqref{sound:core-risk} at each of them gives total contribution
at most $8A_r$. These conclusions also hold when a range is empty or
its lower endpoint lies below the first budget.

\subsection{A bound on eliminating near-optimal arms in one call}
\label{sound:collapse-section}

For large budgets we need to control a call that simultaneously removes
several arms in $B$. Each call uses the confidence parameter computed at its start; this
parameter is not recomputed within the call.

\begin{lemma}[Eliminating all but at most one near-optimal arm]
\label{sound:collapse}
Fix a set $B$ containing $i_{[1]}$ and write
$h=\sum_{i\notin B}\Delta_i^{-2}$. Consider an adaptive run whose total weight is at most $M$, where
$M\ge h>0$. Each call uses fresh estimates \eqref{alg:arm-samples},
follows the retention rule \eqref{alg:raw-set}, and has
$\alpha_t\le\gamma w_t/M$. A call is eligible if $s\ge8$, it has
at least one active arm in $B$, and
$\max_{i\in B}\Delta_i\le d/8$. An \emph{elimination event} occurs when a call retains at most
one arm in $B$ when at least two were active, or retains no arm in $B$
when exactly one was active. The probability of any eligible elimination event
whose reference satisfies the upper bound is at most
\begin{equation}
 3\eta^4(h/M)^3.
 \label{sound:collapse-bound}
\end{equation}
If $h=0$, an eligible elimination event is impossible.
\end{lemma}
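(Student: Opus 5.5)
The plan is to reduce an eligible elimination event to a small number of rare \emph{bad events} inside a single call. Then I bound the conditional probability of that call by a high power of its confidence parameter $\alpha_t$, and sum over calls using the pathwise weight bound $\sum_tw_t\le M$. Throughout I work on the event that the current reference satisfies the upper bound, so that \eqref{alg:upper-threshold} holds. I also use only fresh-estimate conditional probabilities given $\mathcal H_t$, as in \eqref{alg:severe-probability}. This way no conditioning on future selection occurs, and the sum over calls is a sum of conditional probabilities of events that are predictable except for the fresh data.

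\textbf{Step 1: deterministic reduction within one call.} Fix an eligible call with $s\ge8$ active arms, $b\ge1$ of them in $B$, and tolerance $d$ with $\max_{i\in B}\Delta_i\le d/8$. Call two kinds of events bad. A \emph{B-bad} event is an underestimation event \eqref{alg:severe-flag} on an active arm of $B$. An \emph{outside-bad} event is an overestimation $x_j-\mu_j\ge\Delta_j-7d/16\ge 9d/16$ on an active arm $j\notin B$ with $\Delta_j\ge d$. By \eqref{eq:gaussian-tail} and \eqref{alg:arm-samples}, each bad event has conditional probability at most $(\alpha_t/128)^{25}$.

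There are two cases for an elimination event.
\begin{enumerate}
\item If every removed $B$-arm is B-bad, then at least $\max\{1,b-1\}$ B-bad events occur.
\item Otherwise some removed $i\in B$ satisfies $x_i\ge\mu_{[1]}-7d/16$. Because the retained set contains the empirical upper $\ceil{s/2}$ arms, at least $\ceil{s/2}$ arms have empirical means above $x_i$, and at most one of them lies in $B$. Arms outside $B$ with $\Delta_j<d$ number fewer than $h/u$, where $u=d^{-2}$, since each contributes more than $u$ to $h$. Hence at least $s/2-1-h/u$ outside-bad events occur.
\end{enumerate}
If $h=0$, there are no outside arms at all. Then $B$ contains all $s\ge8$ active arms, and the empirical upper half alone retains at least four of them, so no elimination event is possible. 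This gives the last sentence of the lemma.

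\textbf{Step 2: per-call bound with a factor of $h$.} Split according to whether $h/u\ge s/8$.
\begin{enumerate}
\item If $h/u\ge s/8$, then $w_t=su\le8h$, so $\alpha_t\le\gamma w_t/M\le8\gamma h/M$. In this case I only need one bad event. A union bound over at most $s$ arms gives conditional probability at most $s(\alpha_t/128)^{25}$. I then spend three of the spare powers on $(8\gamma h/M)^3$ and keep one power of $\alpha_t$ for the summation.
\item If $h/u<s/8$, then case 2 of Step 1 forces at least $s/4$ bad events among $s$ independent fresh estimates. Its probability is at most $\binom{s}{\lceil s/4\rceil}(\alpha_t/128)^{25s/4}$. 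For case 1 of Step 1, I will use that the outside set has fewer than $h/u<s/8$ active arms with $\Delta<d$, while far outside arms can be removed freely. So I must argue instead that $b-1$ is comparable to $s$ or that outside-bad events are forced. This is the delicate part: I would show that if $b\le s/2$, the upper half must contain at least $s/2-1-h/u\ge 3s/8-1\ge2$ outside arms that beat an un-underestimated $B$-arm. Otherwise all $B$-arms but one are B-bad with $b-1\ge s/2-1\ge3$.
\end{enumerate}
In either subcase, at least three bad events occur, giving probability $O(s^3(\alpha_t/128)^{75})$.

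\textbf{Step 3: summation and the main obstacle.} Summing the per-call conditional bounds over calls uses $\alpha_t\le\gamma w_t/M$ and $\sum_tw_t\le M$. This gives $\sum_t\alpha_t\le\gamma$, and $\sum_t\alpha_t^{q}\le(\max_t\alpha_t)^{q-1}\gamma$. Combined with $\gamma=128\eta$ and the slack of the large exponents, this should give the stated $3\eta^4(h/M)^3$. Stalled scales cost nothing extra because the sum is over calls weighted by $\alpha_t$, not over scales.

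The step I expect to be the main obstacle is extracting the factor $(h/M)^3$ in the subcase $h/u<s/8$, where the weight $su$ is not controlled by $h$. There I would try to use that the forced number of bad events grows linearly in $s$. The per-arm bad probability is at most $(\gamma su/(128M))^{25}$, and $u>8h/s$ alone does not help. So I would instead bound $\alpha_t^{25}\le\alpha_t\gamma^{24}$ and use $M\ge h$ to write $\gamma^{24}\le\gamma^{21}(h/M)^{0}$. If that fails, I would refine the reduction so that each of the three forced bad arms must have its own contribution to $h$ (outside arms with $\Delta_j^{-2}\le h$). A Gaussian tail in $\Delta_j$ then gives a factor $\Delta_j^{-2}/M$ per forced outside overestimation, using $u\le M$.
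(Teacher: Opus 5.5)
\textbf{There is a genuine gap in the large-weight case, and it is structural rather than a matter of constants.}

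Your Step~1 compares estimates only with thresholds at the scale of the tolerance $d$. In the case that actually occurs, it therefore forces only $\max\{1,b-1\}$ underestimations in $B$, each of conditional probability about $\rho_t^{25}$ with $\rho_t=\alpha_t/128$. No per-call bound of the form $C\rho_t^{k}$, with $k$ independent of the call, can yield $(h/M)^3$. A single eligible call may have $w_t=M$, hence $\rho_t$ as large as $\eta$, and $C\eta^{k}$ does not decay as $M/h\to\infty$. The fallback $\alpha_t^{25}\le\alpha_t\gamma^{24}$ has the same defect, and forcing three events would only change $k$. Your closing idea of exploiting the outside gaps points the right way, but no outside overestimation is forced until the comparison threshold moves from scale $d$ to the scale of those gaps.

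What is missing is a per-call bound whose exponent grows linearly in $x=w_t/h$. The paper obtains it from a second threshold $\mu_{[1]}-g/2$ with $g^2=s/(8h)$, a scale set by $h$ and $s$ rather than $d$.
\begin{enumerate}
\item At most $hg^2=s/8$ outside arms have gap below $g$.
\item Since $g^2/d^2=x/8$, for $x\ge2$ one has $d/8\le g/4$.
\item Since $mg^2\ge64x\log(1/\rho_t)$, an active arm of $B$ falls below the threshold with probability at most $\rho_t^{2x}$, and an outside arm with $\Delta_i\ge g$ rises above it with probability at most $\rho_t^{8x}$.
\end{enumerate}
Either at least $\max\{1,b-1\}$ arms of $B$ fall below the threshold, or some removed arm of $B$ lies above it. In the latter case at least $\lceil s/2\rceil-1-s/8\ge s/4$ members of the retained upper half are outside arms with gap at least $g$ lying above it. Together these give $3\rho_t^{2x}$, so with the underestimation bound the conditional probability is at most $3\rho_t^{\max(25,2x)}$. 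With $U=M/h$ and $\rho_t\le\eta x_t/U$, the elementary inequality $(\eta x/U)^{\max(25,2x)}\le(\eta/U)^4x$ and $\sum_tx_t\le U$ give $3\eta^4(h/M)^3$.

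\textbf{Your case analysis also needs correcting.}
\begin{enumerate}
\item Step~1 case~2 is vacuous. The retained set contains $R_0$, and under the reference upper bound $z-d/2\le\mu_{[1]}-7d/16$, so an arm of $B$ with $x_i\ge\mu_{[1]}-7d/16$ is never removed.
\item Consequently your Step~2 claim of at least three bad events fails when $b=2$. The outside arms in the upper half need only beat an estimate that is already underestimated, so nothing beyond one underestimation is forced.
\item Conversely, in your case $h/u\ge s/8$ the single forced event is justified only by this vacuity. Your case~2 forces no event when $h/u\ge s/2-1$.
\item Take the union bound only over arms of $B$, using $b\rho_t^{25(b-1)}\le2\rho_t^{25}$ for $b\ge2$. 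The factor $s$ from a union over all active arms can be of order $h$ and is not absorbed by your summation.
\end{enumerate}
With these changes your treatment of calls with $x\le8$ goes through, since there $\rho_t\le8\eta/U$. The calls with $x>8$ require the $g$-scale argument above.
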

\begin{proof}
First assume $h>0$ and expose the history of an eligible call, including
any fresh reference data. Its parameters and whether the reference upper bound holds
are now fixed, while its arm estimates remain independent. Set
\[
 q=|S\cap B|,\qquad x=w/h,\qquad \rho=\alpha/128.
\]
On histories satisfying the reference upper bound, an elimination event requires one underestimation event if $q=1$,
and at least $q-1$ underestimation events for arms in $B$ if $q\ge2$. Thus the conditional
probability of an elimination event together with the reference upper
bound is at most
\begin{equation}
 \begin{cases}
 \rho^{25},&q=1,\\
 q\rho^{25(q-1)},&q\ge2
 \end{cases}
 \le2\rho^{25}.
 \label{sound:collapse-severe}
\end{equation}
The last inequality follows from $\rho\le\eta\le1/1280$; more
generally $qa^{q-1}\le2a$ for $q\ge2$ and $0<a\le1/2$.
When the reference upper bound fails, the intersected event is empty.

For a separate bound based on the ordering of empirical means, suppose $x\ge2$ and define
$g^2=s/(8h)$. At most $hg^2=s/8$ outside arms in the entire original
instance have gap below $g$, since each contributes more than $g^{-2}$
to $h$. Also $g^2/d^2=x/8\ge1/4$, so every arm in $B$ has mean at least
$\mu_{[1]}-d/8\ge\mu_{[1]}-g/4$. The Gaussian tail bound yields
\begin{align}
 \P\bigl(x_i<\mu_{[1]}-g/2\bigm|\mathcal H_t\bigr)
 &\le e^{-m g^2/32}\le\rho^{2x},&&i\in S\cap B,
 \label{sound:core-low}\\
 \P\bigl(x_i\ge\mu_{[1]}-g/2\bigm|\mathcal H_t\bigr)
 &\le e^{-m g^2/8}\le\rho^{8x},&&i\in S\setminus B,
                         \ \Delta_i\ge g.
 \label{sound:outside-high}
\end{align}
The exponents follow by substituting $m\ge512d^{-2}\log(1/\rho)$
and $g^2=s/(8h)$.

When $q\ge2$, first consider the case in which at most one estimate of an arm in $B$ is at least $\mu_{[1]}-g/2$. At least $q-1$ independent
estimates of arms in $B$ then satisfy \eqref{sound:core-low}. When $q=1$, define
this case instead to mean that its sole estimate is low. In either
case its conditional probability is at most $2\rho^{2x}$, by the same
elementary inequality used in \eqref{sound:collapse-severe}.

In the complementary case, the elimination event requires omitting an arm
in $B$ whose estimate is
at least $\mu_{[1]}-g/2$: there are at least two such estimates of arms in $B$
when $q\ge2$, or the sole estimate of an arm in $B$ is high when $q=1$. Every
member of the empirical upper half outranks the omitted arm. At most
one member of that upper half is an arm in $B$, and at most $s/8$ outside
arms have true gap below $g$. Therefore at least
\[
 \ceil[\big]{s/2}-1-s/8\ge s/4
\]
members of the upper half satisfy \eqref{sound:outside-high}. This
count is only stronger when $q=1$, since no arm in $B$ is retained.
Let $v=\ceil[\big]{s/4}$. A union bound over $v$ of the at most $s$ far
arms, followed by $\binom{s}{v}\le(es/v)^v$, gives probability at most
\[
 (4e\rho^{8x})^v\le(4e\rho^{8x})^{s/4}\le\rho^{2x}.
\]
Indeed $4e\rho^{8x}\le\rho^{4x}<1$ for $\rho\le1/1280$ and
$x\ge2$, and $s\ge8$. Combining the two cases gives the bound based on empirical means
$3\rho^{2x}$. Combining it with \eqref{sound:collapse-severe}, using
the latter alone for $x<2$, proves the conditional bound
\begin{equation}
 \P\biggl(\substack{\text{eligible elimination event,}\\
                       \text{reference upper bound holds}}\,\biggm|\,\mathcal H_t\biggr)
 \le3\rho^{\max(25,2x)}.
 \label{sound:conditional-collapse}
\end{equation}

Put $U=M/h\ge1$. A permitted call has $0<x\le U$ and
$\rho\le\eta x/U$. We claim that
\begin{equation}
 (\eta x/U)^{\max(25,2x)}\le(\eta/U)^4x.
 \label{sound:normalization}
\end{equation}
For $x\le1$, use $\eta/U\le1$ and $x^{25}\le x$. For
$1\le x\le25/2$, divide the left side by the right side and use
$U\ge x$ to obtain a ratio at most $\eta^{21}x^3<1$. For
$x\ge25/2$, the ratio is at most $\eta^{2x-4}x^3$. Its logarithmic
derivative $2\log\eta+3/x$ is negative, and its value at $25/2$ is
less than one. This proves \eqref{sound:normalization} in all cases.

Eligibility is determined before the call's fresh arm estimates. A
union bound using predictable conditional probabilities, followed by
the pathwise weight bound, now gives
\[
 \P\biggl(\substack{\text{some eligible elimination event occurs}\\
                       \text{with its reference upper bound holding}}\biggr)
 \le3(\eta/U)^4\E\sum_{t\text{ eligible}}\frac{w_t}{h}
 \le3(\eta/U)^4 U=3\eta^4(h/M)^3.
\]
No future reference upper-bound event has been conditioned upon. Finally, when
$h=0$, the assumption of a unique best arm and positivity of every outside
arm complexity imply that no outside arms exist. All active arms belong to
$B$, and retaining the empirical upper half of a set of size $s\ge8$ keeps at least four
arms in $B$. The elimination event is therefore impossible in that case.
\end{proof}

\subsection{Large weight bounds and the final error bound}
\label{sound:completion-section}

Exclude $\mathcal F_z\cup\mathcal F_{\mathrm{small}}$, and consider
an incorrect output associated with a fixed $B$. If $r\ge2$, take
the first call at which the number of active arms in $B$ becomes at most
one. If $r=1$, take the call removing $i_{[1]}$. This is an elimination event
of the type in Lemma~\ref{sound:collapse}. Every arm in $B$ has gap less than one eighth of the current tolerance,
since $B$ was defined at the final, smallest tolerance. The call must have $s\ge8$: otherwise a
removed arm in $B$ would create an underestimation event in
$\mathcal F_{\mathrm{small}}$.

If $h=0$, such an output is impossible. If $h>0$ and $M\ge h$,
\eqref{sound:core-risk}, Lemma~\ref{sound:collapse}, and
$\min(u,v)\le\sqrt{uv}$ bound its probability by
\begin{equation}
 \sqrt{3A_r}\,\eta^2(h/M)^{3/2}.
 \label{sound:large-cap-risk}
\end{equation}
This step requires no independence between the underestimation events
for arms in $B$ and the elimination event. Over dyadic budgets at least $h$,
\[
 \sum_{j:M_j\ge h}\sqrt{3A_r}\,\eta^2(h/M_j)^{3/2}
 \le\frac{\sqrt3}{1-2^{-3/2}}\eta^2\sqrt{A_r}
 <3\eta^2\sqrt{A_r}.
\]

For completeness, the sums over cardinalities of the near-optimal sets have absolute bounds
\begin{equation}
 \sum_{r\ge1}A_r\le4p,\qquad
 \sum_{r\ge1}\sqrt{A_r}\le5\sqrt p.
 \label{sound:core-series}
\end{equation}
The first series equals $p+(1-p)^{-2}-1$ and is at most $4p$ for
$p\le1/16$. For the second, put $t=\sqrt p\le1/4$ and use
$\sqrt r\le r$ to bound it by
$t+(1-t)^{-2}-1\le5t$. Our value of $p$ satisfies both restrictions.
Since $\mathcal B$ has at most one set of any given size, the small,
intermediate, and large ranges together contribute at most
\[
 \sum_{r\ge1}\bigl(9A_r+3\eta^2\sqrt{A_r}\bigr)
 \le36p+15\eta^2\sqrt p
\]
to the probability of an incorrect output outside the two excluded
events. Each single-run bound remains valid conditional on reaching
that run, since it starts with fresh data. Dropping reach
probabilities justifies all sums over runs above.

Adding \eqref{alg:global-reference-error} and
\eqref{sound:small-global}, the total probability of any incorrect
output is at most
\begin{align}
 \frac{\delta}{16}+14\eta^{25}+36p+15\eta^2\sqrt p
 &=\frac{\delta}{16}+86\eta^{25}+15\sqrt2\,\eta^{29/2}
 \nonumber\\
 &<\delta.
 \label{sound:final-error}
\end{align}
For the last inequality, $0<\eta<1$ allows both positive powers to be
replaced by $\eta$, and
$(86+15\sqrt2)\eta<108\delta/131072$.
This bounds the probability of an incorrect output by $\delta$ on every
admissible instance. Combining \eqref{sound:final-error} with the
almost-sure termination proved in Appendix~\ref{cost:section}
establishes $\delta$-correctness in the sense of Section~\ref{sec:model}:
$\P_I\bigl(\tau<\infty,\widehat{\imath}=i_{[1]}\bigr)\ge1-\delta$.

\section{Expected sample complexity without instance parameters}
\label{cost:section}

We prove the expected sample complexity bound of Theorem~\ref{thm:universal}.
The analysis uses sample paths in a good event to show that sufficiently large
runs succeed with constant probability. Deterministic sample budgets
then control every other path, including paths on which the best arm
is eliminated.

\subsection{A good event in a fresh run}
\label{cost:good-event-section}

First consider a variant of run $j$ that enforces only its weight
budget, omitting its sample budget. All confidence and sampling rules remain
unchanged. If the best arm is active at a scale entry, accurate PAC
selection and an accurate reference estimate imply
\begin{equation}
 \mu_{[1]}-3d/16\le z\le\mu_{[1]}+d/16.
 \label{cost:reference-interval}
\end{equation}
As long as the original best remains active, this interval is suitable
for every subsequent call at that scale, even if the reference arm is
removed. Let $N(d)=\bigl|\{i:\Delta_i<d\}\bigr|$, including the original best, so $N(d_k)=N_k$ in the notation of Appendix~\ref{sec:prelim}.

We give explicit failure bounds for the modified elimination procedure of
Lemma~\ref{lem:elimination}. Given
\eqref{cost:reference-interval}, an empirical mean of the best within
$d/16$ of its true mean protects it. By
\eqref{alg:gaussian-tail} and \eqref{alg:arm-samples}, the probability
of failure of this sufficient condition is at most $\alpha/64$.
Any arm with $\Delta_i\ge d$ can enter the threshold set only by an upward
error of at least $5d/16$, whose probability is at most
$(\alpha/128)^{25}\le\alpha/64$. If $s>4N(d)$ and the threshold set has
more than $\ceil[\big]{s/2}$ members, more than $s/4$ far arms have such
errors. Markov's inequality bounds this event by $\alpha/16$.
Otherwise the retention rule returns exactly $\ceil[\big]{s/2}$ arms, so the call
halves. When $s\le4$ and every suboptimal active gap is at least $d$,
the sufficient best-arm retention condition and the absence of all far-arm
upward errors give the threshold set exactly $\{i_{[1]}\}$. A union bound makes
the failure probability at most $\alpha/16$. Thus the modified retained
set contains the best arm and halves, and at most two such calls finish from size at most four.

Charge a scale's PAC failure once, at its entry call, at most
$\alpha/16$. Charge best-arm retention and required progress failures at
every call as above. Their combined conditional probability is less
than $\alpha$. Define $\mathcal G_j$ by the absence of these failures
and the accuracy of every visited reference estimate. The definition
can be made along the successive prefixes until the first failure;
the conditional bounds apply while the best is still active. The
pathwise inequality $\sum_t\alpha_t\le\gamma$ and the reference
union bound give
\begin{equation}
 \P(\mathcal G_j)\ge1-q_0,\qquad
 q_0=\gamma+\delta/16<1/2,
 \label{cost:good-probability}
\end{equation}
uniformly for every fresh run. Here $\delta/16$ is the global
bound \eqref{alg:global-reference-error} on all reference-estimation
failures in the whole stream, which dominates the failure probability
of the reference estimates of a single run. This is a per-run
sample complexity statement; we do not sum it over runs to prove correctness.

\subsection{Deterministic bounds on the weights of calls}
\label{cost:envelope-section}

For analysis, let $n_t$ count suboptimal gaps in
$[2^{-t},2^{-t+1})$, for $t\ge0$, and let $K$ be the largest index of a nonempty group. Define
\begin{equation}
 W_k=4^k\biggl(1+\sum_{t\ge k}n_t\biggr),\qquad
 W=\sum_{k=0}^K W_k,\qquad q_k=W_k/W.
 \label{cost:smoothed-work}
\end{equation}
Lemma~\ref{lem:smoothing} states
\begin{equation}
 H\le W<\frac{32}{3}H,\qquad
 \Ent(q)\le\frac{32}{3}\Ent(I)+C_0,
 \qquad 4^K\le4D,
 \label{cost:smoothing-bounds}
\end{equation}
where $\Ent(q)$ is Shannon entropy and $C_0$ is universal.

On $\mathcal G_j$, a scale can stall only if its input size is at most
$4N(d)$. At scale zero the entry size is $n$. At scale $k\ge1$, the
preceding scale could have ended only by such a stall, so the entry
size is at most
\[
 4N(2^{-(k-1)})
 =4\biggl(1+\sum_{t\ge k}n_t\biggr).
\]
An accepted same-scale continuation decreases its input size by at
least a factor $3/4$. If $r=0,1,\ldots$ indexes the successive calls
at scale $k$, this gives
\begin{equation}
 w_{k,r}\le v_{k,r}:=4W_k(3/4)^r,
 \qquad
 V:=\sum_{k=0}^K\sum_{r\ge0}v_{k,r}=16W<171H.
 \label{cost:work-envelope}
\end{equation}
At scale $K$, all suboptimal gaps are at least $2^{-K}$. More than
four arms therefore halve, and at most four arms finish in at most two
further calls. Thus a sample path in $\mathcal G_j$ finishes by scale $K$,
provided the weight bound permits those calls.

This last proviso does not assume the conclusion: the same upper bounds apply to every executed prefix in
$\mathcal G_j$ and its prospective next call. If $h_j\ge H$, their cumulative prospective weight is below
$171H<M_j=1024h_j$. No such prefix can abort at the weight bound.
This proves both completion by scale $K$ and the bounds on call weights for the
process constrained only by the weight bound.

\subsection{Sample complexity on the good event}
\label{cost:declared-section}

Write $h=h_j\ge H$. By Lemma~\ref{lem:pac}, the declared arm and PAC
cost of a call is at most $Cw\log(128/\alpha)$ for a universal
constant $C$; assigning this cost at every call also bounds the PAC
cost, which occurs only at scale entry. Temporarily omit the additional
small-size term $2\log(j+1)$ in $\log(128/\alpha)$. The resulting
baseline cost is bounded by a constant times
\[
 \sum_{k,r}w_{k,r}\log\frac{128M_j}{\gamma w_{k,r}}.
\]
The function $f(x)=x\log\bigl(128M_j/(\gamma x)\bigr)$ is increasing on
$(0,M_j]$. Each upper bound $v_{k,r}$ lies below
$V<171H<M_j$, so \eqref{cost:work-envelope} permits termwise
replacement of $w_{k,r}$ by $v_{k,r}$. Let
$\pi_r=\frac14(3/4)^r$; normalizing these upper bounds gives
$v_{k,r}/V=q_k\pi_r$. Hence
\begin{align}
 \sum_{k=0}^K\sum_{r\ge0}f(v_{k,r})
 &=V\Bigl[\log\frac{128M_j}{\gamma V}
                +\Ent(q)+\Ent(\pi)\Bigr]
 \nonumber\\
 &\le C H\bigl[L+\Ent(I)+1+\log(h/H)\bigr].
 \label{cost:baseline}
\end{align}
Here $16H\le V<171H$, $M_j=1024h$,
$\gamma=\delta/1024$, \eqref{cost:smoothing-bounds} holds, and
the geometric distribution $\pi$ has bounded entropy.

Only calls with $s\le7$ incur the additional sample cost due to the
factor $(j+1)^{-2}$ in the confidence allocation. Their extra
cost is at most $2Cs4^k\log(j+1)$ per call. At any fixed scale, once
the size is at most seven, successive call sizes follow a sequence
bounded by $7,4,2$: a continuation halves rounded up, and a stall ends
the scale. This includes the last possibly stalled call. Consequently
\begin{equation}
 \sum_{\substack{t\text{ at scale }k\\s_t\le7}}s_t\le13,
 \qquad
 \text{additional sample cost}
 \le C D\log(j+1).
 \label{cost:small-surcharge}
\end{equation}
The PAC procedure contributes this additional cost only if the entry call
has at most seven arms, so it is already covered by the same bound.
Calls of size at least eight incur no such additional cost.

There is one reference estimate per visited scale. Its declared cost
from \eqref{alg:reference-samples} is at most
\begin{align}
 C\sum_{k=0}^K4^k\bigl[L+1+\log(j+1)+\log(k+1)\bigr]
 &\le C D\bigl[L+\log(j+1)+\ell(D)\bigr],
 \label{cost:reference-cost}
\end{align}
where $\ell(D)=\log\bigl(e+\log(e+D)\bigr)$. Indeed,
$\sum_{k=0}^K4^k\le(4/3)4^K\le16D/3$, and
$K\le\log_4(4D)$ makes $\log(K+1)\le C\ell(D)$.
Integer ceilings in all sample counts are absorbed by these estimates:
$4^k\ge1$ and all confidence logarithms exceed one.

Combining \eqref{cost:baseline}--\eqref{cost:reference-cost}, on $\mathcal G_j$, every run constrained only
by the weight bound with $h_j\ge H$ has total declared sample cost at most
\begin{equation}
 B_j^{\mathrm{good}}=C_{\mathrm{good}}\Bigl\{
 H\bigl[L+\Ent(I)+1+\log(h_j/H)\bigr]
 +D\bigl[L+\log(j+1)+\ell(D)\bigr]\Bigr\},
 \label{cost:good-cost}
\end{equation}
for a fixed universal $C_{\mathrm{good}}$. In particular, this bounds the sum of
the complete prospective call reservations, not only samples observed
before a possible sample-budget rejection.

\subsection{Choosing the fixed budget constant and summing all trajectories}
\label{cost:expectation-section}

Define, for analysis only,
\begin{equation}
 \Psi=H\bigl[L+\Ent(I)+1\bigr]+D\ell(D),\qquad
 j_* =\ceil[\big]{\log_2(\Psi/L)}.
 \label{cost:threshold}
\end{equation}
Because the means lie in $[0,1]$ and $n\ge2$, we have $H\ge D\ge1$.
Thus $j_*\ge0$, $h_{j_*}\in[\Psi/L,2\Psi/L]$, and
$h_j\ge H$ for all $j\ge j_*$. For such $j$, set $h=h_j$. Then
\[
 \Psi\le hL,\qquad H\log(h/H)\le h.
\]
We also have the uniform estimate
\begin{equation}
 \log(j+1)\le C+\ell(D)+\log(h/D).
 \label{cost:guess-log}
\end{equation}
To see this, put $y=\log(h/D)\ge0$. Since
$j=\log h/\log2$, the quantity $j+1$ is at most
$C(1+\log D+y)$. The inequality
$1+\log D+y\le(1+\log D)(1+y)$ and
$\log(1+y)\le y$ yield \eqref{cost:guess-log}. Finally,
$D\log(h/D)\le h$. Substituting these estimates into
\eqref{cost:good-cost}, using $D\le H$ and $L>\log10$, gives
\begin{equation}
 B_j^{\mathrm{good}}\le C_*h_jL\qquad(j\ge j_*),
 \label{cost:sample-cap-fits}
\end{equation}
for an absolute constant $C_*$ determined only by $C_{\mathrm{PAC}}$,
$C_{\mathrm{elim}}$, $C_{\mathrm{good}}$, and the ratio $\gamma/\delta=1/1024$; it does
not depend on $n$, $\delta$, or the instance. Fix once and for all
$C_Q:=C_*$; any larger absolute constant also works. This choice is
noncircular: neither the analysis with only the weight bound enforced nor $C_*$ depends on
$C_Q$.

Couple the run enforcing both budgets to the run enforcing only the
weight budget, using the same samples. On $\mathcal G_j$ with $j\ge j_*$, the weight bound
permits completion, and its total declared sample cost fits $Q_j$ by
\eqref{cost:sample-cap-fits}. Every complete-call reservation therefore
passes. The event that run $j$ is reached is determined by the
observations of runs $j'<j$, while run $j$ uses fresh
observations independent of them; hence,
conditional on being reached, run $j$ has the law of a fresh run
and the bound \eqref{cost:good-probability} applies conditionally.
Conditional on being reached, each such run thus returns
the correct arm with probability at least $1-q_0$. Reaching the next
run requires failure to return, so induction gives
\begin{equation}
 \P\bigl(\text{run }j_*+t\text{ is reached}\bigr)\le q_0^t,
 \qquad t\ge0.
 \label{cost:reach}
\end{equation}
Returning an incorrect arm only decreases the probability of reaching
subsequent runs, so it does not invalidate this bound.

Let $\tau$ be the total sample count, allowing the value $+\infty$.
Tonelli's theorem and the deterministic budget $T_j\le Q_j$ give
\begin{align}
 \E_I\tau
 &\le\sum_{j<j_*}Q_j+
       \sum_{t\ge0}Q_{j_*+t}q_0^t
 \nonumber\\
 &\le C Q_{j_*}\biggl(1+\sum_{t\ge0}(2q_0)^t\biggr)
 =O(Q_{j_*})=O(\Psi).
 \label{cost:expected-total}
\end{align}
Here $2q_0<1$, the earlier budgets form a geometric sum, and their ceilings
change only the absolute constants. Each run is finite, and
\eqref{cost:reach} also shows that the probability of infinitely many
runs is zero. Thus the algorithm stops almost surely and has finite
expected number of samples on every instance with a unique best arm.

Finally, $L>\log10$ absorbs the additive $H$ in $\Psi$, proving
\[
 \E_I\tau\le C\Bigl\{H\bigl[\log(1/\delta)+\Ent(I)\bigr]
                         +D\ell(D)\Bigr\}.
\]
Together with \eqref{sound:final-error}, this completes the proof of
Theorem~\ref{thm:universal}. All quantities defining $B$, $H$, $D$,
$K$, $W$, $\Psi$, and $j_*$ are used exclusively in the proof; the
algorithm requires none of these quantities as input.

\section{Consequences and normalization}
\label{app:section}

The normalization in our model allows a direct interpretation in
physical measurement units. Suppose an observation from arm
$i$ has law $\mathcal N\bigl(\nu_i,\sigma^2\bigr)$, where $\sigma>0$ is known and
common to all arms, observations are independent, and
$\nu_i\in[a,a+\sigma]$ for a known $a$. Transform each observation
by $Y=(X-a)/\sigma$. This preserves the best arm and
maps the experiment exactly to the model of Section~\ref{sec:model}.
The transformation preserves the order $i_{[r]}$ of the arm labels. Write $\nu_{[r]}=\nu_{i_{[r]}}$, $g_i=\nu_{[1]}-\nu_i$, and $g_{[r]}=g_{i_{[r]}}$, and let $I_\sigma=\bigl((\nu_i-a)/\sigma\bigr)_{i=1}^n$ be the normalized instance. With $H_\sigma=H(I_\sigma)$ and $D_\sigma=D(I_\sigma)$, Theorem~\ref{thm:universal} gives
\begin{equation}
 \begin{aligned}
 H_\sigma=\sum_{i\ne i_{[1]}}\frac{\sigma^2}{g_i^2},
 &\qquad D_\sigma=\frac{\sigma^2}{g_{[2]}^2},\\
 \E\tau&=O\!\Bigl(
 H_\sigma\bigl[\log(1/\delta)+\Ent(I_\sigma)\bigr]
 +D_\sigma\ell(D_\sigma)\Bigr),
 \end{aligned}
 \label{app:normalization}
\end{equation}
Here $g_{[2]}$ is the smallest positive unnormalized gap, and $\Ent(I_\sigma)$
is computed from the normalized gaps $g_i/\sigma$.
Each normalized observation uses exactly one original observation, so
the complexity depends on squared noise-to-gap ratios rather than on
the origin or units of the reported scores.

The algorithm receives no estimate of these ratios or of their
distribution across scales, and its guarantee holds for every labeling
of the same arms. An independent initial random permutation
makes an implementation explicitly invariant to labels while preserving
correctness and the sample bound. The order-oblivious instance-wise lower bound therefore
measures difficulty intrinsic to the mean vector, without
rewarding a favorable input order.

In the normalized model, Theorems~\ref{thm:entropy} and~\ref{thm:universal}
give a criterion for matching the order-oblivious instance-wise lower bound:
\begin{equation}
 \frac{T_A(I)}{\cL(I,\delta)}
 =O\!\biggl(1+\frac{\ell(D)}{(H/D)\bigl[L+\Ent(I)\bigr]}\biggr).
 \label{app:benchmark-ratio}
\end{equation}
The factor $H/D=\sum_{i\ne i_{[1]}}(\Delta_{[2]}/\Delta_i)^2$
is the instance complexity divided by the complexity of an arm with
the smallest positive gap.
Whenever $(H/D)\bigl[L+\Ent(I)\bigr]\ge\ell(D)$, the additional term is
absorbed into the instance-wise lower bound. If all $n-1$ suboptimal gaps are equal,
then $H=(n-1)D$ and $\Ent(I)=0$, so the condition becomes
$(n-1)L\ge\ell(D)$: unknown-gap adaptation is absorbed by the ordinary
confidence cost even though the common gap is not known to the procedure.

Noisy model selection and hyperparameter comparison are covered only
when their score measurements meet the same assumptions: reused
validation data can create dependence between observations, losses need
not be Gaussian, and configurations can differ in noise variance or
measurement cost. The present theorems do not justify substituting such
scores into the Gaussian procedure without an observation model and
corresponding concentration arguments.

\section{Internal randomness and the scope of the algorithm model}
\label{bridge:section}

The algorithms of Section~\ref{sec:model} draw an internal random seed
$Z=(Z_0,Z_1,\ldots)$ with law $\Gamma=\mathcal N(0,1)^{\otimes\N}$, independent
of all rewards, and apply a measurable rule to the seed and the finite
observed history. This appendix records the scope of that model. The
instance-wise lower bound~\eqref{eq:benchmark} is unchanged when the seed may be drawn
from an arbitrary standard Borel probability space
(Proposition~\ref{bridge:benchmark}); the algorithms of
Appendices~\ref{pointwise:section} and~\ref{alg:section} use no internal randomness (Proposition~\ref{bridge:witnesses}); algorithms with an
explicit abandonment action and algorithms specified by history-dependent
probability kernels are covered by the lower bound
(Propositions~\ref{bridge:completion} and~\ref{bridge:kernel}).
Appendix~\ref{bridge:convention} compares the truncated iterated
logarithm $\ell$ of~\eqref{eq:notation-main} with the untruncated
expression of \citet{chen2016open}. These statements are also covered
by the formal development described in Appendix~\ref{verif:section}.

\subsection{Algorithms with a standard Borel internal random seed}
\label{bridge:seed}

Fix an arm count $n\ge2$. Write $\mathcal O_t=\bigl([n]\times\R\bigr)^t$ for the
space of observation histories of length $t$ and
\[
 \mathcal A_n=\bigl\{\mathsf{sample}(i),\mathsf{return}(i):i\in[n]\bigr\}
\]
for the finite decision space. Let $(S,\mu)$ be a standard Borel space
with a probability measure. A \emph{randomized algorithm} on $(S,\mu)$
is a family of measurable maps
\[
 a_t:S\times\mathcal O_t\to\mathcal A_n,\qquad t\ge0.
\]
The seed is drawn once from $\mu$, independently of all rewards; the
algorithm receives neither the means nor the best label. A family of
algorithms may choose $S$, $\mu$, and the decision rules separately for
each arm count and each confidence input. The model of
Section~\ref{sec:model} is the case $(S,\mu)=(\R^\N,\Gamma)$.

For an instance $I$, let $Q_I$ be the law of the reward table
$r=(r_{t,i})_{t\ge0,i\in[n]}$ of independent $\mathcal N(\mu_i,1)$ variables.
Execution uses the product law $\mu\otimes Q_I$. The state at time $0$
is the empty history. At an active state $h\in\mathcal O_t$ with seed
$s$, the decision $a_t(s,h)=\mathsf{sample}(i)$ moves to the history
$\bigl(h,(i,r_{t,i})\bigr)\in\mathcal O_{t+1}$, and $a_t(s,h)=\mathsf{return}(i)$
moves to the absorbing state $\mathsf{return}(i)$. Write
$\mathrm{run}_a(s,r,t)$ for the state at time $t$; it is a measurable
function of $(s,r)$. Since row $t$ of the table is independent of the
seed and of the earlier rows, the sampled entry is a fresh $\mathcal N(\mu_i,1)$
observation conditional on the past, as in Section~\ref{sec:model}.
Success is the event that $\mathrm{run}_a(s,r,t)=\mathsf{return}(i_{[1]})$
for some finite $t$. The sample count is the number of sampling
decisions,
\[
 \tau_a(s,r)=\sum_{t\ge0}\1\bigl\{\mathrm{run}_a(s,r,t+1)\in\mathcal O_{t+1}\bigr\}
 \in[0,\infty],
\]
which is the first return time of Section~\ref{sec:model}, and
$T_a(I)=\int\tau_a\,d(\mu\otimes Q_I)\in[0,\infty]$ is its
unconditional expectation. Neither almost-sure termination nor finite expectation
is required separately. These definitions do not refer to the
model with Gaussian random seeds or to any complexity bound.

\subsection{Representing internal randomness by a Gaussian seed}
\label{bridge:realization}

\begin{lemma}[Randomization lemma]\label{bridge:law}
For every standard Borel probability space $(S,\mu)$ there is a
measurable map $f_\mu:\R^\N\to S$ with $(f_\mu)_\#\Gamma=\mu$. The map
depends on $\mu$ only, not on the instance.
\end{lemma}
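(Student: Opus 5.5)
The plan is the standard two-stage realization argument: first reduce $\Gamma$ to Lebesgue measure on $[0,1]$, then realize $\mu$ from a single uniform variable through a Borel isomorphism.

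\emph{Step 1.} Let $\Phi$ be the standard Gaussian distribution function. It is continuous and strictly increasing, so $U=\Phi(Z_0)$ is uniform on $(0,1)$ whenever $Z_0\sim\mathcal N(0,1)$. The map $z\mapsto\Phi(z_0)$ is a measurable coordinate map from $\R^\N$, and it pushes $\Gamma$ forward to Lebesgue measure $\lambda$ on $(0,1)$. Only this one coordinate of the seed is needed.

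\emph{Step 2.} Realize $\mu$ from one uniform variable. By the Borel isomorphism theorem, a standard Borel space $S$ is Borel isomorphic to one of three models: a countable discrete set, or $\R$ with its Borel $\sigma$-algebra (equivalently $[0,1]$). Fix such an isomorphism $\varphi:S\to S'$, where $S'\subseteq\R$ is Borel, and put $\nu=\varphi_\#\mu$, a Borel probability measure on $\R$. Let $F_\nu$ be its distribution function and $G(u)=\inf\{x:F_\nu(x)\ge u\}$ its generalized inverse on $(0,1)$. Then $G$ is nondecreasing, hence Borel. The standard quantile argument gives $G(u)\le x\iff u\le F_\nu(x)$, so $G_\#\lambda=\nu$.

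\emph{Step 3.} The image of $G$ need not lie inside $S'$, since $S'$ may be a proper Borel subset of $\R$. However, $\nu(\R\setminus S')=0$, so $\lambda\bigl(G^{-1}(\R\setminus S')\bigr)=0$. Fix any $s_0\in S$ and define
\[
 f_\mu(z)=\varphi^{-1}\bigl(G(\Phi(z_0))\bigr)\quad\text{if }G(\Phi(z_0))\in S',\qquad f_\mu(z)=s_0\ \text{otherwise}.
\]
The map $\varphi^{-1}$ is Borel on $S'$, and the case split is along a Borel set, so $f_\mu$ is measurable. Because the modification occurs only on a $\Gamma$-null set, $(f_\mu)_\#\Gamma=(\varphi^{-1})_\#\nu=\mu$.

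Nothing in this construction refers to an instance, so $f_\mu$ depends on $\mu$ only. The countable case of Step 2 needs no separate treatment: embed $S$ into $\R$ as a countable Borel set, and the same quantile construction applies.

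The step I expect to need the most care is Step 3: invoking the Borel isomorphism theorem correctly and handling the null set where $G$ leaves $S'$. Steps 1 and 2 are elementary quantile facts.
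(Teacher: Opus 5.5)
Your proof is correct and follows the same route as the paper: you push $\Gamma$ forward to the uniform law through the single coordinate $\Phi(z_0)$, then realize $\mu$ as the image of one uniform variable. The only difference is that the paper cites the measurable representation theorem for standard Borel probability spaces, whereas you prove it directly via a Borel isomorphism onto a Borel subset of $\R$, the quantile transform, and a modification on a null set (your choice of $s_0$ uses $S\neq\emptyset$, which holds because $\mu$ is a probability measure).
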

\begin{proof}
Let $\Phi$ be the distribution function of $\mathcal N(0,1)$ and $\mathsf U$ the
uniform law on $[0,1]$. Since $\Phi$ is continuous and strictly
increasing, $\Phi(z_0)$ has law $\mathsf U$ under $\Gamma$; it suffices
to compare the measures of the lower intervals $[0,u]$, including the
endpoints $u=0$ and $u=1$. By the measurable representation theorem for
probability measures on standard Borel spaces, there is a measurable
$g_\mu:[0,1]\to S$ with $(g_\mu)_\#\mathsf U=\mu$. Put
$f_\mu(z)=g_\mu\bigl(\Phi(z_0)\bigr)$. The space $S$ is nonempty because it
carries a probability measure, and no atomlessness of $\mu$ is needed.
\end{proof}

The lemma concerns measurable maps only; it does not assert that
$f_\mu$ is computable from finitely many random bits or within a
prescribed running time.

\begin{lemma}[Equality of sample paths under seed simulation]\label{bridge:transport}
Let $a$ be a randomized algorithm on $(S,\mu)$, let $f_\mu$ be as in
Lemma~\ref{bridge:law}, and define the algorithm with a Gaussian random seed
$b_t(z,h)=a_t\bigl(f_\mu(z),h\bigr)$. For every reward table $r$ and every $t$,
\begin{equation}\label{bridge:pathwise}
 \mathrm{run}_b(z,r,t)=\mathrm{run}_a\bigl(f_\mu(z),r,t\bigr).
\end{equation}
Consequently, on every instance $I$, $b$ and $a$ have the same law of
the complete state sequence $\bigl(\mathrm{run}(t)\bigr)_{t\ge0}$, the same
success probability, and the same expected number of samples in
$[0,\infty]$; and $b$ terminates almost surely if and only if $a$ does.
\end{lemma}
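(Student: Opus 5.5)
The pathwise identity \eqref{bridge:pathwise} will be proved by induction on $t$, for fixed $z$ and $r$. At $t=0$ both runs are in the empty history. Suppose the two states agree at time $t$. If this common state is an absorbing return state, both runs stay there, so they agree at time $t+1$. Otherwise the common state is a history $h\in\mathcal O_t$. The next decision of $b$ is $b_t(z,h)=a_t\bigl(f_\mu(z),h\bigr)$, which is exactly the decision of $a$ with seed $f_\mu(z)$ at the same history. The transition rule depends only on this decision and on row $t$ of the same table $r$. Hence the successor states coincide, which completes the induction.

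For the distributional consequences, I will first check that $\Phi_f(z,r)=(f_\mu(z),r)$ pushes $\Gamma\otimes Q_I$ forward to $\mu\otimes Q_I$. It suffices to verify this on measurable rectangles: the preimage of $A\times B$ is $f_\mu^{-1}(A)\times B$, which has mass $\mu(A)Q_I(B)$ by Lemma~\ref{bridge:law}. The $\pi$--$\lambda$ theorem then extends the equality from rectangles to the product $\sigma$-algebra.

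By \eqref{bridge:pathwise}, the entire state sequence of $b$ equals $\Phi_f$ composed with the state sequence of $a$. The sequence space is a countable product, so its measurability is determined coordinatewise. The laws of the state sequences therefore coincide. Success, termination, and $\tau$ are each measurable functionals of the state sequence. Success is a countable union of events $\{\mathrm{run}(t)=\mathsf{return}(i_{[1]})\}$. Termination is a countable union of events $\{\mathrm{run}(t)\notin\mathcal O_t\}$. The sample count $\tau$ is the countable sum defined above. So $a$ and $b$ have equal success and termination probabilities. For the expected sample count, apply the change-of-variables formula for pushforwards to nonnegative $[0,\infty]$-valued functions; this gives $T_b(I)=T_a(I)$, with infinite values allowed. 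The almost-sure termination equivalence follows because the nontermination events have equal probabilities.

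The only point needing care is measurability. Measurability of $b_t$ follows because it is a composition of measurable maps. Measurability of $\mathrm{run}$ in $(s,r)$ is already asserted in Appendix~\ref{bridge:seed}. I expect no real obstacle; the main step is the pushforward identification of the product measure.
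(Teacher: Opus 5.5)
Your proof is correct and follows the paper's argument essentially step for step: induction on $t$ for the pathwise identity, then pushing forward along $(z,r)\mapsto(f_\mu(z),r)$, and change of variables for the $[0,\infty]$-valued $\tau$. You additionally verify the pushforward on rectangles via $\pi$--$\lambda$, which the paper leaves implicit; the only slip is verbal, since the relation you want is $\mathrm{run}_b=\mathrm{run}_a\circ\Phi_f$ (the state sequence of $a$ composed with $\Phi_f$), not the reverse order.
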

\begin{proof}
Each $b_t$ is measurable as a composition. Identity~\eqref{bridge:pathwise}
follows by induction on $t$: both executions start from the empty
history, a returned label is absorbing, and at an active history the
two decisions agree and a sampling decision reads the same entry of $r$.
One coupling therefore works at every time; it preserves every requested
arm and the full sample count, including on erroneous and nonreturning
paths.

The map $\psi(z,r)=\bigl(f_\mu(z),r\bigr)$ takes $\Gamma\otimes Q_I$ to
$\mu\otimes Q_I$. Pushing~\eqref{bridge:pathwise} forward gives equality
of the laws of the entire state sequences, and applying $\psi$ to the
measurable success and termination events gives the two probability
statements. Finally,
\[
 \int\tau_b\,d(\Gamma\otimes Q_I)
 =\int\tau_a\circ\psi\,d(\Gamma\otimes Q_I)
 =\int\tau_a\,d(\mu\otimes Q_I)
\]
by the change-of-variables formula for nonnegative measurable
functions, which needs no integrability. The same $b$ and $\psi$ serve
for every $I$.
\end{proof}

\subsection{Equality of the instance-wise lower bounds}
\label{bridge:equality}

Let $\mathfrak A_{\mathrm{SB}}$ be the class of families
$A=(A^{(n)})_{n\ge2}$ in which $A^{(n)}$ is a randomized algorithm for
$n$ arms on a standard Borel probability space that may depend on $n$.
Such a family is $\delta$-correct if its success probability is
at least $1-\delta$ on every admissible instance at every arm count.
Define
\begin{equation}\label{bridge:benchmark-sb}
 \cL_{\mathrm{SB}}(I,\delta)
 =\inf_{A\in\mathfrak A_{\mathrm{SB}}:\ A\text{ is }\delta\text{-correct}}
 \frac1{n!}\sum_{\pi\in\mathfrak S_n}T_A(\pi I),
\end{equation}
with all expectations and the infimum in $[0,\infty]$. The instance-wise lower bound
$\cL(I,\delta)$ of~\eqref{eq:benchmark} is the same infimum over the
algorithm families with Gaussian random seeds of Section~\ref{sec:model}.

\begin{proposition}[Equality of the instance-wise lower bounds]\label{bridge:benchmark}
For every admissible instance $I$ and every $0<\delta<1$,
$\cL_{\mathrm{SB}}(I,\delta)=\cL(I,\delta)$.
\end{proposition}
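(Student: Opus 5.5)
We prove the two inequalities $\cL_{\mathrm{SB}}(I,\delta)\le\cL(I,\delta)$ and $\cL_{\mathrm{SB}}(I,\delta)\ge\cL(I,\delta)$ separately; both infima live in $[0,\infty]$, so no finiteness is assumed.

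\textbf{The inequality $\cL_{\mathrm{SB}}\le\cL$.} This direction is an inclusion of classes. The seed space $(\R^\N,\Gamma)$ is standard Borel, since $\R^\N$ with the product topology is Polish. A family with Gaussian random seeds as in Section~\ref{sec:model} is therefore a member of $\mathfrak A_{\mathrm{SB}}$. We check that the execution model of Appendix~\ref{bridge:seed} reproduces the same law of the state sequence, success probability, and $T_A(\pi I)$ as the description in Section~\ref{sec:model}. This holds because both models feed a fresh $\mathcal N(\mu_i,1)$ observation conditional on the past at each sampling decision, and both count samples until the absorbing return. Consequently every $\delta$-correct family in the smaller class is $\delta$-correct in the larger class with the same permutation average. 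The infimum over the larger class is then no bigger.

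\textbf{The inequality $\cL_{\mathrm{SB}}\ge\cL$.} Fix a $\delta$-correct $A\in\mathfrak A_{\mathrm{SB}}$; if $A$ does not exist, $\cL_{\mathrm{SB}}=\infty$ and there is nothing to prove. For each arm count $n$, let $(S_n,\mu_n)$ be the seed space of $A^{(n)}$. Apply Lemma~\ref{bridge:law} to obtain $f_{\mu_n}$, and define $B^{(n)}$ by $b_t(z,h)=a_t(f_{\mu_n}(z),h)$, separately for each $n$. This choice is legitimate because families may depend on $n$, and $f_{\mu_n}$ depends only on $\mu_n$, not on the instance. By Lemma~\ref{bridge:transport}, on every admissible instance at every arm count, $B$ has the same success probability as $A$, so $B$ is $\delta$-correct. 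Lemma~\ref{bridge:transport} also gives $T_B(\pi I)=T_A(\pi I)$ in $[0,\infty]$ for every $\pi$. Averaging over $\mathfrak S_n$ yields equal objective values, hence $\cL\le\frac1{n!}\sum_\pi T_A(\pi I)$. Taking the infimum over $A$ gives the claim.

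\textbf{Main obstacle.} The only delicate point is matching the two formalisms exactly. The Section~\ref{sec:model} model is phrased with an observation stream and stopped counts, while Appendix~\ref{bridge:seed} uses the reward table and $\mathrm{run}$. I would argue this by the same pathwise induction as in Lemma~\ref{bridge:transport}, identifying the $k$th sample of the stream with the table entry $r_{t,i}$ at decision time $t$. The sample counts then agree pathwise, including on nonreturning paths, so the expectations agree without any integrability assumption.
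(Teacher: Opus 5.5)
Your proposal is correct and follows the paper's proof essentially verbatim. The inequality $\cL_{\mathrm{SB}}\le\cL$ holds because $(\R^\N,\Gamma)$ is a standard Borel seed space, and the reverse inequality follows by transporting each $A^{(n)}$ to a Gaussian seed via Lemmas~\ref{bridge:law} and~\ref{bridge:transport}, which preserves success probabilities and expected costs in $[0,\infty]$ on every relabeling. Your extra remark about matching the stream and reward-table formalisms is harmless but unnecessary, since Appendix~\ref{bridge:seed} takes Section~\ref{sec:model} to be the case $(S,\mu)=(\R^\N,\Gamma)$ by definition.
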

\begin{proof}
The space $\R^\N$ is Polish and $\Gamma$ is a Borel probability measure
on it, so every algorithm family with Gaussian random seeds belongs to
$\mathfrak A_{\mathrm{SB}}$ with the same success probabilities and
costs. Hence $\cL_{\mathrm{SB}}\le\cL$. Conversely, let
$A\in\mathfrak A_{\mathrm{SB}}$ be $\delta$-correct and apply
Lemma~\ref{bridge:transport} to the distribution of the internal random seed of $A^{(n)}$ for each
$n$. The resulting algorithm family with Gaussian random seeds has the same success probability
on every instance, hence is $\delta$-correct, and the same
expected sample count on every relabeling $\pi I$. Hence $\cL\le\cL_{\mathrm{SB}}$. No
minimizer, finite expectation, or termination assumption is used.
\end{proof}

Theorem~\ref{thm:entropy}, the consequence
$T_A(I)=O\bigl(\cL(I,\delta)+D\ell(D)\bigr)$ of Theorem~\ref{thm:universal}, and
Corollary~\ref{bridge:regimes} below therefore hold with their stated
constants when the instance-wise lower bound is taken over $\mathfrak A_{\mathrm{SB}}$.
This is an equality of instance-wise lower bounds, not merely the inclusion of the
algorithms with Gaussian random seeds in a larger class.

\subsection{The upper-bound algorithms need no internal randomness}
\label{bridge:deterministic}

Call an algorithm with a Gaussian random seed $b$ \emph{seed independent} if
$b_t(z,h)=b_t(z',h)$ for all $z,z'\in\R^\N$, all $h$, and all $t$.

\begin{proposition}[Deterministic algorithms]\label{bridge:witnesses}
The algorithm with a fixed target instance of Appendix~\ref{pointwise:section} and the
single algorithm of Appendix~\ref{alg:section} are both seed
independent, at every arm count and every confidence input. Each
therefore admits
measurable decision functions of the observation history alone, and
reinterpreting these functions as an algorithm with a Gaussian random seed gives back
exactly the original algorithm.
\end{proposition}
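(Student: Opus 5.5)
The plan is to inspect the two constructions and check, step by step, that no decision ever reads the seed $z$. Seed independence then holds by definition. After that, a decision function of the history alone can be extracted, and measurability and the round-trip identity follow directly.

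\emph{Step 1: the fixed-target algorithm of Appendix~\ref{pointwise:section}.} At each time, the next action is a function of the following data only:
\begin{itemize}
\item the fixed parameters: the group sizes, $W_k$, $q_k$, $\eta_k$, the schedule $\varepsilon_t$, and the fallback radii $\rho_r$;
\item the position within the current run, meaning the scale, invocation index and sub-block of the PAC, reference or arm-estimate phase;
\item empirical means already computed from the observed history.
\end{itemize}
Each primitive is deterministic in this sense. Lemma~\ref{lem:pac} retains the empirical upper half with a fixed tie rule and deterministic sample counts. Lemma~\ref{lem:elimination} uses thresholding, and its abort tests use deterministic sizes. Interleaving with the fallback $F$ alternates by parity of the time index, and $F$ compares confidence intervals. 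Other arm counts run $F$ alone, and singletons return immediately. None of these steps uses the seed.

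\emph{Step 2: the single algorithm of Appendix~\ref{alg:section}.} The same bookkeeping applies. The reservation checks against $M_j$ and $Q_j$ are functions of the state counters. The confidence $\alpha$ in \eqref{alg:alpha} and the sample counts $m$ and $m_z$ depend on $j$, $k$ and $s$. The reference value $z$ and the enlargement of $R_0$ by decreasing $x_i$, with a fixed arm order for ties, are functions of the observations. The text states explicitly that no internal randomness is used.

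\emph{Step 3: the main obstacle, formalizing the dependence on the history.} I will argue by induction on $t$ that the entire control state can be recovered from the observation history $h\in\mathcal O_t$. The control state consists of the run index, scale, active set, counters, reference value, current block and samples collected in it. Each control variable changes only at block boundaries, and those boundaries are determined by deterministic counts that are themselves functions of earlier control states. So the state is a function $\sigma_t(h)$ built by finitely many compositions of:
\begin{itemize}
\item coordinate projections;
\item averages;
\item comparisons with thresholds;
\item finitely-valued case splits.
\end{itemize}
All of these are Borel, so $\sigma_t$ is measurable on $\mathcal O_t$. The decision $b_t(z,h)=a_t(\sigma_t(h))$ is then measurable and constant in $z$, which is seed independence. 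Setting $\tilde b_t(h)=b_t(z_0,h)$ for any fixed $z_0$ gives the history-only rule. Reinterpreting it as a seeded algorithm, by ignoring $z$, returns $b_t$ exactly, by the seed independence just shown.

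The only delicate points are the finitely-valued case splits involving ties and ceilings. These I handle by noting that the tie rules are fixed lexicographic orders, so the resulting sets are Borel.
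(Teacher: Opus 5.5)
Your proposal is correct and follows essentially the paper's argument. Both check that the primitives, run loops, budgets and abort rules, interleaving, and fallback read only observed values and fixed parameters, then evaluate at a fixed seed $z_0$ to get a history-only rule whose reinterpretation returns the original algorithm. The one difference is how measurability is obtained: you prove Borel measurability directly by reconstructing the control state $\sigma_t(h)$ from the history, whereas the paper takes joint measurability as given and notes that a section of a jointly measurable map is measurable.
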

\begin{proof}
The primitives of Appendix~\ref{sec:prelim} draw deterministic numbers
of samples and compute the median-elimination output, the reference
estimate, the retained set, and the fixed tie rule from the observed
values alone. The run loops of both algorithms, their budgets and abort
rules, the scale changes, and the fallback of
Appendix~\ref{pointwise:section} are functions of the observed values
and of fixed parameters, in every parameter branch. Interleaving two
seed-independent streams sample by sample is seed independent. Following
the definitions of the two algorithms therefore proves seed independence
of the complete families. Evaluating a seed-independent measurable
algorithm at one fixed seed $z^0$ gives a measurable function of the
history alone, namely a section of a jointly measurable map, and seed
independence gives $b_t(z,h)=b_t(z^0,h)$ for every $z$.
\end{proof}

Accordingly, both algorithms belong to $\mathfrak A_{\mathrm{SB}}$ with a
one-point seed space and its Dirac law. Their correctness,
almost-sure termination, finite expected number of samples, and cost bounds
are those established in Appendices~\ref{pointwise:section}
and~\ref{alg:section}--\ref{cost:section}; this appendix does not
replace them by new procedures. The model permits arbitrary measurable
dependence on a complete internal random seed, but every external action is a
sampling or return decision. The statements above concern this standard
Borel formulation; they are not theorems about arbitrary measurable
spaces, arbitrary program syntax, or nonterminating internal computation
between two external actions.

\subsection{Explicit abandonment}
\label{bridge:abandon}

One may enlarge the decision space by an action $\mathsf{abandon}$: it
is absorbing, requests no further sample, and does not count as a
correct return. Execution, success, and $\tau$ are defined as in
Appendix~\ref{bridge:seed}, with the abandoned state as a second kind of
terminal state.

\begin{proposition}[Completion of explicit abandonment]\label{bridge:completion}
For every measurable algorithm $a$ with this additional action there is a
measurable sample-or-return algorithm $\bar a$ on the same seed space,
executed with the same seed and reward table, such that on every path
the two executions make the same sampling requests,
$T_{\bar a}(I)=T_a(I)$ on every instance, and the success probability
of $\bar a$ is at least that of $a$.
\end{proposition}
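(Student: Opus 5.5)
The plan is to replace every $\mathsf{abandon}$ decision by a return of a fixed label, say $\mathsf{return}(1)$. Concretely, on the same seed space $(S,\mu)$ define
\[
 \bar a_t(s,h)=\begin{cases}\mathsf{return}(1),& a_t(s,h)=\mathsf{abandon},\\ a_t(s,h),&\text{otherwise}.\end{cases}
\]
Each $\bar a_t$ is measurable. The preimage of any action under $\bar a_t$ is a finite union of preimages under $a_t$, because the decision space $\mathcal A_n\cup\{\mathsf{abandon}\}$ is finite and discrete. Hence $\bar a$ is a randomized algorithm on $(S,\mu)$ in the sense of Appendix~\ref{bridge:seed}.

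Next I would prove a pathwise correspondence by induction on $t$, as in Lemma~\ref{bridge:transport}. For every seed $s$ and reward table $r$, the two executions coincide at all times before the first terminal action of $a$. The base case is the empty history. For the inductive step, at an active history the decisions agree unless $a$ abandons; a sampling decision then reads the same entry $r_{t,i}$ in both executions. At the first terminal action of $a$ there are two cases. If it is $\mathsf{return}(i)$, then $\bar a$ returns $i$ at the same time. If it is $\mathsf{abandon}$, then $\bar a$ returns label $1$ at the same time. Both terminal states are absorbing and request no sample. It follows that the two executions make the same sampling requests on every path and that $\tau_{\bar a}(s,r)=\tau_a(s,r)$ pointwise in $[0,\infty]$. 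Integrating against $\mu\otimes Q_I$ then gives $T_{\bar a}(I)=T_a(I)$ on every instance, with no integrability assumption needed.

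For the success probability, I would check the inclusion of events pathwise. If $a$ reaches $\mathsf{return}(i_{[1]})$, then $\bar a$ reaches the same state, by the correspondence above. The only new paths on which $\bar a$ returns are paths on which $a$ abandoned. There $\bar a$ returns $1$, which may or may not equal $i_{[1]}$, so these paths can only add to success. Therefore the success event of $a$ is contained in that of $\bar a$, and monotonicity of measure gives the inequality.

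There is no real obstacle here. The step needing the most care is the measurability bookkeeping together with the induction on states, to ensure that the absorbing abandoned state maps onto the absorbing return state without changing the sample count. That is handled exactly as in the proof of Lemma~\ref{bridge:transport}. It also follows that $\delta$-correctness transfers, so the lower bound of Theorem~\ref{lower:theorem} covers algorithms that are allowed to abandon.
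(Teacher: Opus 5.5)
Your proposal is correct and takes essentially the same approach as the paper. The paper replaces $\mathsf{abandon}$ by $\mathsf{return}(1)$ and proves by induction that the state sequences agree once the abandoned state is mapped to $\mathsf{return}(1)$. From this it gets $\tau_{\bar a}=\tau_a$ pathwise and the success inequality by inclusion of success events; your measurability check via finite unions of preimages is a detail the paper leaves implicit.
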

\begin{proof}
Let $\bar a_t(s,h)=\mathsf{return}(1)$ when $a_t(s,h)=\mathsf{abandon}$
and $\bar a_t(s,h)=a_t(s,h)$ otherwise. Define $\theta$ on states by
fixing every active history and every returned label and mapping the
abandoned state to $\mathsf{return}(1)$. Induction on $t$ gives
$\mathrm{run}_{\bar a}(s,r,t)=\theta\bigl(\mathrm{run}_a(s,r,t)\bigr)$: the
decisions agree at every active history, and both kinds of terminal
state are absorbing. In particular the state is active at time $t+1$
under $\bar a$ exactly when it is active under $a$, so
$\tau_{\bar a}=\tau_a$ pathwise; integrating gives the cost identity.
Every correct return of $a$ is a correct return of $\bar a$, and an
abandoned path may in addition become a correct return, so monotonicity
of measure gives the success inequality.
\end{proof}

Applied at each arm count, the completion preserves correctness
and the expected sample count on every relabeling $\pi I$, so Proposition~\ref{bridge:benchmark} extends to
families with an abandonment action. It does not preserve the complete
return law or the exact success probability. An explicit measurable
abandonment action is also distinct from unobservable nonterminating
internal computation between two external decisions; the proposition
makes no claim about the latter.

\subsection{History-dependent probability kernels}
\label{bridge:kernels}

A second formulation specifies, for each $t\ge0$, a probability kernel
$\kappa_t$ from $\mathcal O_t$ to the finite set $\mathcal A_n$, that
is, measurable functions $h\mapsto\kappa_t\bigl(h,\{d\}\bigr)$, $d\in\mathcal A_n$,
summing to one. Its state laws are defined directly, without a sampler:
$\nu_0^I$ is the Dirac mass at the empty history, and $\nu_{t+1}^I$ is
the image of $\nu_t^I$ under the transition that fixes every returned
label, moves an active $h\in\mathcal O_t$ to $\mathsf{return}(i)$ with
probability $\kappa_t\Bigl(h,\bigl\{\mathsf{return}(i)\bigr\}\Bigr)$, and with probability
$\kappa_t\Bigl(h,\bigl\{\mathsf{sample}(i)\bigr\}\Bigr)$ appends $(i,x)$ with
$x\sim \mathcal N(\mu_i,1)$. The direct success probability and unconditional
cost are
\begin{equation}\label{bridge:kernel-cost}
 p_K(I)=\sup_{t\ge0}\nu_t^I\Bigl(\bigl\{\mathsf{return}(i_{[1]})\bigr\}\Bigr),\qquad
 T_K(I)=\sum_{t\ge0}\nu_{t+1}^I(\mathcal O_{t+1}).
\end{equation}
The index $t+1$ is essential: a sample is taken at time $t$ exactly when
the state at time $t+1$ is active. The supremum over $t$ of the
$\nu_t^I$-probability of all returned labels is the total return
probability; the algorithm specified by probability kernels terminates almost surely when it equals
one.

\begin{proposition}[Implementation of randomized decision rules]\label{bridge:kernel}
Every algorithm specified by probability kernels $K=(\kappa_t)_{t\ge0}$ can be implemented by an algorithm with an internal random seed, and hence, by Lemma~\ref{bridge:transport},
by an algorithm with a Gaussian seed, chosen before the instance is known. On every admissible
instance it has success probability $p_K(I)$ and expected number of samples
$T_K(I)$, and it terminates almost surely if and only if the total
return probability of $K$ equals one. Applied at every arm count, the
implementation preserves correctness and the permutation average
of the direct costs.
\end{proposition}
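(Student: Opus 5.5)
The plan is to realize the kernels by an explicit sampler on a standard Borel seed space, and then transfer to a Gaussian seed through Lemma~\ref{bridge:transport}. Take the seed space $S=[0,1]^{\N}$ with the product uniform law $\mathsf U^{\otimes\N}$; this space is standard Borel. Enumerate the finite decision set $\mathcal A_n=\{d_1,\ldots,d_{2n}\}$ in a fixed order. Define
\[
 a_t(s,h)=d_j\quad\text{where $j$ is the least index with}\quad s_t\le\sum_{l\le j}\kappa_t\bigl(h,\{d_l\}\bigr).
\]
(If rounding leaves no such index, assign a fixed default action; this happens only when $s_t=1$, a null set.) Each cumulative sum is a measurable function of $h$, so $a_t$ is jointly measurable on $S\times\mathcal O_t$ as a finite case split. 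For every fixed $h$, the decision $a_t(s,h)$ has law $\kappa_t(h,\cdot)$ when $s_t$ is uniform. Since the coordinates $s_t$ are used only at time $t$, it follows that, conditional on the history, the time-$t$ decision has law $\kappa_t(h,\cdot)$. The decision is also independent of the reward row $r_{t,\cdot}$ and of the earlier coordinates of the seed.

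The main step is to check, by induction on $t$, that the law of $\mathrm{run}_a(s,r,t)$ under $\mathsf U^{\otimes\N}\otimes Q_I$ equals $\nu_t^I$. The base case is the Dirac mass at the empty history. For the inductive step, $\mathrm{run}_a(\cdot,t)$ is a measurable function of $(s_0,\ldots,s_{t-1})$ and of the reward rows $0,\ldots,t-1$. It is therefore independent of the pair $(s_t,r_{t,\cdot})$, whose law is $\mathsf U\otimes\bigotimes_i\mathcal N(\mu_i,1)$. Disintegrating over the state at time $t$ then gives the following. A returned label stays fixed. An active $h$ moves to $\mathsf{return}(i)$ with probability $\kappa_t(h,\{\mathsf{return}(i)\})$. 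It moves to $(h,(i,r_{t,i}))$ with the stated probability, and on that event the appended observation is $\mathcal N(\mu_i,1)$. This is exactly the transition defining $\nu_{t+1}^I$.

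The anticipated obstacle is bookkeeping rather than analysis: the independence claims must use only the past coordinates, and this must be made precise through measurability of the run map.

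Once the marginal laws agree, all the quantities follow. The success event is an increasing union over $t$ of the events $\{\mathrm{run}(t)=\mathsf{return}(i_{[1]})\}$, so continuity from below gives its probability as $p_K(I)$. Tonelli's theorem, applied to the expression for $\tau_a$ in Appendix~\ref{bridge:seed}, gives $T_a(I)=\sum_t\nu_{t+1}^I(\mathcal O_{t+1})=T_K(I)$. Almost-sure termination is the statement that the increasing return events have probability one, which is equivalent to the total return probability of $K$ being one.

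Finally, apply Lemma~\ref{bridge:transport} with $\mu=\mathsf U^{\otimes\N}$. This produces a Gaussian-seed algorithm with the same state-sequence laws, and so the same success probability, cost, and termination behavior. The construction depends only on the kernels, never on $I$. Applying it separately at each arm count therefore preserves $\delta$-correctness and every $T_K(\pi I)$, hence also their permutation average.
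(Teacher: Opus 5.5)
Your proposal is correct and follows the paper's proof essentially step by step: an inverse-CDF sampler on $[0,1]^{\N}$, an induction on the state laws using independence of $(s_t,r_{t,\cdot})$ from the past, continuity from below and Tonelli's theorem for success, cost, and termination, and finally Lemma~\ref{bridge:transport}. One cosmetic point: with your non-strict inequality $s_t\le\sum_{l\le j}\kappa_t(h,\{d_l\})$, a valid index always exists because the full sum equals one, so the default action is never needed; the paper instead uses a strict inequality and handles $u=1$ separately.
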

\begin{proof}
Enumerate $\mathcal A_n=\{d_1,\ldots,d_{2n}\}$ and define
\[
 F_t(h,u)=d_m\quad\text{for the least $m$ with }
 u<\sum_{l\le m}\kappa_t\bigl(h,\{d_l\}\bigr),\qquad F_t(h,1)=d_{2n}.
\]
Then $F_t$ is jointly measurable and
$\mathsf U\Bigl(\bigl\{u:F_t(h,u)=d\bigr\}\Bigr)=\kappa_t\bigl(h,\{d\}\bigr)$ for every $h$ and $d$.
Take the seed space $[0,1]^\N$ with the law $\mathsf U^{\otimes\N}$ and
the decisions $a_t(u,h)=F_t(h,u_t)$. The state $\mathrm{run}_a(u,r,t)$
is a function of $u_0,\ldots,u_{t-1}$ and of the first $t$ rows of $r$,
so the pair $(u_t,r_{t,\cdot})$ is independent of it, and $u_t$ is
independent of $r_{t,\cdot}$. By independence and Tonelli's theorem, the
law of $\mathrm{run}_a(u,r,t+1)$ is the image of the law of
$\mathrm{run}_a(u,r,t)$ under the direct transition: from an active $h$
the decision $F_t(h,u_t)$ has law $\kappa_t(h,\cdot)$, and given
$\mathsf{sample}(i)$ the appended value $r_{t,i}$ has law $\mathcal N(\mu_i,1)$.
Induction gives $\mathrm{Law}_I\bigl(\mathrm{run}_a(u,r,t)\bigr)=\nu_t^I$ for
every $t$.

The events $\bigl\{\mathrm{run}_a(u,r,t)=\mathsf{return}(i_{[1]})\bigr\}$
increase in $t$ because returns are absorbing, and their union is the
success event; continuity from below gives the success probability
$p_K(I)$. The sampling indicator at time $t$ is the indicator that the
state at time $t+1$ is active, so Tonelli's theorem gives
$T_a(I)=T_K(I)$, including the value $+\infty$. The same absorbing-event
argument for all returned labels gives the termination equivalence.
Since $[0,1]^\N$ is standard Borel, Lemma~\ref{bridge:transport}
transports the implementation to the model with Gaussian random seeds with the same
three quantities. The construction does not depend on $I$, so applying
it at every arm count preserves correctness and every
permutation average.
\end{proof}

In particular, every $\delta$-correct algorithm specified by $K$
satisfies
\begin{equation}\label{bridge:kernel-lower}
 \cL(I,\delta)\le\frac1{n!}\sum_{\pi\in\mathfrak S_n}T_K(\pi I),
\end{equation}
so the lower bound of Theorem~\ref{thm:entropy} applies to the direct
costs of the probability-kernel formulation. This is the comparison direction used
in this paper; it does not identify the infimum over families of probability kernels
with the infimum over randomized algorithms, and no converse
representation of every randomized algorithm by history-dependent probability kernels is
claimed. The formal development (Appendix~\ref{verif:section})
establishes the finite-time state laws and the three conclusions above,
not a separate equality of measures on an independently constructed
infinite trace space.

\subsection{The logarithmic convention}
\label{bridge:convention}

The additive term of Conjecture~3.2 of \citet{chen2016open} is written
there as $\Delta_{[2]}^{-2}\ln\ln\Delta_{[2]}^{-1}$. This expression is
negative for $e^{-1}<\Delta_{[2]}<1$ and undefined at the admissible gap
$\Delta_{[2]}=1$, so a positive convention is needed. The paper uses
$\ell(D)=\log\bigl(e+\log(e+D)\bigr)$ with $D=\Delta_{[2]}^{-2}$,
see~\eqref{eq:notation-main}; Section~\ref{sec:model} also introduces
$g(\Delta)=\max\bigl\{1,\log\log(1/\Delta)\bigr\}$ for $0<\Delta<1$ and $g(1)=1$.
Both agree with the untruncated expression to first order as the gap
vanishes: with $x=\log(1/\Delta)$,
$\log(e+\Delta^{-2})=2x+\log(1+e\Delta^2)=2x+o(1)$, and one more
logarithm gives
\begin{equation}\label{bridge:expansion}
 \ell(\Delta^{-2})=\log\log(1/\Delta)+\log2+o(1)\qquad(\Delta\to0).
\end{equation}
Hence $\Delta^{-2}\ell(\Delta^{-2})$ and $\Delta^{-2}\log\log(1/\Delta)$
have the same asymptotic order. On the whole admissible range the two
positive conventions are equivalent up to absolute constants:
\begin{equation}\label{bridge:sandwich}
 g(\Delta)\le\ell(\Delta^{-2})\le7g(\Delta)\qquad(0<\Delta\le1).
\end{equation}
To prove~\eqref{bridge:sandwich}, let $x=\log(1/\Delta)\ge0$, so that
$D=\Delta^{-2}=e^{2x}\ge1$. Since $e<3D$, $\log(e+D)\le\log(4D)\le3+2x$,
and therefore $1<\ell(D)\le\log(6+2x)$. For the lower bound, if $x>0$
then $e+\log(e+D)>\log D=2x>x$, so $\ell(D)>\log x=\log\log(1/\Delta)$;
together with $\ell(D)>1$ this gives $\ell(D)\ge g(\Delta)$, and at
$\Delta=1$ we have $g(1)=1<\ell(1)$. For the upper bound, if $x\le1$
then $\ell(D)\le\log8<6\le7g(\Delta)$; if $x>1$ then $6+2x\le8x$, so
$\ell(D)\le\log8+\log x\le6+\log x\le7\max\{1,\log x\}=7g(\Delta)$.
Theorem~\ref{thm:universal} may therefore be stated equivalently with
$Dg(\Delta_{[2]})$ in place of $D\ell(D)$. No comparison of the form
$D\ell(D)\le D\log\log(1/\Delta_{[2]})+O(D)$ holds uniformly, because
the right-hand side tends to $-\infty$ as $\Delta_{[2]}\to1$.

\begin{corollary}[Small gaps and gaps bounded away from zero]\label{bridge:regimes}
Let $\Delta=\Delta_{[2]}$ and $0<\delta<0.1$. The algorithm of
Theorem~\ref{thm:universal} satisfies
\begin{align}
 T_A(I)&=O\bigl(\cL(I,\delta)+\Delta^{-2}\log\log(1/\Delta)\bigr)
 &&(0<\Delta\le e^{-e}),\label{bridge:small-gap}\\
 T_A(I)&=O\bigl(\cL(I,\delta)\bigr)
 &&(e^{-e}<\Delta\le1).\label{bridge:regular-gap}
\end{align}
\end{corollary}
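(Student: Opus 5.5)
The plan is to start from Theorem~\ref{thm:universal}, which gives
\[
 T_A(I)\le C\bigl\{H[L+\Ent(I)]+D\ell(D)\bigr\},
\]
and then to use the lower bound of Theorem~\ref{lower:theorem}, namely $\cL(I,\delta)\ge\frac15 H[L+\Ent(I)]$. Together these give $T_A(I)\le C\bigl(5\cL(I,\delta)+D\ell(D)\bigr)$. It therefore remains only to compare the additive term $D\ell(D)$ with the desired expression in each regime, writing $\Delta=\Delta_{[2]}$ and $D=\Delta^{-2}$.

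\emph{Small gaps, $0<\Delta\le e^{-e}$.} Here $x=\log(1/\Delta)\ge e$, so $\log x\ge1$. Then $g(\Delta)=\max\{1,\log\log(1/\Delta)\}=\log\log(1/\Delta)$. The sandwich \eqref{bridge:sandwich} gives $\ell(D)\le7g(\Delta)=7\log\log(1/\Delta)$. Hence
\[
 D\ell(D)\le7\Delta^{-2}\log\log(1/\Delta),
\]
which proves \eqref{bridge:small-gap}.

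\emph{Gaps bounded away from zero, $e^{-e}<\Delta\le1$.} Here $x<e$, so $D=e^{2x}<e^{2e}$. The proof of \eqref{bridge:sandwich} gives $\ell(D)\le\log(6+2x)<\log(6+2e)$, an absolute constant. Thus $D\ell(D)\le c\,D$ for an absolute constant $c$. Next use $D\le H$, since $D$ is one summand of $H$. Since $\delta<0.1$, we have $L>\log10>1$, so $D\le H\le H[L+\Ent(I)]\le5\cL(I,\delta)$ by Theorem~\ref{lower:theorem}. This absorbs the additive term and gives \eqref{bridge:regular-gap}.

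There is no real obstacle here. The only points needing care are two. First, the threshold $e^{-e}$ is exactly where $\log\log(1/\Delta)\ge1$, so the truncation in $g$ is inactive. Second, in the second regime the lower bound must be used through $D\le H$ together with $L>1$, rather than through any comparison with the untruncated iterated logarithm. That comparison fails there, because $\log\log(1/\Delta)$ is negative or undefined in this range.
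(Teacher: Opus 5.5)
Your proof is correct and follows the paper's route: combine Theorem~\ref{thm:universal} with the lower bound $\cL(I,\delta)\ge H\bigl[L+\Ent(I)\bigr]/5$ to get $T_A(I)=O\bigl(\cL(I,\delta)+D\ell(D)\bigr)$. You then bound $D\ell(D)$ by $O\bigl(\Delta^{-2}\log\log(1/\Delta)\bigr)$ for small gaps, and for large gaps absorb it via $D\le H$ and $L>\log 10$. The only difference is cosmetic: you reuse the sandwich \eqref{bridge:sandwich} and the bound $\ell(D)\le\log(6+2x)$ from its proof, getting constants $7$ and $\log(6+2e)$, whereas the paper redoes the elementary estimates directly to obtain $\ell(D)<3\log\log(1/\Delta)$ and $\ell(D)<3$.
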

\begin{proof}
Put $D=\Delta^{-2}$. By Theorem~\ref{thm:entropy},
$H\bigl[\log(1/\delta)+\Ent(I)\bigr]=O\bigl(\cL(I,\delta)\bigr)$, so
Theorem~\ref{thm:universal} gives $T_A(I)=O\bigl(\cL(I,\delta)+D\ell(D)\bigr)$.

In the first regime let $x=\log(1/\Delta)\ge e$, so $D=e^{2x}>e$. Then
$e+D<2D$ and $e+\log(e+D)<e+\log2+2x\le4x$, the last step because
$e+\log2<2e\le2x$. Hence $\ell(D)<\log(4x)=\log4+\log x\le3\log x$,
using $\log4<2\le2\log x$. Thus
$D\ell(D)\le3\Delta^{-2}\log\log(1/\Delta)$, which
gives~\eqref{bridge:small-gap}. At the endpoint $\Delta=e^{-e}$ the
double logarithm equals one.

In the second regime $1\le D<e^{2e}$. The function $\ell$ is increasing
and $e+e^{2e}<2e^{2e}$, so
$\ell(D)\le\ell(e^{2e})<\log(3e+\log2)<\log(4e)<3$. Hence
$D\ell(D)<3D\le3H$, because $D$ is one of the summands of $H$. Since
$\log(1/\delta)>\log10$ and $\Ent(I)\ge0$,
$H\le H\bigl[\log(1/\delta)+\Ent(I)\bigr]/\log10=O\bigl(\cL(I,\delta)\bigr)$ by
Theorem~\ref{thm:entropy}, which gives~\eqref{bridge:regular-gap}. The
two-arm term is thus absorbed into the instance-wise lower bound in the final
bound; it does not disappear from the intermediate cost analysis of
Appendix~\ref{cost:section}.
\end{proof}

Corollary~\ref{bridge:regimes} states the untruncated term on an
explicit small-gap range where it is at least one and asymptotically
active, and the bound in terms of the instance-wise lower bound alone on the remaining bounded range. For
$e^{-e}<\Delta\le e^{-1}$, \eqref{bridge:regular-gap} also
implies~\eqref{bridge:small-gap} because $\log\log(1/\Delta)\ge0$ there.
The corollary does not assert the literal untruncated display for every
gap: that expression is negative for $e^{-1}<\Delta<1$ and undefined at
$\Delta=1$. By Proposition~\ref{bridge:benchmark}, both displays hold
verbatim with $\cL_{\mathrm{SB}}(I,\delta)$ in place of $\cL(I,\delta)$.

Explicit conventions of this kind appear in the related literature.
\citet{jamieson2014lil} use in their sample complexity bound the truncated function $\log\log_+(x)$, equal to $\log\log x$ for
$x\ge e$ and to $0$ otherwise, so that it is defined and nonnegative
throughout $0<\Delta_i\le1$, and
\citet{chen2017towards} work in a regime in which the confidence
parameter and all gaps tend to zero. Section~\ref{sec:model} states the
present boundary convention explicitly; it is not attributed to the
problem statement of \citet{chen2016open}.

\section{Machine-checked verification}
\label{verif:section}

The definitions of Section~\ref{sec:model} and the main statements of
this paper have been formalized and proved in the Lean~4 proof
assistant \citep{demoura2021lean} on top of the Mathlib library,
whose original design is described by \citet{mathlib2020library}.
The development accompanies this
manuscript. This appendix states what is verified, how the
formal statements relate to the theorems as printed, and where the
formal proofs deviate from the presentation in
Appendices~\ref{sec:prelim}--\ref{cost:section}.

\paragraph{Formal model.}
An instance is a vector in $[0,1]^n$ with $n\ge2$ and a unique maximum.
An algorithm is a measurable function of an internal random seed, distributed as an
infinite sequence of independent standard Gaussian variables, and of the
finite observation history, with values in the sample-or-return decision
space; rewards form an independent Gaussian table and a sampling decision
appends the requested entry to the history. Success is the event that the
best label is returned at some finite time; the sample count is the sum
of the sampling indicators, with value $+\infty$ on nonreturning paths;
its expectation is the unconditional integral in $[0,\infty]$. $\delta$-correctness quantifies over every instance of every arm count,
and the instance-wise lower bound is the infimum over $\delta$-correct algorithm families of
the average over all $n!$ relabelings, exactly as in
\eqref{eq:benchmark}. The gap groups, $H$, $D$, and $\Ent(I)$ are
defined as in \eqref{eq:gap-entropy}--\eqref{eq:notation-main}. An independently written
transcription of these definitions and of the target statements was
checked, by definitional unfolding, to coincide with the definitions used
in the proofs.

\paragraph{Verified statements.}
For every $0<\delta<1/10$ and every instance:
\begin{enumerate}
\item the lower bound
 $\cL(I,\delta)\ge H(I)\bigl[\log(1/\delta)+\Ent(I)\bigr]/5$ of
 Theorem~\ref{lower:theorem} in its weaker $1/5$ form, and the upper
 bound $\cL(I,\delta)\le6.4\cdot10^{13}\,H(I)\bigl[\log(1/\delta)+\Ent(I)\bigr]$
 of Proposition~\ref{pointwise:theorem}, hence
 Theorem~\ref{thm:entropy};
\item one algorithm family, receiving only $\delta$ and $n$, that is
 $\delta$-correct, terminates almost surely, has finite expected sample complexity, and satisfies
 $T_A(I)\le12(10^{12}+1)\,\Bigl(H(I)\bigl[\log(1/\delta)+\Ent(I)\bigr]+D\ell(D)\Bigr)$,
 hence Theorem~\ref{thm:universal};
\item the consequence
 $T_A(I)\le C\bigl(\cL(I,\delta)+D\ell(D)\bigr)$ for a universal $C$,
 both with $D\ell(D)$ and with $D\,g(\Delta_{[2]})$ for the truncation
 $g$ of Section~\ref{sec:model}, the two versions being proved
 equivalent up to the factor $7$, together with the two-regime statement of
 Corollary~\ref{bridge:regimes};
\item the results of Appendix~\ref{bridge:section}: equality of the
 instance-wise lower bound over algorithms with arbitrary standard Borel internal randomness, the implementation of history-dependent probability kernels with
 identical success probability, cost, and termination, the completion of
 algorithms with an explicit abandonment action, and the deterministic
 nature of the two upper-bound algorithms.
\end{enumerate}
The literal untruncated expression
$\Delta_{[2]}^{-2}\ln\ln\Delta_{[2]}^{-1}$ is stated separately in the
formal development and is not proved; as explained in
Section~\ref{sec:model}, it cannot hold uniformly near
$\Delta_{[2]}=1$.

\paragraph{Relation to the printed proofs.}
The appendices contain complete proofs that do not depend on the
formal development. The formal proofs establish the statements above,
not the prose of the appendices sentence by sentence, and three
deviations are deliberate.
First, the formal lower bound does not construct a symmetrized algorithm
as in Appendix~\ref{lower:section}; it works directly with
permutation-averaged numbers of samples of individual labels and with the
bijection $\pi\mapsto(a\,d)\circ\pi$ on labelings in the change of distribution that modifies two arm means. Second, Lemma~\ref{sound:domination} is proved in
Appendix~\ref{sound:section} by a pathwise coupling to independent
Poisson clocks; the formal development instead proves the weaker but
sufficient statement that, for every increasing family of subsets of
arms, the set of arms whose empirical means satisfy \eqref{alg:severe-flag} in at least one call is stochastically
dominated by independent Bernoulli$(p)$ indicators, through a
bound on the cumulative predictable hazard and a supermartingale argument. All uses of
the lemma in Appendix~\ref{sound:section} involve only increasing
events. Third, the formally verified algorithms are variants of the algorithms of
Appendices~\ref{pointwise:section} and~\ref{alg:section} with deterministic
sample counts per run:
every run of the single algorithm requests exactly its sample budget
$Q_j$, padding with requests to a fixed arm after a return or an abort,
and the sequence of runs reports an answer at the following block boundary;
the algorithm with a fixed target instance pads each run to a deterministic budget
computed from the hard-coded $H(I)$ and $\Ent(I)$. These variants only
change absolute constants and are covered by the same arguments. The
single algorithm is instantiated with $C_Q=10^{12}$, and the bound on the sum of exponentials \eqref{lower:kraft} is proved with the slightly better constant
$(79/9)\delta$.

\paragraph{Checks performed.}
The development (about $195$ modules and $4800$ declarations for the
main results, and $450$ declarations for Appendix~\ref{bridge:section})
was compiled from source against a pinned Mathlib revision in a fresh
directory. A recursive audit of every declaration found only the axioms
\texttt{propext}, \texttt{Classical.choice}, and \texttt{Quot.sound}, and
no use of \texttt{sorry}. The complete import closure was replayed in a
fresh kernel environment with the reference checker shipped with the
Lean toolchain. The independently transcribed statements were compared
with the proved statements using the Lean FRO \texttt{comparator} tool,
which also checked the permitted axioms and replayed the proofs through
the Lean kernel and through the independent external kernel
\texttt{nanoda}. None of these checks verifies the informal text, and
the development has not been reviewed by a person outside the project.

\bibliographystyle{plainnat}
\bibliography{ref}

@article{aronow2026positive,
  author = {Aronow, P. M. and Kallus, Nathan and Lopatto, Patrick},
  title = {{A positive resolution of the gap-entropy conjecture}},
  journal = {arXiv preprint arXiv:2609.10529v1},
  year = {2026},
  url = {https://arxiv.org/abs/2609.10529v1}
}

@inproceedings{chen2016open,
  author = {Chen, Lijie and Li, Jian},
  title = {{Open Problem: Best Arm Identification: Almost Instance-Wise Optimality and the Gap Entropy Conjecture}},
  booktitle = {29th Annual Conference on Learning Theory},
  series = {Proceedings of Machine Learning Research},
  volume = {49},
  pages = {1643--1646},
  year = {2016},
  publisher = {PMLR},
  url = {https://proceedings.mlr.press/v49/chen16b.html}
}

@inproceedings{chen2017towards,
  author = {Chen, Lijie and Li, Jian and Qiao, Mingda},
  title = {{Towards Instance Optimal Bounds for Best Arm Identification}},
  booktitle = {Proceedings of the 2017 Conference on Learning Theory},
  series = {Proceedings of Machine Learning Research},
  volume = {65},
  pages = {535--592},
  year = {2017},
  publisher = {PMLR},
  url = {https://proceedings.mlr.press/v65/chen17b.html}
}

@inproceedings{karnin2013almost,
  author = {Karnin, Zohar and Koren, Tomer and Somekh, Oren},
  title = {{Almost Optimal Exploration in Multi-Armed Bandits}},
  booktitle = {Proceedings of the 30th International Conference on Machine Learning},
  series = {Proceedings of Machine Learning Research},
  volume = {28(3)},
  pages = {1238--1246},
  year = {2013},
  publisher = {PMLR},
  url = {https://proceedings.mlr.press/v28/karnin13.html}
}

@inproceedings{jamieson2014lil,
  author = {Jamieson, Kevin and Malloy, Matthew and Nowak, Robert and Bubeck, S{\'e}bastien},
  title = {{lil' UCB : An Optimal Exploration Algorithm for Multi-Armed Bandits}},
  booktitle = {Proceedings of The 27th Conference on Learning Theory},
  series = {Proceedings of Machine Learning Research},
  volume = {35},
  pages = {423--439},
  year = {2014},
  publisher = {PMLR},
  url = {https://proceedings.mlr.press/v35/jamieson14.html}
}

@inproceedings{garivier2016optimal,
  author = {Garivier, Aur{\'e}lien and Kaufmann, Emilie},
  title = {{Optimal Best Arm Identification with Fixed Confidence}},
  booktitle = {29th Annual Conference on Learning Theory},
  series = {Proceedings of Machine Learning Research},
  volume = {49},
  pages = {998--1027},
  year = {2016},
  publisher = {PMLR},
  url = {https://proceedings.mlr.press/v49/garivier16a.html}
}

@inproceedings{simchowitz2017simulator,
  author = {Simchowitz, Max and Jamieson, Kevin and Recht, Benjamin},
  title = {{The Simulator: Understanding Adaptive Sampling in the Moderate-Confidence Regime}},
  booktitle = {Proceedings of the 2017 Conference on Learning Theory},
  series = {Proceedings of Machine Learning Research},
  volume = {65},
  pages = {1794--1834},
  year = {2017},
  publisher = {PMLR},
  url = {https://proceedings.mlr.press/v65/simchowitz17a.html}
}

@inproceedings{kaufmann2013information,
  author = {Kaufmann, Emilie and Kalyanakrishnan, Shivaram},
  title = {{Information Complexity in Bandit Subset Selection}},
  booktitle = {Proceedings of the 26th Annual Conference on Learning Theory},
  series = {Proceedings of Machine Learning Research},
  volume = {30},
  pages = {228--251},
  year = {2013},
  publisher = {PMLR},
  url = {https://proceedings.mlr.press/v30/Kaufmann13.html}
}

@inproceedings{kaufmann2014testing,
  author = {Kaufmann, Emilie and Capp{\'e}, Olivier and Garivier, Aur{\'e}lien},
  title = {{On the Complexity of A/B Testing}},
  booktitle = {Proceedings of The 27th Conference on Learning Theory},
  series = {Proceedings of Machine Learning Research},
  volume = {35},
  pages = {461--481},
  year = {2014},
  publisher = {PMLR},
  url = {https://proceedings.mlr.press/v35/kaufmann14.html}
}

@inproceedings{carpentier2016tight,
  author = {Carpentier, Alexandra and Locatelli, Andrea},
  title = {{Tight (Lower) Bounds for the Fixed Budget Best Arm Identification Bandit Problem}},
  booktitle = {29th Annual Conference on Learning Theory},
  series = {Proceedings of Machine Learning Research},
  volume = {49},
  pages = {590--604},
  year = {2016},
  publisher = {PMLR},
  url = {https://proceedings.mlr.press/v49/carpentier16.html}
}

@inproceedings{shang2020fixed,
  author = {Shang, Xuedong and de Heide, Rianne and Menard, Pierre and Kaufmann, Emilie and Valko, Michal},
  title = {{Fixed-confidence guarantees for Bayesian best-arm identification}},
  booktitle = {Proceedings of the Twenty Third International Conference on Artificial Intelligence and Statistics},
  series = {Proceedings of Machine Learning Research},
  volume = {108},
  pages = {1823--1832},
  year = {2020},
  publisher = {PMLR},
  url = {https://proceedings.mlr.press/v108/shang20a.html}
}

@inproceedings{you2023information,
  author = {You, Wei and Qin, Chao and Wang, Zihao and Yang, Shuoguang},
  title = {{Information-Directed Selection for Top-Two Algorithms}},
  booktitle = {Proceedings of Thirty Sixth Conference on Learning Theory},
  series = {Proceedings of Machine Learning Research},
  volume = {195},
  pages = {2850--2851},
  year = {2023},
  publisher = {PMLR},
  url = {https://proceedings.mlr.press/v195/you23a.html}
}

@inproceedings{huang2017structured,
  author = {Huang, Ruitong and Ajallooeian, Mohammad M. and Szepesv{\'a}ri, Csaba and M{\"u}ller, Martin},
  title = {{Structured Best Arm Identification with Fixed Confidence}},
  booktitle = {Proceedings of the 28th International Conference on Algorithmic Learning Theory},
  series = {Proceedings of Machine Learning Research},
  volume = {76},
  pages = {593--616},
  year = {2017},
  publisher = {PMLR},
  url = {https://proceedings.mlr.press/v76/huang17a.html}
}

@inproceedings{tao2018linear,
  author = {Tao, Chao and Blanco, Sa{\'u}l and Zhou, Yuan},
  title = {{Best Arm Identification in Linear Bandits with Linear Dimension Dependency}},
  booktitle = {Proceedings of the 35th International Conference on Machine Learning},
  series = {Proceedings of Machine Learning Research},
  volume = {80},
  pages = {4877--4886},
  year = {2018},
  publisher = {PMLR},
  url = {https://proceedings.mlr.press/v80/tao18a.html}
}

@inproceedings{russo2016simple,
  author = {Russo, Daniel},
  title = {{Simple Bayesian Algorithms for Best Arm Identification}},
  booktitle = {29th Annual Conference on Learning Theory},
  series = {Proceedings of Machine Learning Research},
  volume = {49},
  pages = {1417--1418},
  year = {2016},
  publisher = {PMLR},
  url = {https://proceedings.mlr.press/v49/russo16.html}
}

@inproceedings{gabillon2012unified,
  author = {Gabillon, Victor and Ghavamzadeh, Mohammad and Lazaric, Alessandro},
  title = {{Best Arm Identification: A Unified Approach to Fixed Budget and Fixed Confidence}},
  booktitle = {Advances in Neural Information Processing Systems},
  volume = {25},
  year = {2012},
  publisher = {Curran Associates, Inc.},
  url = {https://proceedings.neurips.cc/paper_files/paper/2012/hash/8b0d268963dd0cfb808aac48a549829f-Abstract.html}
}

@inproceedings{degenne2019games,
  author = {Degenne, R\'{e}my and Koolen, Wouter M and M\'{e}nard, Pierre},
  title = {{Non-Asymptotic Pure Exploration by Solving Games}},
  booktitle = {Advances in Neural Information Processing Systems},
  volume = {32},
  year = {2019},
  publisher = {Curran Associates, Inc.},
  url = {https://proceedings.neurips.cc/paper_files/paper/2019/hash/8d1de7457fa769ece8d93a13a59c8552-Abstract.html}
}

@inproceedings{jourdan2022top,
  author = {Jourdan, Marc and Degenne, R\'{e}my and Baudry, Dorian and de Heide, Rianne and Kaufmann, Emilie},
  title = {{Top Two Algorithms Revisited}},
  booktitle = {Advances in Neural Information Processing Systems},
  volume = {35},
  pages = {26791--26803},
  year = {2022},
  publisher = {Curran Associates, Inc.},
  doi = {10.52202/068431-1943},
  url = {https://proceedings.neurips.cc/paper_files/paper/2022/hash/ab5f5f22e3e09f4424592ffb06840ab0-Abstract-Conference.html}
}

@inproceedings{jourdan2023nonasymptotic,
  author = {Jourdan, Marc and Degenne, R\'{e}my},
  title = {{Non-Asymptotic Analysis of a UCB-based Top Two Algorithm}},
  booktitle = {Advances in Neural Information Processing Systems},
  volume = {36},
  pages = {68980--69020},
  year = {2023},
  publisher = {Curran Associates, Inc.},
  doi = {10.52202/075280-3019},
  url = {https://proceedings.neurips.cc/paper_files/paper/2023/hash/d9b564716709357b4bccec9fc9ad04d2-Abstract-Conference.html}
}

@inproceedings{chen2014combinatorial,
  author = {Chen, Shouyuan and Lin, Tian and King, Irwin and Lyu, Michael R. and Chen, Wei},
  title = {{Combinatorial Pure Exploration of Multi-Armed Bandits}},
  booktitle = {Advances in Neural Information Processing Systems},
  volume = {27},
  year = {2014},
  publisher = {Curran Associates, Inc.},
  url = {https://proceedings.neurips.cc/paper_files/paper/2014/hash/d3ea0f3316d2da934d79b8b344eafee4-Abstract.html}
}

@inproceedings{soare2014linear,
  author = {Soare, Marta and Lazaric, Alessandro and Munos, R\'{e}mi},
  title = {{Best-Arm Identification in Linear Bandits}},
  booktitle = {Advances in Neural Information Processing Systems},
  volume = {27},
  year = {2014},
  publisher = {Curran Associates, Inc.},
  url = {https://proceedings.neurips.cc/paper_files/paper/2014/hash/f8d84caae48546d0934d637ab54f7086-Abstract.html}
}

@inproceedings{jedra2020optimal,
  author = {Jedra, Yassir and Proutiere, Alexandre},
  title = {{Optimal Best-arm Identification in Linear Bandits}},
  booktitle = {Advances in Neural Information Processing Systems},
  volume = {33},
  pages = {10007--10017},
  year = {2020},
  publisher = {Curran Associates, Inc.},
  url = {https://proceedings.neurips.cc/paper_files/paper/2020/hash/7212a6567c8a6c513f33b858d868ff80-Abstract.html}
}

@inproceedings{degenne2019multiple,
  author = {Degenne, R\'{e}my and Koolen, Wouter},
  title = {{Pure Exploration with Multiple Correct Answers}},
  booktitle = {Advances in Neural Information Processing Systems},
  volume = {32},
  year = {2019},
  publisher = {Curran Associates, Inc.},
  url = {https://proceedings.neurips.cc/paper_files/paper/2019/hash/60cb558c40e4f18479664069d9642d5a-Abstract.html}
}

@article{even2006action,
  author = {Eyal Even-Dar and Shie Mannor and Yishay Mansour},
  title = {{Action Elimination and Stopping Conditions for the Multi-Armed Bandit and Reinforcement Learning Problems}},
  journal = {Journal of Machine Learning Research},
  volume = {7},
  number = {39},
  pages = {1079--1105},
  year = {2006},
  url = {http://jmlr.org/papers/v7/evendar06a.html}
}

@article{kaufmann2016complexity,
  author = {Emilie Kaufmann and Olivier Capp{{\'e}} and Aur{{\'e}}lien Garivier},
  title = {{On the Complexity of Best-Arm Identification in Multi-Armed Bandit Models}},
  journal = {Journal of Machine Learning Research},
  volume = {17},
  number = {1},
  pages = {1--42},
  year = {2016},
  url = {http://jmlr.org/papers/v17/kaufman16a.html}
}

@article{kaufmann2021mixture,
  author = {Emilie Kaufmann and Wouter M. Koolen},
  title = {{Mixture Martingales Revisited with Applications to Sequential Tests and Confidence Intervals}},
  journal = {Journal of Machine Learning Research},
  volume = {22},
  number = {246},
  pages = {1--44},
  year = {2021},
  url = {http://jmlr.org/papers/v22/18-798.html}
}

@inproceedings{demoura2021lean,
  author = {de Moura, Leonardo and Ullrich, Sebastian},
  title = {{The Lean 4 Theorem Prover and Programming Language}},
  booktitle = {Automated Deduction -- CADE 28: 28th International Conference on Automated Deduction, Virtual Event, July 12--15, 2021, Proceedings},
  pages = {625--635},
  year = {2021},
  publisher = {Springer-Verlag},
  doi = {10.1007/978-3-030-79876-5_37},
  url = {https://doi.org/10.1007/978-3-030-79876-5_37}
}

@article{mannor2004sample,
  author = {Shie Mannor and John N. Tsitsiklis},
  title = {{The Sample Complexity of Exploration in the Multi-Armed Bandit Problem}},
  journal = {Journal of Machine Learning Research},
  volume = {5},
  pages = {623--648},
  year = {2004},
  url = {https://www.jmlr.org/papers/v5/mannor04b.html}
}

@article{farrell1964,
  author = {R. H. Farrell},
  title = {{Asymptotic Behavior of Expected Sample Size in Certain One Sided Tests}},
  journal = {The Annals of Mathematical Statistics},
  volume = {35},
  number = {1},
  pages = {36--72},
  year = {1964},
  doi = {10.1214/aoms/1177703731},
  url = {https://doi.org/10.1214/aoms/1177703731}
}

@inproceedings{even2002pac,
  author = {Eyal Even-Dar and Shie Mannor and Yishay Mansour},
  title = {{{PAC} Bounds for Multi-armed Bandit and {Markov} Decision Processes}},
  booktitle = {Computational Learning Theory},
  pages = {255--270},
  year = {2002},
  publisher = {Springer},
  doi = {10.1007/3-540-45435-7_18},
  url = {https://doi.org/10.1007/3-540-45435-7_18}
}

@inproceedings{mathlib2020library,
  author = {{The mathlib community}},
  title = {{The {L}ean mathematical library}},
  booktitle = {Proceedings of the 9th {ACM} {SIGPLAN} International Conference on Certified Programs and Proofs, {CPP} 2020, New Orleans, LA, USA, January 20-21, 2020},
  pages = {367--381},
  year = {2020},
  doi = {10.1145/3372885.3373824},
  url = {https://doi.org/10.1145/3372885.3373824}
}

@article{howard2021sequences,
  author = {Steven R. Howard and Aaditya Ramdas and Jon McAuliffe and Jasjeet Sekhon},
  title = {{Time-uniform, nonparametric, nonasymptotic confidence sequences}},
  journal = {The Annals of Statistics},
  volume = {49},
  number = {2},
  pages = {1055--1080},
  year = {2021},
  doi = {10.1214/20-aos1991},
  url = {https://doi.org/10.1214/20-aos1991}
}

@inproceedings{audibert2010best,
  author = {Audibert, Jean-Yves and Bubeck, S{\'e}bastien and Munos, R{\'e}mi},
  title = {{Best Arm Identification in Multi-Armed Bandits}},
  booktitle = {Proceedings of the Twenty-third Conference on Learning Theory (COLT 2010)},
  pages = {41--53},
  year = {2010},
  publisher = {Omnipress},
  url = {https://www.learningtheory.org/colt2010/papers/59Audibert.pdf}
}

@article{chen2015optimal,
  author = {Chen, Lijie and Li, Jian},
  title = {{On the Optimal Sample Complexity for Best Arm Identification}},
  journal = {arXiv preprint arXiv:1511.03774},
  year = {2015},
  url = {https://arxiv.org/abs/1511.03774}
}

\end{document}